\documentclass[10pt]{article}

\usepackage{amsthm}
\usepackage[cmex10]{amsmath}
\usepackage{amssymb}
\usepackage{dblfloatfix}
\usepackage{lineno,hyperref}
\usepackage[colorinlistoftodos]{todonotes}
\usepackage{stackrel}
\usepackage{pdfpages}
\usepackage{graphics}
\usepackage{epsfig}
\usepackage{times}
\usepackage{amsmath}
\usepackage{amssymb}
\usepackage{hyperref}
\usepackage{color}
\usepackage{arydshln}
\usepackage{arydshln}
\usepackage{gensymb}

\newcommand{\cF}{{\cal F}}
\newcommand{\bR}{{\bf R}}

\newcommand{\bC}{{\bf C}}
\newcommand{\bZ}{{\bf Z}}
\newcommand{\bX}{{\bf X}}

\newcommand{\cT}{{\cal T}}
\newcommand{\cP}{{\cal P}}

\newcommand{\bx}{{\bf x}}
\newcommand{\by}{{\bf y}}

\newcommand{\ba}{{\bf a}}

\newcommand{\cS}{{\cal S}}
\newcommand{\cD}{{\cal D}}

\newcommand{\bitem}{\begin{itemize}}
\newcommand{\eitem}{\end{itemize}}
\newcommand{\goto}{\rightarrow}
\newcommand{\beq}{\begin{equation}}
\newcommand{\eeq}{\end{equation}}

\newcommand{\bi}{ {\bf i} }

\newtheorem{thm}{Theorem}[section]
\newtheorem{cor}[thm]{Corollary}
\newtheorem{lem}[thm]{Lemma}

\newtheorem{dfn}[subsection]{Definition}

\DeclareMathOperator*{\argmin}{arg\,min}

\topmargin      0.0in
\headheight     0.0in
\headsep        0.0in
\oddsidemargin  0.0in
\evensidemargin 0.0in
\textheight     9.0in
\textwidth      6.5in

\newcounter{comment}

\usepackage{color}
\usepackage[normalem]{ulem}

\newcommand{\epreg}{\eps^*_{\mbox{\small bd}}}





\newcommand{\eps}{\epsilon}
\newcommand{\easy}{\eps^*_{\small\mbox{asy}}}
\newcommand{\emb}{\eps^*_{\small\mbox{mb}}}
\newcommand{\esb}{\eps^*_{\small\mbox{sb}}}

\newcommand{\cFu}{{\cal F}_{\aus}}
\newcommand{\cK}{{\cal K}}

\begin{document}

\pagenumbering{arabic}

\title{Sparsity/Undersampling 
Tradeoffs in Anisotropic Undersampling, 
with Applications in MR Imaging/Spectroscopy}

\author{Hatef Monajemi$^{*}$,
David L. Donoho\thanks{Department of Statistics, Stanford University, Stanford, CA}
}

\date{March 16, 2018}
\maketitle


\begin{abstract}

\noindent 
We study {\it anisotropic} {undersampling} schemes like those used
in multi-dimensional NMR spectroscopy and MR imaging, 
which sample exhaustively in certain time dimensions and randomly in others. 

Our analysis shows that anisotropic undersampling schemes are equivalent 
to certain block-diagonal measurement systems.  
We develop novel exact formulas for the sparsity/undersampling tradeoffs
in such measurement systems, assuming uniform sparsity fractions in each column.
Our formulas predict  finite-$N$ phase transition behavior 
differing substantially from the well-known asymptotic phase transitions for classical Gaussian undersampling.
Extensive empirical work shows that our formulas accurately describe observed finite-$N$ behavior,
while the usual formulas based on universality are substantially inaccurate at the moderate $N$ involved
in realistic applications.

We also vary the anisotropy, keeping the total number of samples fixed, and for each variation 
we determine the precise sparsity/undersampling tradeoff (phase transition). 
We show that, other things being equal, 
the ability to recover a sparse object decreases with 
an increasing number of exhaustively sampled dimensions. 

\end{abstract}

{\bf keywords:} Sparse Recovery, Compressed Sensing, Block Diagonal Measurement Matrix.\footnote{It is a pleasure to acknowledge discussions with the pioneers of anisotropic undersampling: Jeffrey Hoch and Adam Schuyler (U. Conn. Health Sciences), Michael Lustig (UC Berkeley), John Pauly (Stanford). This research was
partially supported by NSF-DMS 1418362 and NSF-DMS 1407813. We would also like to thank the Stanford Research Computing Center for providing computational resources and support that were essential to these research results. Thanks also to the anonymous referees for many thoughtful comments.} 

\section{Introduction}

\subsection{Background}
In Compressed Sensing (CS), one wishes to reconstruct an 
$N$-dimensional discrete signal $x_0$
using $n < N$  measurements.  Theory shows that
if $x_0$ is sufficiently sparse, and the $n \times N$ sensing matrix
$W$ is an \emph{i.i.d} Gaussian random matrix, then $x_0$ can be reconstructed accurately and reliably
from measurements $y= W x_0$ using convex optimization; 
see many papers and books, such as
\cite{Donoho1,CandesTao,DonohoHuo,CSMRI,DoTa10,Calder08,Tropp07,Applebaum}. 

In general, at a given fixed level of undersampling, 
the chance of successful recovery depends on the
sparsity of the underlying object,  in an almost-binary fashion.
Namely, suppose that the object $x_0$ is $k$-sparse---has
at most $k$ nonzero entries---and consider the situation
where $k \sim \epsilon N $ and $n \sim \delta N $.
Then, as a function of undersampling fraction $\delta = n/N$, 
there is, asymptotically for large $N$ 
a definite interval for the sparsity fraction $\epsilon = k/N$ that
permits successful recovery, while outside this range,
recovery is unsuccessful.

Figure \ref{figure-phase-transition} depicts the situation 
for Gaussian measurement matrices $W$.
It shows a so-called phase diagram $(\epsilon,\delta) \in (0,1)^2$
and a curve $\easy(\delta)$ 
separating a `success' phase from a `failure' phase. Namely,
if $\epsilon < \easy(\delta)$, 
then, with overwhelming probability for large $N$,
convex optimization will recover $x_0$ exactly; while on the other hand,
if $\epsilon > \easy(\delta)$, then, with overwhelming probability, 
convex optimization will fail. 

\begin{figure}[h]
\centering
\includegraphics[height=3in]{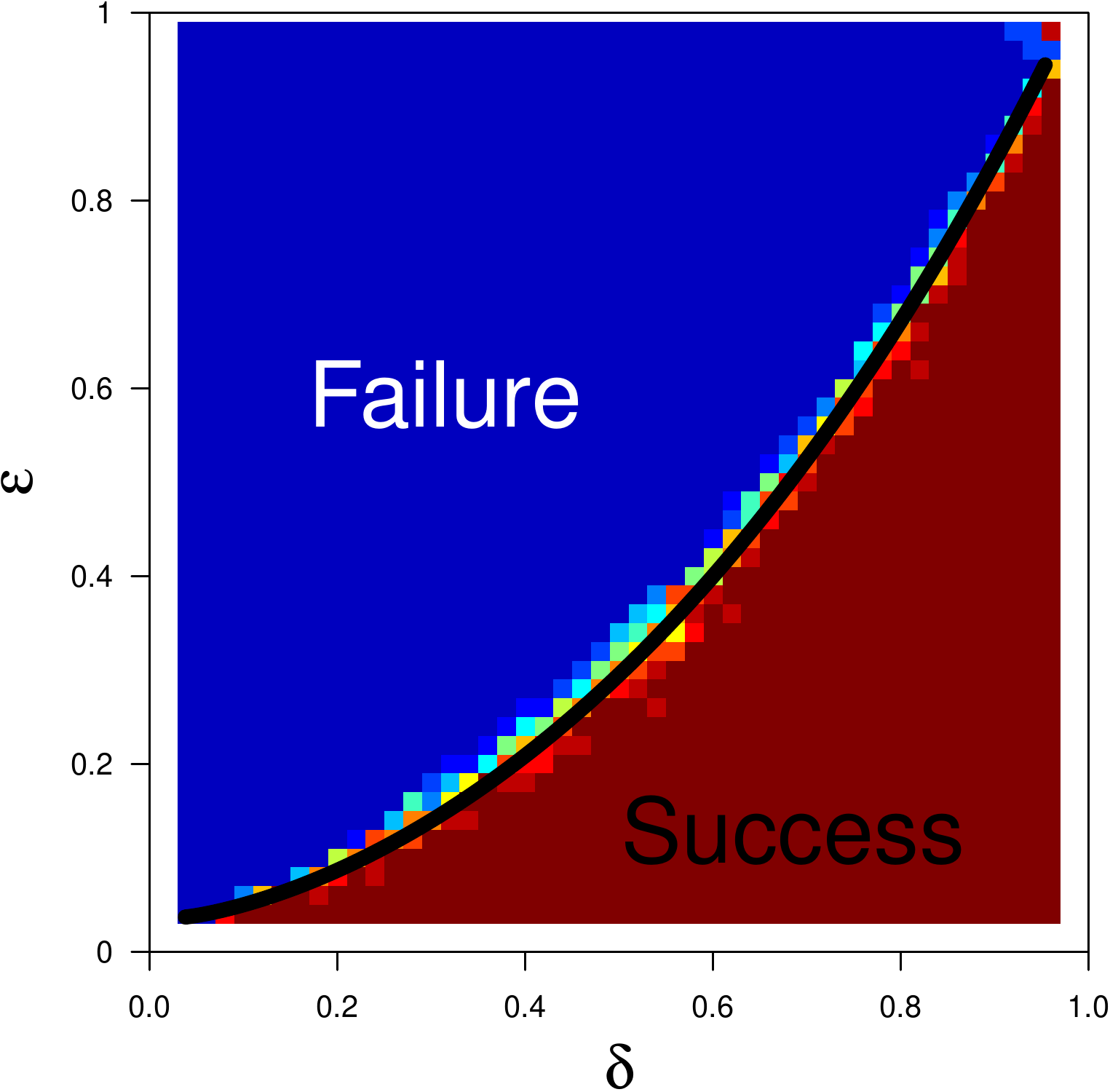}
\caption{Success and failure regions for `classical' compressed sensing with Gaussian measurement matrices; assumes object to recover is sparse and nonnegative. The asymptotic phase transition curve $\easy(\delta)$ (solid black line) separates the two regions. Shaded attribute gives fraction of successful reconstructions. Red, 100\%; blue, 0\%. In this experiment, $n=250$. 
}
\label{figure-phase-transition}
\end{figure}


Exact expressions for the boundary $\easy(\delta)$ separating 
success from failure
were derived in \cite{Donoho04,DoTa05} assuming the measurement matrix is Gaussian  and the problem
size $N$ is large.  In \cite{DoTa08,DoTa09a} those same expressions were 
experimentally observed to also describe accurately many 
non-Gaussian random measurement 
schemes. Thorough mathematical analysis now fully
supports all these findings across very large classes of random matrices \cite{Amelunxen14,Oymak15}.

The same boundary $\easy(\delta)$ even
applies to an important class of non-random measurement 
matrices, see \cite{MoJaGaDo2013}, 
so the `universality' of the compressed sensing phase transition is quite broad.

\newcommand{\aus}{\mbox{\sc aus}}
\newcommand{\ius}{\mbox{\sc ius}}
\newcommand{\us}{\mbox{\sc us}}
\newcommand{\fs}{\mbox{\sc fs}}
\newcommand{\xus}{\hat{x}_{\us}}
\newcommand{\xfs}{\hat{x}_{\fs}}
\newcommand{\xaus}{\hat{x}_{\aus}}
\newcommand{\xius}{\hat{x}_{\ius}}

As an important example, consider Fourier undersampling
in a stylized model of $2$-dimensional imaging. 
The underlying object is a two-dimensional array $x_0 = (x_0({t_0,t_1}),\ 0 \leq t_i < M)$,
whose two-dimensional discrete Fourier transform $\hat{x}_0 = \cF_{2}(x_0)$
is also an $M$ by $M$ array $(\hat{x}_0({k_0,k_1}),\ 0 \leq k_i < M)$.
The traditional experiment gathers the {\it fully-sampled} array  
$\hat{x} = (\hat{x}_0({k_0,k_1}),\ 0 \leq k_i < M)$, 
systematically evaluating the 2D Fourier transform
at each distinct 2D frequency index $(k_0,k_1)$ in the range $0 \leq k_i < M$.   

A {\it randomly-undersampled $k$-space experiment} first 
selects $n$ distinct pairs $(k_{0,i},k_{1,i})$ 
uniformly at random from among all such pairs, and then evaluates 
the Fourier transform just at those $n < N=M^2$ points. Letting
$\cK_2 = \{ (k_{0,i},k_{1,i}), i=1,\dots, n\}$ denote the list of 
sampled $k$-space pairs,
the undersampled Fourier transform operator 
$\cF_{\us} \equiv \cF_{\us}(\cdot; \cK_2)$
produces as output $\cF_{\us}(x_0) = (\hat{x}_0({k_{0,i},k_{1,i}}),\ i=1,\dots, n)$.
Mimicking the Gaussian measurements case,
one attempts to reconstruct by $\ell_1$ minimization:
\[
  (P_1^{\us}) \qquad \arg\min_{x} \| x \|_1 \quad \mbox{subject to} \quad 
  \cF_{\us}(x) = \cF_{\us}(x_0).
\]
Depending on the details of the sampling schedule $(k_{0,i},k_{1,i})$
and the sparsity level in $x_0$, this strategy might be successful or unsuccessful.
The random undersampling 
situation has been studied  carefully empirically, and a 
phase transition from success to failure for this sampling scheme
has been observed  \cite{DoTa09,MoJaGaDo2013}, and shown
to agree with the phase transition
curve for Gaussian measurements. So we observe
another instance of the `universality' of Figure \ref{figure-phase-transition} .
However, clearly not {\it every} measurement scheme 
can behave equivalently to Gaussian measurements,

\begin{figure}[t]
\centering
\includegraphics[width=3in]{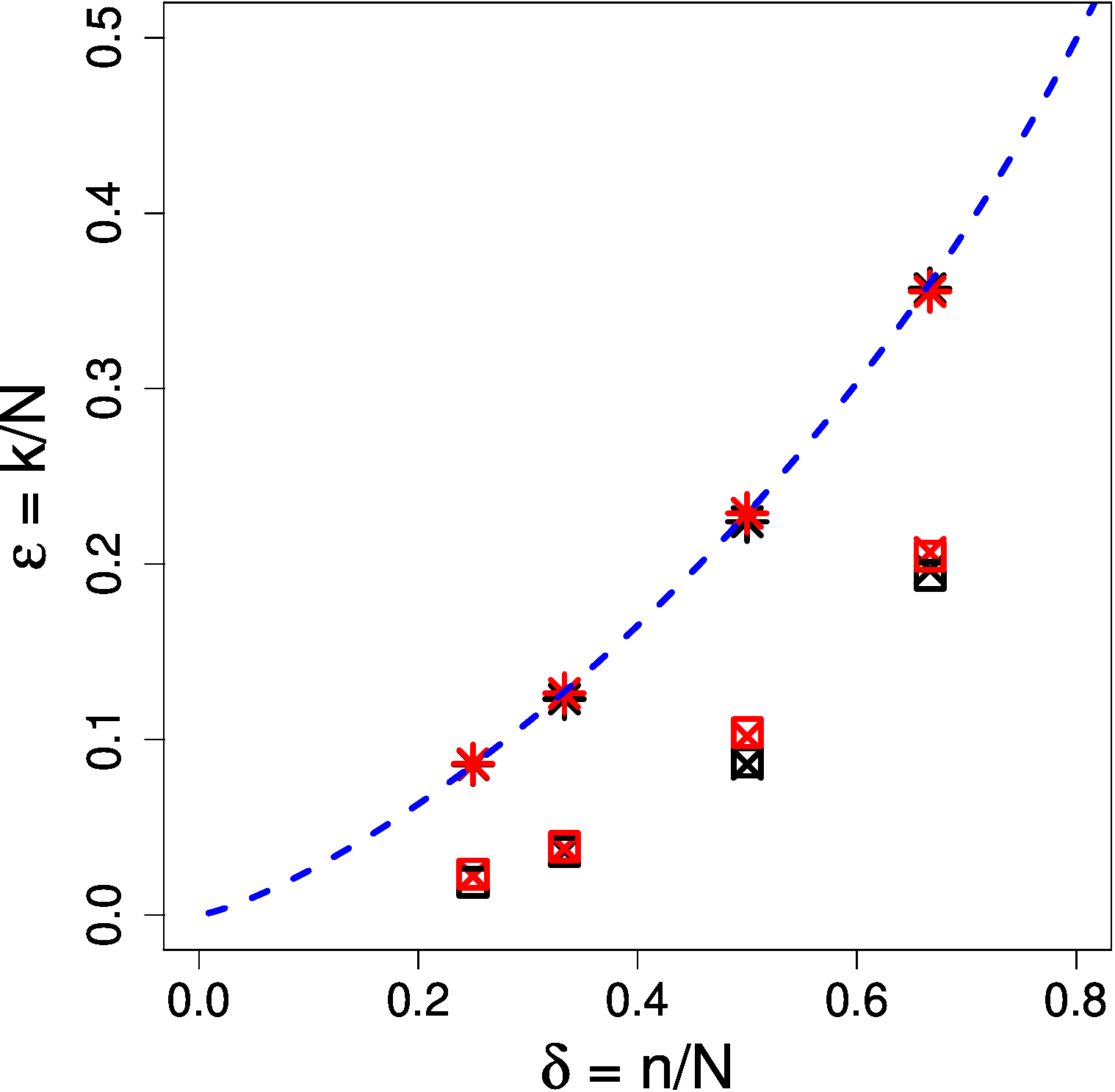}
\caption{Observed Finite-$N$ phase transitions
for anisotropic undersampling and for 
block diagonal measurements studied in this paper. 
The $\times$ symbol indicates finite-$N$ phase transition 
for anisotropic undersampling experiments using partial 2D FT 
where a specific fraction $\delta$ 
of rows are selected uniformly at random,  
and then each selected row is sampled exhaustively. 
The $*$ symbol indicates isotropic undersampling experiments 
using a partial 2D FT 
where a certain fraction of $k$-space samples are selected uniformly at random.
The $\Box$ symbol indicates experimental data from a block 
diagonal measurement matrix with a single repeated Gaussian random matrix block. 
The colors indicate  different problem sizes. Black stands for a $24 \times 24$ grid and {red} for  
a $48\times48$ grid. The dashed blue line gives the asymptotic phase transition location 
for complex-valued Gaussian measurement ensembles. The isotropic undersampling
data lie close to the dashed blue line, while the anisotropic undersampling
data are substantially displaced.}
\label{figure-2d-fft-vs-tiuse}
\end{figure}

\begin{figure}[t]
\centering
\includegraphics[width=6in]{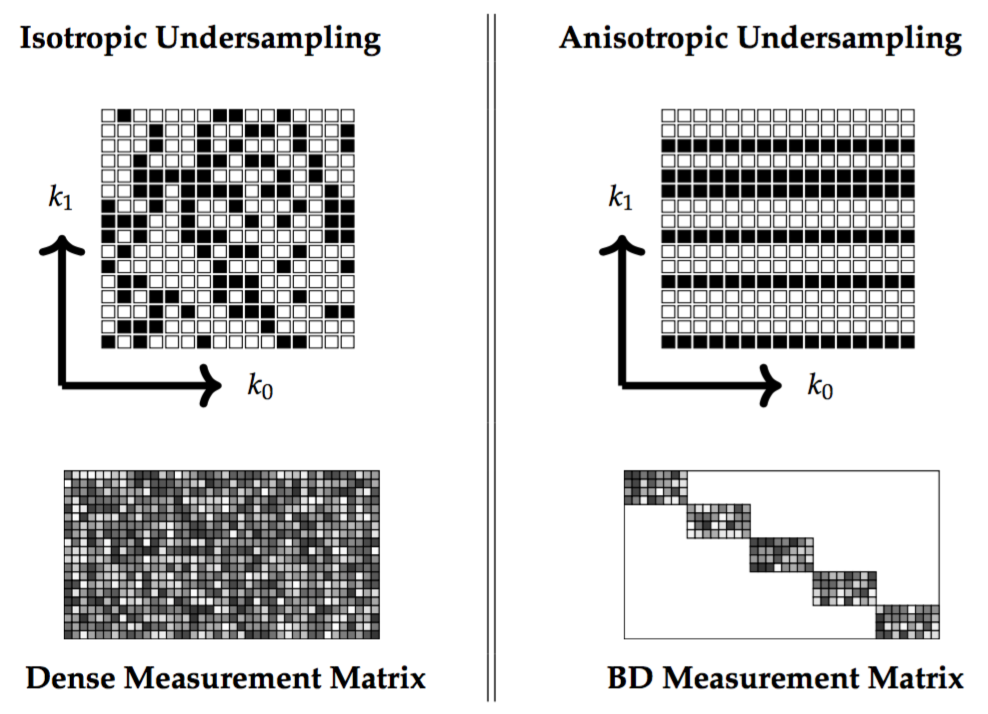}
\caption{Emblems of isotropic (upper left) and anisotropic (upper right) $k$-space 
undersampling discussed in this paper. 
Traditional compressed sensing literature studies isotropic $k$-space undersampling, 
whereas anisotropic undersampling is similar to practical schemes 
commonly used in MR imaging/spectroscopy.
Lower left panel depicts a dense measurement matrix 
of the type produced by isotropic $k$-space undersampling,
while lower right panel depicts a block diagonal measurement matrix associated 
to anisotropic $k$-space undersampling. 
The $k$-space sampling plans of the top row
are equivalent under an appropriate isometry to
undersampling matrices depicted in  the bottom row.
The figure serves merely as a `cartoon illustration' of the 
concept of equivalence established in this paper 
and so the size and number of blocks should not be taken literally.}
\label{figure-equivalance}
\end{figure}

\subsection{Anisotropic Undersampling}

This paper studies an important class of anisotropic undersampling schemes 
that exhibit novel theoretical behavior and arise naturally in 
MR imaging and NMR spectroscopy.


Let's give a concrete example of anisotropic undersampling.
As earlier, the underlying object is a 2D $M \times M$ array $x_0$
with $\hat{x}_0$ a full $M \times M$ array  
of potential Fourier measurements.
We undersample anisotropically by randomly selecting
$m < M$ {rows} 
of the array,
and then sampling {\it everything} within each selected {row},
producing $n = m \cdot M$ samples overall.
We implement this  concretely by sampling  uniformly at random
$m$ distinct integers $k_{1,i}$ from the set 
$0 \leq  k_1 < M$, forming a list $\cK_1$ 
of $m$ {row} indices (See Figure \ref{figure-equivalance}). The operator 
$\cF_{\aus} = \cF_{\aus}(\cdot; \cK_1)$
yields the partial measurements {
$\cF_{\aus}(x_0)  \equiv (\hat{x}_0(k_{0},k_{1,i}),\ i=1,\dots, m,\ 0 \leq  k_0 < M)$.} 
No other samples are collected. 
The subscript $\aus$ reminds us of the \underline{a}nisotropic 
\underline{u}nder\underline{s}ampling\footnote{
Under the notation we are using, anisotropic undersampling
could also be represented using the general undersampling operator \--
$\cF_{\aus}(\cdot ; \cK_1)  \equiv \cF_{\us}(\cdot ; \{0,\dots,M-1\} \times \cK_1)$ \-- however it simplifies discussion to have a dedicated notation.
}.
Again assuming sparsity of the object to be recovered,
attempt to reconstruct  using  $\ell_1$ minimization:
\[
  (P_1^{\aus}) \qquad \argmin_{x} \| x \|_1 \quad \mbox{subject to} \quad  \cF_{\aus}(x) = \cF_{\aus}(x_0).
\]
Once again, one can observe experimentally that
the sparsity level determines success or failure.

Figure \ref{figure-2d-fft-vs-tiuse} shows results from an 
empirical study of the sparsity-undersampling tradeoff
for anisotropic undersampling.
It displays the { location of the } 
empirical phase transition from
success to failure for our reconstruction from anisotropic undersampling, 
as a function of underlying sparsity { fraction $\eps$}; 
when the object's { sparsity fraction $\eps=k/N$ falls below} 
the depicted transition point, success is the predicted outcome, 
whereas when the object { sparsity fraction $\eps$ exceeds that level,}
we predict failure. 

Let's call the earlier random-$k$-space
undersampling scheme {\it isotropic}.
The terminology reminds us
that for isotropic sampling, $k_0$ and $k_1$ 
are scattered randomly with no directional preference, while for  
anisotropic sampling, { the sampling scheme is 
exhaustive in the $k_0$ coordinate and random in $k_1$ as depicted schematically in Figure \ref{figure-equivalance}. Empirical phase transitions
for the isotropic sampling scheme are also shown in Figure \ref{figure-2d-fft-vs-tiuse}.} 

The striking comparison is that, while there are definite phase transitions in each case,
the anisotropic ones don't occur at the same place as the isotropic
ones; instead the transitions for the anisotropic sampling 
are shifted downwards substantially from the phase transitions
for the isotropic scheme.  
Formulas for the precise amount of shift 
are presented below.\footnote{It will still be the case that at sufficiently large $N$,
the shift goes away; however, it will become clear that the required
$N$ are unreasonably large, so such schemes in practice will always exhibit
a noticeable shift, by an amount we here quantify  precisely.}

Figure \ref{figure-2d-fft-vs-tiuse} also shows a curve giving 
the  location of the phase transition in the Gaussian case with comparable
sparsity and $n$ and $N$. This curve goes quite near the
empirical phase transitions for isotropic sampling,
confirming the results of \cite{DoTa09,MoJaGaDo2013}.
Consequently, we can also say that the anisotropic undersampling
results differ substantially from the Gaussian undersampling results.

\subsection{Block Diagonal Undersampling}

Figure \ref{figure-2d-fft-vs-tiuse}
actually displays empirical results for {\it three} seemingly 
very different situations. The first two were mentioned:
isotropic and anisotropic undersampling in 2D Fourier imaging, respectively.

The third situation is seemingly unrelated to 2D Fourier imaging: 
{\it block-diagonal Gaussian undersampling}. In that setting,
the object is a 1D vector of length $M^2$, partitioned into
$M$ blocks of size $M$ each.  For a given undersampling parameter 
$m  < M$,
the measurement matrix $A$ is
$n = m \cdot M $ by $N = M \cdot M$, and has 
a block-diagonal form made of $M$ blocks, each of size $m$ by $M$.
The off-diagonal blocks are all zero, and the diagonal blocks 
are random, filled with  i.i.d.\ Gaussian entries. The object to
recover $\bx_0$ is a sparse vector of length $N=M^2$. 
The measurements are $\by= A \bx_0$.

Figure \ref{figure-2d-fft-vs-tiuse}  shows the
the finite-$N$ phase transitions of block-diagonal undersampling
schemes, but they are hard to discern; 
the locations are visually quite close
to those of anisotropic undersampling, given the same values for the underlying 
undersampling fraction $\delta=n/N$ and sparsity $k/N$.



\subsection{This paper's contribution}

This paper shows that the observed equality of phase transition 
between anisotropic undersampling and block-diagonal undersampling
is no coincidence. { It demonstrates 
the formal theoretical equivalence of
anisotropic  undersampling  with appropriate block-diagonal undersampling. 
It then exploits this equivalence, by  deriving precise formulas for the 
finite-$N$ phase transitions of sparse reconstruction 
from block-diagonal undersampling. These formulas are rigorously proven in one class of situations,
but the formulas arising from our study are shown empirically to accurately predict phase transitions
observed in anisotropic undersampling 
in Figure \ref{figure-2d-fft-vs-tiuse}
and several other situations.
}  

We have so far presented just the example
of anisotropic undersampling in 2D Fourier imaging, but this is only a special case of our general
results, which apply to fully  general anisotropic undersampling of $d$-dimensional Fourier imaging,
in which some dimensions are sampled uniformly at random and others are sampled
exhaustively. In dimension $d=2$ there is only one type of anisotropic sampling;
but in higher dimensions one can have $d_r$ dimensions sampled randomly and $d_e$ dimensions
sampled exhaustively, covering  all $d = d_r+d_e$ dimensions. The case described so far is simply 
$d_r=1$, $d_e=1$, $d=2$; but our results and methods are far more general.

An important conclusion from our study --
see Corollary \ref{Lemma-PT-MultiBlock-General} -- 
will be that, for a given number $n$ of observations,
the best sparsity-undersampling tradeoffs are obtained when $d_e$ is as small as possible
and $d_r$ is as large as possible, in a way that we can quantify precisely.
While that would seem to suggest always using $d_e=0$ and hence using isotropic undersampling,
in certain applications, randomness can only be implemented in a subset of the dimensions. 

\subsection{Application Areas}

Our results have stylized applications to two important practical fields: 
MR imaging and NMR spectroscopy. In either setting,
 the experiment produces a sequence of {\it free induction decays} (FIDs);  these are  individual time series  output
 by radio-frequency  receivers. They are variously called  repetitions, interleaves, or  phase-encodes in MRI.  In such a sequence the acquired data may be labeled  $(k_0; k_1,\dots,k_{d-1})$,
 with $k_0$ indexing the time samples of the FID and $(k_1,\dots,k_{d-1})$ 
 indexing the FID itself. Under complete acquisition,
 we would acquire a complete collection of FID's, 
 and thereby obtain a complete Cartesian sampling
 spanning $0 \leq k_i <  M$, $i=0,\dots , d-1$, 
 while under anisotropic undersampling,
 we would acquire only a subset of FID's.  In more detail:
\bitem
\item {\it Multi-dimensional MR imaging.} 
Ordinary MR imaging, producing a single $2D$ image,
 is effectively a case of $d=2$-dimensional Fourier imaging. 
 Higher-dimensional MR imaging can be 
 either $3D$ $(x,y,z)$ or
 dynamic $(t,x,y)$ or $3D$ dynamic  $(t,x,y,z)$. 
 
 In MR imaging experiments, each FID 
 $( \hat{x}({k_0,\dots,k_{d-1}}),\ 0 \leq k_0 < M)$ { can be viewed as
 the sequence of samples of the traditional 
 complex-valued $d$-dimensional Fourier transform along an 
 axis-oriented line $ u \mapsto (u,k_1,\dots, k_{d-1})$  for $u=0 , 1, \dots M-1$ }
 in the $d$-dimensional data hypercube.  
 
Anisotropic undersampling has been used in 
some way in MRI for many years 
in some cases for the purpose of accelerating image acquisitions \cite{SparseMRI,mcgibney1993quantitative,marseille1996nonuniform,sodickson1997simultaneous,pruessmann1999sense,madore1999unaliasing,greiser2003efficient}.  It has been called `random undersampling in the indirect dimensions’ or { `random sampling in the phase-encodes'}; see the article by Michael Lustig et al.\ \cite{SparseMRI}.

 
 \item {\it Multi-dimensional NMR spectroscopy.}
NMR spectroscopy experiments are more abstract and flexible 
than MR imaging, and can in principle  be designed to encompass arbitrary-dimensional experiments; 
however, high-dimensional experiments
take longer than low-dimensional ones, and practical limitations  can intervene: these include
denaturing of the sample material, and lack of exclusive
access to a spectrometer for the days or weeks that might be required.  
In practice, experiments at  higher dimensions
than $3$ are rarely attempted. 

The FID in a spectroscopy experiment also can be viewed as the sequence of 
samples along a line in a  $d$-dimensional data hypercube of `Fourier' coefficients.
However, the notion of Fourier transform differs in spectroscopy 
because each coefficient is hypercomplex-valued, 
so each sampled value $\hat{x}(k_0,\dots,k_{d-1})$ is  $2^d$-dimensional;
for an explanation of this point see \cite{Monajemi2016ACHA,Monajemi_thesis_2016}.

In NMR spectroscopy, anisotropic undersampling has been applied for
decades by Jeffrey Hoch and collaborators; see \cite{schmieder93,schmieder1994improved,mobli2012sparse,hoch2014nonuniform}.
\eitem

Mathematical scientists who study compressed sensing often mention MR imaging or NMR spectroscopy as
applied settings where undersampled Fourier imaging is indeed applied successfully today; they rarely if ever mention
that in either applied setting, the sampling is {\it always} anisotropic; it {\it never} makes sense to sample
isotropically, because one {\it always} gets exhaustive samples along { one of the dimensions (a.k.a direct dimension)} inherently as part of 
the physical experiment; it makes no sense to throw away measurements  that were already mandatorily taken.
Mathematical scientists often speak as if isotropic undersampling were an option in these applied settings,
and reference theories involving isotropic undersampling. However, in either setting, isotropic sampling is not a sensible option,
and  the referenced theories do not offer accurate predictions of what happens in real experiments.

In contrast, our results describe  anisotropic sampling of the type actually used in these applied fields
and give accurate predictions of 
the sparsity/undersampling relation
in undersampled imaging/spectroscopy.

\subsection{Relation to Previous Work on Block Diagonal Undersampling}
This paper considers precise finite-N phase transition properties
of a { setting seemingly unrelated to block-diagonal undersampling:} 
anisotropic undersampling in $d$-dimensional (hypercomplex-) Fourier imaging. 
It identifies block-diagonal measurement systems as an analysis tool that allows
us to make accurate predictions of the behavior of anisotropic undersampling.

Actually, block-diagonal undersampling has been discussed previously as
an approach to compressed sensing of interest in its own right. 
For example, block-diagonal undersampling has been used for compressed image acquisition 
in \cite{lu07,Fowler12} and was studied as part of a more general category of compressed sensing, 
namely tensor compressed sensing, in \cite{Friedland14, Li13}. 


{ 
There are also interesting  theoretical papers 
on block-diagonal measurement matrices,
the main emphasis has been on the 
Restricted Isometry Property (RIP) for such matrices. 
For example, see work by Eftekhari et al. \cite{Eftekhari2015}
and by Adcock and Chun \cite{Chun16}. 
The RIP offers qualitative insights,
and allows these earlier authors to propose the interpretation that, 
under favorable assumptions,}\footnote{{ These papers assume that } 
signals have sparse representation by incoherent dictionaries such as Fourier or cosine basis}.
block-diagonal matrices {\it  asymptotically} ``perform nearly as well as dense Gaussian random matrices'' for compressed sensing. 

Our interpretation of the results we obtain in this paper is quite different.
Motivated by anisotropic undersampling in practical $d$-dimensional (hypercomplex-) 
Fourier imaging, we study the finite-$N$ 
phase transitions to learn about the precise sparsity level needed for exact recovery.
We then show { rigorously that for a special analytically tractable set of situations,}
the Finite-$N$ phase transition is conspicuously different from the phase transitions for dense Gaussian undersampling.
{  Moreover, we derive formulas that predict accurately even 
outside  cases where we can do rigorous mathematical analysis.
By extensive computations we} document the accuracy of our finite-$N$ prediction formulas
and thereby show quite generally that there are substantially 
worse finite-$N$ phase transitions for anisotropic/block-diagonal undersampling 
than for Gaussian undersampling.  We do find that, asymptotically as $N$ grows {\it  very} large,
the finite-$N$ phase transitions of anisotropic undersampling schemes converge 
to the asymptotic phase transition of Gaussian undersampling; however, this convergence occurs much more slowly  than
under isotropic sampling.\footnote{As an example, for an $M\times M$ grid where one of the 
  dimensions is measured exhaustively, the rate of convergence is $M^{-1/2}$, 
  while for { isotropic random sampling} in both dimensions, this rate would be $M^{-2}$.}   
\newcommand{\bH}{{\bf H}}

\section{Finite-N Phase Transitions for Block Diagonal Measurements}

In this section, we discuss the problem of recovering 
a sparse $N$-vector $\bx_0$ from $n$ measurements
 $\by = A \bx_0$. Here the $n \times N$ measurement matrix $A$ 
has a special block structure and we use a particular
 convex optimization in our attempt to recover $\bx_0$.
 In one special case, we derive the exact
 finite-$N$ phase transition properties and show that 
 block-diagonal measurement matrices underperform dense
 \emph{i.i.d} Gaussian matrices by a substantial amount.
 In later sections, the {\it ansatz} provided by the explicit formulas
derived in this special case are generalized to successfully predict experimental
results across all other cases, with similar conclusions.

\newcommand{\rk}{{ k}}
\subsection{The convex optimization problem}
The data vector $\by  $ is assumed to arise by applying the
measurement matrix $A$ to the unknown object $\bx_0$.
To reconstruct $\bx_0$, we solve the following convex optimization problem:
$$
(P_{1,\bX})\qquad  \min \  \| \bx \|_{1,\bX}  \quad  \text{subject to} \quad A \bx = \by, \qquad \bx \in \bX^N.
$$
Here each coefficient $x(i)$ is supposed to belong
to a convex subset $\bX \subset \bR^\rk$ with nonempty interior, and 
$\| \cdot \|_{1,\bX}$ denotes the (appropriately defined) ``$\ell_1$ norm'' on $(\bR^d)^N$. 
The coefficient set $\bX$ might for example be $[0,1]$, $[0,\infty)$, or $\bR$,
in which case $d=1$ and  $\| \cdot \|_{1,\bX}$ denotes the usual $\ell_1$-norm on $\bR^N$. But we could also have
$\bX$ be the set of complex numbers $\bC$, or the hypercomplex set $\bH^{d}$ \cite{Monajemi2016ACHA,Monajemi_thesis_2016}.
{ When $\bX$ is $\bC$ (resp. $\bH_d$), 
the ambient dimension $\rk$ is $2$ (resp. $2^d$),}
and $\| \cdot \|_{1,\bX}$ denotes what is more usually called the
mixed $\ell_{2,1}$ norm: $\| x \|_{2,1} = \sum_{i=1}^N \| x(i) \|_{\ell_2(\bR^\rk)}$ (resp. hypercomplex one-norm $\| x \|_{\bH,1}$ \cite{Monajemi2016ACHA}).

Below, we often write $(P_1)$ rather than $(P_{1,\bX})$, making the coefficient domain explicit only where necessary.

\subsection{Block-Diagonal Measurement Matrices}
In this section, the measurement matrices $A$ will always be in block form:
\[
A=
\left[
\begin{array}{c|c|c|c|c}
A^{(1)}& 0 & 0 & \dots & 0 \\
\hline
0 & A^{(2)} & 0 & \dots & 0 \\
\hline
\multicolumn{5}{c}{\dots } \\
\hline
0& \dots &  0 &   A^{(B-1)}  & 0 \\
\hline
0& \dots &  0 & 0 &  A^{(B)} \\
\end{array}
\right],
\]
where each block $A^{(b)}$ is $m \times M$.   The $0$'s here also denote $m \times M$ blocks,
filled with zero entries. The whole matrix is of 
size $n =m  B$ by  $N =  M B$, and only the blocks on the diagonal can be nonzero.

We can construct such block diagonal matrices in two ways:
\begin{itemize}
\item \textit{Repeated-Block Ensembles (RB)}:  Our blocks are simply $B$ identical copies of the same 
 $m \times M$ block.
\item \textit{Distinct-Block Ensembles (DB)}: 
There are $B$ distinct blocks of size
 $m \times M$. 
 \eitem
 
To obtain the individual blocks, we often consider drawing them at random. 
A standard construction involves Gaussian \emph{i.i.d} entries $A_{i,j} \sim N(0,\frac{1}{m})$.
We often start with such a matrix but then  normalize its
columns to unit length; formalizing this:

\begin{dfn}
A matrix $A$ is said to (have columns sampled from, be sampled from) the 
{\em Uniform Spherical Ensemble} ({\bf USE})
if its  columns  $\ba_i$ are sampled \emph{i.i.d} from the 
uniform distribution on the unit sphere $S^{m-1} \subset \bX^m$,
where $\bX$ is either $\bR$ or $\bC$. 
\end{dfn}

Using random blocks from USE, the Distinct/Repeated blocks distinction gives us
 two kinds of matrix ensembles:
\bitem
\item {\it Repeated-Block USE (RBUSE)}. We draw a {\it single} block  $A^{(1)}$ from  USE.
We generate a  block-diagonal matrix 
$A = \text{diag}(A^{(1)}, \dots, A^{(1)})$ having all $B$ blocks be identical copies of $A^{(1)}$.
Equivalently, the full measurement matrix $A$ is a  \emph{Kronecker} product: 
$A = I_B \otimes A^{(1)}$.
\item {\it Distinct-Block USE (DBUSE)}.
 We draw $B$ independently-sampled $m \times M$ blocks $A^{(b)}$, $ b=1,\dots, B$, from USE. 
We generate a  block-diagonal matrix $A = \text{diag}(A^{(1)}, \dots, A^{(B)})$.
This may equivalently be written as
a direct sum: $A = {\bigoplus}_{b = 1}^{B}A^{(b)}$.
\end{itemize}


\subsection{Separability}

\newcommand{\cX}{{\cal X}}
Since our measurement matrix $A$ has the  block-diagonal form  
 $A = \text{diag}(A^{(1)}, A^{(2)}, \cdots, A^{(B)})$,
it makes sense
to partition the vectors $\by$ and $\bx$
 involved in the relation $\by=A\bx$ consistently with block structure of $A$: 
\[
\bx =  [ x^{(1)} \mid x^{(2)} \mid \cdots \mid x^{(B)}], \quad \by =  [ y^{(1)}\mid y^{(2)} \mid \cdots \mid y^{(B)}],
\]
where the subvectors $x^{(b)}$ are $M \times 1$, while the $y^{(b)}$ are $m \times 1$.
The equation $\by = A \bx$ is then precisely equivalent to the $B$ different relations
\[
y^{(b)} = A^{(b)} x^{(b)}, \qquad b = 1, \dots , B.
\]
Define now the $b$-th {\it block subproblem}:
$$
(P_1^{(b)}) \quad  \min \  \| x \|_{1,\bX}  \quad \text{subject to} \quad A^{(b)} x = y^{(b)} , \quad x \in \bX^M.
$$
The key consequence of block-diagonality of $A$ is that the optimization problem $(P_1)$ becomes
separable into its pieces $(P_1^{(b)})$.
\begin{lem} {\bf (Separability of $(P_1)$)}
We have
\[
    val(P_1) = \sum_{b=1}^B val(P_1^{(b)}) .
\]
Let $\cX_1^{(b)} \subset \bX^M$ denote the set of optimal solutions of $(P_1^{(b)})$ and let
$\cX_1 \subset \bX^N$ denote the set of solutions of $(P_1)$. Then
\[
   \cX_1 = \bigoplus_{b=1}^B \cX_1^{(b)}.
\]
In particular, suppose that each subproblem $(P_1^{(b)})$ has a unique solution $x_1^{(b)}$. 
The combined vector $\bx_1 = [x_1^{(1)}\mid x_1^{(2)}| \dots \mid x_1^{(B)}]$ is then the unique solution to $(P_1)$. 
Suppose that $(P_1)$ has a unique solution $\bx_1$. Then the $b$-th block of $x_1$, say $x_1^{(b)}$, 
is the unique solution of 
$(P_1^{(b)})$.

\end{lem}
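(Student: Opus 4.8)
The plan is to exploit the fact that the constraint set of $(P_1)$ is a product of the constraint sets of the $(P_1^{(b)})$, so that both the objective and the feasible region split cleanly along the block partition. First I would record the elementary observation that, because $A$ is block-diagonal, a vector $\bx = [x^{(1)}\mid\cdots\mid x^{(B)}]$ satisfies $A\bx = \by$ if and only if $A^{(b)}x^{(b)} = y^{(b)}$ for every $b$, and that $\bx \in \bX^N$ iff each $x^{(b)} \in \bX^M$; hence the feasible set of $(P_1)$ is exactly the Cartesian product $\prod_{b=1}^B \{x : A^{(b)}x = y^{(b)},\ x\in\bX^M\}$ of the feasible sets of the subproblems. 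Next I would use additivity of the norm across blocks, $\|\bx\|_{1,\bX} = \sum_{b=1}^B \|x^{(b)}\|_{1,\bX}$, which holds for each of the concrete instances of $\|\cdot\|_{1,\bX}$ considered (ordinary $\ell_1$, mixed $\ell_{2,1}$, hypercomplex one-norm), since in every case the norm is a sum over coordinates $i=1,\dots,N$ and the block partition just groups those coordinates.

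Given these two facts, the value statement is immediate: minimizing a separable sum over a product set decouples, so $val(P_1) = \inf_{\bx\text{ feasible}} \sum_b \|x^{(b)}\|_{1,\bX} = \sum_b \inf_{x^{(b)}\text{ feasible}} \|x^{(b)}\|_{1,\bX} = \sum_b val(P_1^{(b)})$. For the description of the solution set, I would argue that $\bx = [x^{(1)}\mid\cdots\mid x^{(B)}]$ is optimal for $(P_1)$ iff it is feasible and $\sum_b \|x^{(b)}\|_{1,\bX} = \sum_b val(P_1^{(b)})$; since each term $\|x^{(b)}\|_{1,\bX} \ge val(P_1^{(b)})$ by feasibility of $x^{(b)}$, the sum equals the sum of the minima iff each term attains its minimum, i.e.\ iff $x^{(b)} \in \cX_1^{(b)}$ for every $b$. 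That is precisely the assertion $\cX_1 = \bigoplus_{b=1}^B \cX_1^{(b)}$.

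The two uniqueness consequences then follow by specializing this set identity. If each $\cX_1^{(b)}$ is the singleton $\{x_1^{(b)}\}$, then the product is the singleton $\{[x_1^{(1)}\mid\cdots\mid x_1^{(B)}]\}$, so $\bx_1$ is the unique solution of $(P_1)$. Conversely, if $\cX_1 = \{\bx_1\}$ is a singleton, then the product $\bigoplus_b \cX_1^{(b)}$ is a singleton; since none of the factors $\cX_1^{(b)}$ is empty (each $(P_1^{(b)})$ is feasible, as $x_0^{(b)}$ is feasible, and the infimum of a norm over a nonempty closed affine slice of $\bX^M$ is attained), a product of nonempty sets can be a singleton only if every factor is a singleton; reading off the $b$-th coordinate shows $\cX_1^{(b)} = \{x_1^{(b)}\}$.

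I do not anticipate a serious obstacle here; the only point requiring minor care is the attainment of the infimum in each subproblem (needed so that $\cX_1^{(b)}\ne\emptyset$ in the converse direction), which follows from coercivity of the norm on the affine constraint set together with closedness of $\bX$ — and in any case the converse statement can alternatively be proved directly by a swapping argument: if $x_1^{(b_0)}$ were not the unique minimizer of $(P_1^{(b_0)})$, one could replace the $b_0$-th block of $\bx_1$ by a different optimal (or lower-value) block and contradict uniqueness (resp.\ optimality) of $\bx_1$. I would present the proof in the clean product-set form above, relegating the attainment remark to a parenthetical.
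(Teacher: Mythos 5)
Your proof is correct and is essentially the argument the paper intends: the paper states this lemma without proof, treating it as immediate from block-diagonality of $A$ (feasible set is a product) plus coordinate-wise additivity of $\| \cdot \|_{1,\bX}$, which is exactly what you spell out. One minor simplification: in the converse uniqueness step no attainment/coercivity argument is needed, since nonemptiness of each $\cX_1^{(b)}$ already follows from the product identity itself — the blocks of the assumed unique solution $\bx_1$ lie in the respective sets $\cX_1^{(b)}$.
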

%
%

\begin{cor} {\bf (Product Rule for Success Probabilities) }
Suppose that the block matrices $A^{(b)}$, $b=1,\dots,B$ are sampled \emph{i.i.d} from a common distribution,
and the subvectors $x_0^{(b)}$ are sampled \emph{i.i.d} from a common distribution.
Define the events
\[
      \Omega^{(b)} \equiv \{ (P_1^{(b)}) \mbox{ has an unique solution $x_1^{(b)}$, and $x_1^{(b)} = x_0^{(b)}$} \}  \}
\]
(i.e. $\Omega^{(b)} = \{  \bx_1^{(b)} =  \{ x_0^{(b)} \}  \}$). Correspondingly, let 
\[
      \Omega \equiv \{ (P_1) \mbox{ has an unique solution $\bx_1$, and $\bx_1 = \bx_0$}  \}.
\]
Then
\[
    \Omega = \cap_{b=1}^B \Omega^{(b)} ,
\]
and
\[
   \Pr(\Omega) = \Pr(\Omega^{(1)})^B.
\]
\end{cor}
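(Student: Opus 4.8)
The plan is to deduce the Corollary directly from the Separability Lemma together with the independence hypotheses. First I would observe that the Separability Lemma gives a purely deterministic, set-theoretic identity: for any realization of the blocks $A^{(b)}$ and any right-hand sides $y^{(b)} = A^{(b)} x_0^{(b)}$, the solution set factors as $\cX_1 = \bigoplus_{b=1}^B \cX_1^{(b)}$. From this I would argue that the global success event $\Omega$ — namely that $(P_1)$ has a unique solution equal to $\bx_0$ — holds if and only if each block success event $\Omega^{(b)}$ holds. The forward direction ($\Omega \subset \cap_b \Omega^{(b)}$) is the last sentence of the Lemma (uniqueness of the global solution forces uniqueness of each block solution, and $\bx_1 = \bx_0$ forces $x_1^{(b)} = x_0^{(b)}$ componentwise); the reverse direction is the ``in particular'' sentence of the Lemma (if each $(P_1^{(b)})$ has unique solution $x_1^{(b)} = x_0^{(b)}$, then $\bx_1 = [x_1^{(1)}\mid\dots\mid x_1^{(B)}] = \bx_0$ is the unique global solution). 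Hence $\Omega = \cap_{b=1}^B \Omega^{(b)}$ as events in the underlying probability space.

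Next I would address the probabilistic claim $\Pr(\Omega) = \Pr(\Omega^{(1)})^B$. The key point is that the event $\Omega^{(b)}$ is a (measurable) function of the pair $(A^{(b)}, x_0^{(b)})$ alone — it does not reference any other block. By hypothesis the pairs $(A^{(b)}, x_0^{(b)})$, $b = 1,\dots,B$, are mutually independent (the $A^{(b)}$ are i.i.d., the $x_0^{(b)}$ are i.i.d., and the two families are independent of each other, as implicit in the setup), so the events $\Omega^{(1)},\dots,\Omega^{(B)}$ are mutually independent. Therefore $\Pr(\Omega) = \Pr(\cap_b \Omega^{(b)}) = \prod_{b=1}^B \Pr(\Omega^{(b)})$. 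Finally, since each $(A^{(b)}, x_0^{(b)})$ has the same distribution as $(A^{(1)}, x_0^{(1)})$ and $\Omega^{(b)}$ is the same measurable event applied to it, $\Pr(\Omega^{(b)}) = \Pr(\Omega^{(1)})$ for every $b$, giving $\Pr(\Omega) = \Pr(\Omega^{(1)})^B$.

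The only genuine subtlety — and the step I would be most careful about — is measurability and the precise bookkeeping of the ``unique solution'' clause inside $\Omega^{(b)}$. One must confirm that ``$(P_1^{(b)})$ has a unique solution'' is a measurable event (standard: the solution set of an LP/convex program depends measurably on its data, and singleton-ness is a measurable condition), and that the set-theoretic equivalence $\Omega = \cap_b \Omega^{(b)}$ correctly handles both the existence/uniqueness part and the ``equals $\bx_0$'' part simultaneously. The direct-sum identity $\cX_1 = \bigoplus_b \cX_1^{(b)}$ makes this clean: $\cX_1$ is a singleton $\{\bx_0\}$ iff each $\cX_1^{(b)}$ is a singleton and the singletons concatenate to $\bx_0$, i.e. $\cX_1^{(b)} = \{x_0^{(b)}\}$ for all $b$. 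Everything else is routine, so the corollary follows immediately from the Lemma plus elementary independence.
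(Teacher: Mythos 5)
Your argument is correct and is essentially the paper's own route: the paper states this Corollary without a separate proof precisely because it follows immediately from the Separability Lemma (the solution-set identity $\cX_1 = \bigoplus_b \cX_1^{(b)}$ giving $\Omega = \cap_b \Omega^{(b)}$) together with independence and identical distribution of the pairs $(A^{(b)}, x_0^{(b)})$ across blocks. Your explicit note that the product rule really requires the pairs to be independent block-to-block (not just each family i.i.d.\ separately) is a fair reading of the intended hypotheses, matching the formulation used later in Lemma \ref{cor:ExactProbMultiBlock}.
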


For clarity we point out that the matrices $A^{(b)}$ are not assumed 
by the Corollary to have any specific properties themselves, e.g. they do not have
to have \emph{i.i.d} elements $A^{(b)}_{ij}$; instead $A^{(i)}$ 
is simply assumed to be stochastically independent of $A^{(j)}$.  

In a sense, this corollary reduces the task of computing the probability of exact recovery
to the task of computing $P(\Omega^{(1)})$. 

\subsection{Exact Finite-$N$ Success Probabilities for $(P_{1,[0,1]})$}

In one very special case, it is possible to evaluate $P(\Omega^{(1)})$
exactly at each $M$ and $\ell$.  We study this case carefully for clues about the general situation.
Consider  the (single-block) convex optimization problem:
\[
(P_{1,[0,1]}^{(1)}) \quad  \min \  \| x \|_1  \quad \text{subject to} \quad A^{(1)} x = A^{(1)} x_0, \quad 0 \leq x(i) \leq 1.
\]
This is an instance of what we earlier called $(P_{1,\bX})$ with the specific coefficient set $\bX = [0,1]$.
In this problem only,  when we say that $x_0$ has {\it at most $\ell$ { non-constrained} elements},
we mean that at most $\ell$ coefficients $x_0(i)$ do not belong to the boundary $\{ 0, 1 \}$ of $\bX = [0,1]$.

To proceed further, we need two notions:
\bitem
\item {\it Exchangeability.}
The random variables $Z_1,\dots Z_M$ are {\it exchangeable} if, for any permutation $P$
on the set $\{1,\dots,M\}$,
the joint probability distribution of $(Z_{P(1)},\dots, Z_{P(M)})$ is the same as that for $(Z_{1},\dots, Z_{M})$.
\item {\it General position.}
The vectors $\ba_1,\dots \ba_M$ are in {\it general position} in $\bR^m$ 
if no subcollection of at most $m$ vectors is linearly dependent.
\eitem
We now describe two conditions, either of which allows exact evaluation of 
success probabilities.
\begin{description}
\item[$(C_A)$] $A$ is any fixed $m \times M$ matrix  with its $M$ columns
in general position in $\bR^m$. 
$x_0$ is a random $M$-vector  in $[0,1]^M$ 
surely having $\ell$  entries different than $0$ or $1$,  
and the joint distribution of  $(x_0(i): 1\leq i \leq M)$
is exchangeable. 
\item[$(C_x)$] $x_0$ is any fixed vector in $[0,1]^M$ having $\ell$ entries different than $0$ or $1$. 
$A$ is a random $m \times M$ matrix whose
columns $(\ba^{(1)}, \ba^{(2)}, \dots, \ba^{(M)})$ are { almost } surely in general position and
 have an exchangeable joint distribution.
\end{description}

\noindent
\begin{thm} \label{thm-Donoho-Bnd}
\cite{DoTa10a} ({\bf Exact Success probabilities in the Single-Block Problem, $\bX = [0,1]$.})
Assume either of assumptions $(C_A)$, $(C_x)$ for the joint 
distribution of $(A,x_0$). 
Let $\Omega$ denote the 
event that  $(P_{1,[0,1]}^{(1)})$ has a unique solution, and that solution is precisely $x_0$. Then
$\Pr(\Omega)$ depends only on $\ell$, $m$, $M$, and not on any other details
of the joint distribution of $(A,x_0)$.  In fact, $\Pr(\Omega) = Q_{sb}(\ell,m,M; [0,1])$,
where
\begin{align}
 \nonumber   Q_{sb}(\ell,m,M ; [0,1])  &= 1 - 2^{-(M-\ell-1)} \sum_{j=0}^{M-m-1} { M-\ell-1 \choose j } \\
        &= 1- P_{M-m,M-\ell}, \mbox{ say }.
\label{bnd-formula}        
\end{align}
\end{thm}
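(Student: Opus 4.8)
The plan is to reduce the event $\Omega$ to a classical hyperplane-arrangement / Wendel--Cover counting problem and then evaluate the resulting binomial sum; this is, in effect, the route of \cite{DoTa10a}.

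\emph{Step 1: symmetry reductions.} Exchangeability lets me relabel so that the $\ell$ non-constrained coordinates of $x_0$ are $\{1,\dots,\ell\}$ without changing $\Pr(\Omega)$; and since flipping a column $a_i\mapsto -a_i$ leaves the relevant distribution invariant while turning a coordinate with $x_0(i)=1$ into one with $x_0(i)=0$ (and does not affect $\Omega$), I may further assume every constrained coordinate of $x_0$ equals $0$. In the fixed-$A$ model $(C_A)$ the role of exchangeability is instead to \emph{average} the recovery indicator uniformly over all $\ell$-subsets (and all $0/1$ labellings of the complement); the point — ultimately a statement that simple affine arrangements have combinatorially determined face numbers — is that this average is the same for every general-position $A$. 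Either way this already forces $\Pr(\Omega)$ to depend only on $(\ell,m,M)$, which is the first assertion.

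\emph{Step 2: orthant reformulation and dual certificate.} After Step 1 the upper box constraints $x(i)\le 1$ are inactive at $x_0$, so whether $x_0$ is the unique optimizer of $(P^{(1)}_{1,[0,1]})$ depends only on the normal cone at $x_0$ and is unchanged if $[0,1]$ is replaced by $[0,\infty)$. Since $\|x\|_1=\mathbf 1^{\top}x$ on the feasible set this is a linear program, and its strict-complementarity conditions say: $\Omega$ holds iff the $\ell$ active columns are linearly independent (automatic under general position since $\ell\le m$) and there is a $w\in\bR^m$ with $\langle a_i,w\rangle=1$ for $i\le\ell$ and $\langle a_i,w\rangle<1$ for $i>\ell$. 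Equivalently, in the nullspace picture, $\Omega$ fails iff $\ker A^{(1)}$ contains a nonzero vector $v$ that is nonnegative on the last $M-\ell$ coordinates with $\mathbf 1^{\top}v\le 0$; in arrangement language, with $H_i=\{w:\langle a_i,w\rangle=1\}$ a simple affine arrangement, $\Omega$ says the $(m-\ell)$-flat $\bigcap_{i\le\ell}H_i$ meets the ``$<1$'' side of every remaining hyperplane.

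\emph{Step 3: count and evaluate.} Projecting $\ker A^{(1)}$ onto the $M-\ell$ constrained coordinates is a linear isomorphism onto a subspace which, in the canonical models, is uniformly random of dimension $M-m$ in $\bR^{M-\ell}$; after accounting for the linear objective, $\Omega$'s complement becomes a ``random subspace meets a fixed orthant'' event, equivalently (by polarity) the event that $M-\ell$ sign-symmetric, exchangeably distributed vectors in general position in $\bR^{M-m}$ all lie in one halfspace through the origin. Wendel's theorem — equivalently Cover's function-counting theorem or Schläfli's/Zaslavsky's formula for the face numbers of a simple arrangement — evaluates this to $2^{-(M-\ell-1)}\sum_{j=0}^{M-m-1}\binom{M-\ell-1}{j}=P_{M-m,M-\ell}$, giving $\Pr(\Omega)=1-P_{M-m,M-\ell}$.

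The crux is the third step together with the averaging set up in Step 1. The dual conditions are \emph{affine} (``$=1$'', not ``$=0$''), so a naive homogenization introduces an extra, deterministic, non-symmetric direction and an off-by-one in the binomial indices; the real work is to see that, because the objective is the all-ones direction, the probability is nonetheless exactly the clean symmetric Wendel count with $M-\ell$ points in dimension $M-m$ — equivalently, to carry out the arrangement face-count bookkeeping with precisely the right indices (the ``$\le M-m-1$'' upper limit is where the one-sided, lower-face nature of the minimization enters) — and to verify that the general-position hypotheses in $(C_A)$ and $(C_x)$ are exactly what is needed to discard the measure-zero degeneracies. These details are what \cite{DoTa10a} supplies.
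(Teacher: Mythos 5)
The first thing to note is that the paper itself contains no proof of Theorem \ref{thm-Donoho-Bnd}: it is imported verbatim from \cite{DoTa10a}, so there is no internal argument to compare yours against. Your write-up, moreover, also ends by deferring the decisive counting step ("these details are what \cite{DoTa10a} supplies"), so as a self-contained proof it does not close the gap either; the real question is whether your reduction is aimed at the right geometric object, and I do not think it is.

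Concretely: (1) the sign-flip reduction of Step 1 is not available. Under $(C_x)$ exchangeability of the columns does not give sign-invariance, under $(C_A)$ the matrix is fixed so there is no distribution to be invariant, and -- decisively -- the substitution $x_0(i)=1\mapsto 0$, $a_i\mapsto -a_i$ does not preserve the $\ell_1$ objective (the coefficient of that coordinate in the objective flips sign), so the claim that $\Omega$ is unaffected is unjustified for the minimization problem; it is only true for the pure feasibility event. (2) Once you drop the upper bounds and pass to the orthant LP with certificate $\langle a_i,w\rangle=1$ on the support and $<1$ off it, you are in projected-simplex geometry, whose exact success probabilities are angle-dependent (ensemble-dependent) and are not given by any universal binomial expression. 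The formula $1-P_{M-m,M-\ell}$ is exactly the deterministic face count of the zonotope $A[0,1]^M$ (Schl\"afli/Zaslavsky/Cover), i.e.\ it computes the probability that $x_0$ is the unique point of $[0,1]^M\cap\{x:Ax=Ax_0\}$, an event in which the objective plays no role and \emph{both} bounds matter; the failure condition is a nonzero $v\in\ker A$ with $v_i\ge 0$ where $x_0(i)=0$ and $v_i\le 0$ where $x_0(i)=1$, with no clause $\mathbf{1}^\top v\le 0$. Your nullspace reformulation carries that extra non-symmetric halfspace, which is precisely what destroys the clean count. A cheap sanity check: for $M=2$, $m=1$, $\ell=0$, $x_0=(0,0)$, the box-constrained $\ell_1$ program recovers $x_0$ for every admissible $A$ (the objective is $0$ at $x_0$ and positive at any other feasible point), while $Q_{sb}(0,1,2)=1/2$; so no argument that uses the objective in an essential way can land exactly on this formula. (3) Your Step 3 kernel is correct for the feasibility event -- failure iff the projection of $\ker A$ onto the $M-\ell$ constrained coordinates meets a fixed orthant nontrivially, iff $M-\ell$ vectors in $\bR^{M-m}$ lie in a common halfspace -- but the universality asserted in the theorem then comes not from Wendel-type distributional symmetry (which neither $(C_A)$ nor $(C_x)$ supplies) but from the fact that the face numbers of a simple arrangement/zonotope are determined by general position alone, combined with the face of the cube containing $x_0$ being (made) uniformly distributed; that is the mechanism in \cite{DoTa10a}, and it is exactly the symmetry your Step 1 was supposed to, but does not, deliver.
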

 
%

We now remind the reader that in a sequence of $B$ independent  Bernoulli trials with common success probability $q$,
the chance of $B$ consecutive successes is $q^B$. As a result, we can infer general multiblock success probabilities
from single-block ones (under appropriate assumptions).

\begin{lem}  ({\bf Exact Success probabilities in the Multiblock Problem.}) \label{cor:ExactProbMultiBlock}
Consider  a  random instance of the multiblock problem $(P_{1,\bX})$,
where the individual components $(A^{(b)},x_0^{(b)})$ are \emph{i.i.d} according to a specific distribution $\nu$.

Let $Q_{sb} = Q_{sb}(\ell,m,M; \nu,\bX)$ denote the success probability
for  the single-block problem $(P_{1,\bX}^{(1)})$: namely, let $\Omega^{(1)}$ denote the 
event that  $(P_{1,\bX}^{(1)})$ has a unique solution, and that solution is precisely $x_0^{(1)}$,
and set
\[
   Q_{sb} = \Pr(\Omega^{(1)}).
\]
Let $Q_{mb} = Q_{mb}(B \cdot \ell,B \cdot m, B \cdot M; \nu,\bX)$ denote the success probability
for  the multiblock problem $(P_{1,\bX})$
: i.e., with $\Omega$ denoting the 
event that  $(P_{1,\bX})$
has a unique solution, and that solution is precisely $x_0$,
we have
\[
   Q_{mb} = \Pr(\Omega).
\]
Then
\[
    Q_{mb} = (Q_{sb})^B.
\]
\end{lem}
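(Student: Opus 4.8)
The plan is to obtain this lemma directly from the Separability Lemma together with the independence assumed in the hypothesis; no new analysis is required. First I would isolate the \emph{deterministic} content already contained in the Separability Lemma, namely the set-theoretic identity
\[
\Omega \;=\; \bigcap_{b=1}^B \Omega^{(b)} .
\]
To see this, note that the Separability Lemma gives two implications: (i) if each $(P_1^{(b)})$ has a unique solution $x_1^{(b)}$, then the concatenation $\bx_1 = [x_1^{(1)}\mid\cdots\mid x_1^{(B)}]$ is the unique solution of $(P_1)$; and (ii) if $(P_1)$ has a unique solution $\bx_1$, then each block $x_1^{(b)}$ is the unique solution of $(P_1^{(b)})$. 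Combined with the direct-sum description $\cX_1 = \bigoplus_b \cX_1^{(b)}$, this shows that ``$(P_1)$ has a unique solution and it equals $\bx_0$'' holds exactly when, for every $b$, ``$(P_1^{(b)})$ has a unique solution and it equals $x_0^{(b)}$''; that is precisely $\Omega = \bigcap_b \Omega^{(b)}$. (This is the same observation underlying the Product Rule Corollary.)

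Second, I would invoke independence to factor the probability. By hypothesis the pairs $(A^{(b)},x_0^{(b)})$ are i.i.d.\ under $\nu$, and the event $\Omega^{(b)}$ is a measurable function of $(A^{(b)},x_0^{(b)})$ \emph{alone}; hence $\Omega^{(1)},\dots,\Omega^{(B)}$ are mutually independent. Therefore
\[
Q_{mb} \;=\; \Pr(\Omega) \;=\; \Pr\!\Big(\bigcap_{b=1}^B \Omega^{(b)}\Big) \;=\; \prod_{b=1}^B \Pr(\Omega^{(b)}).
\]
Since the pairs are identically distributed, $\Pr(\Omega^{(b)}) = \Pr(\Omega^{(1)}) = Q_{sb}$ for each $b$, and the product collapses to $(Q_{sb})^B$, which is the claim. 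This is just the elementary fact — already flagged in the text — that $B$ independent Bernoulli trials with common success probability $q$ yield $B$ consecutive successes with probability $q^B$.

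The only point needing care, and hence the main (mild) obstacle, is the event-equivalence/measurability step: one must verify that the qualitative statements of the Separability Lemma (``unique solution'', ``solution set a direct sum'') translate cleanly into the set identity $\Omega = \bigcap_b \Omega^{(b)}$, and that each $\Omega^{(b)}$ really depends only on the $b$-th block data, so that the i.i.d.\ structure of the blocks produces independence of the \emph{events} and not merely of the matrices. Everything past that is immediate.
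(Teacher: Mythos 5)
Your proposal is correct and follows essentially the same route as the paper: the paper obtains this lemma from the Separability Lemma via the Product Rule Corollary (the identity $\Omega = \cap_{b=1}^B \Omega^{(b)}$) together with the remark that $B$ independent Bernoulli trials with common success probability $q$ succeed simultaneously with probability $q^B$. Your added care about the event $\Omega^{(b)}$ being a function of the pair $(A^{(b)},x_0^{(b)})$ alone is exactly the right way to handle the pairwise i.i.d.\ hypothesis of the lemma, and nothing further is needed.
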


Turn now to  the corresponding  multiblock problem
\[
(P_{1,[0,1]}) \quad  \min \  \sum_{b=1}^B \| x^{(b)}  \|_1  \quad \text{subject to} \quad A^{(b)} x^{(b)} = y^{(b)}, \quad 0 \leq x^{(b)}(i) \leq 1, \quad 1 \leq b \leq B.
\]

\begin{cor} \label{cor:MBfromSB}
Consider  a  random instance of the multiblock problem $(P_{1,[0,1]})$
where the individual components $(A^{(b)},x_0^{(b)})$ are \emph{i.i.d} according to a specific distribution $\nu$
that almost surely obeys $(C_A)$ conditionally on $A$, or almost surely obeys $(C_x)$ conditional on $x_0$.
Then when $N = B \cdot M$, $n = B \cdot m$, and $k = B \cdot \ell$,
\beq \label{eq:MBfromSB}
  Q_{mb}(k,n, N; \nu,[0,1]) = Q_{sb}(\ell,m,M;[0,1])^B .
\eeq
\end{cor}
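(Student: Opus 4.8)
The plan is to obtain the identity by composing two results already in hand: the block-multiplicativity of the exact-recovery probability (Lemma \ref{cor:ExactProbMultiBlock}, which itself rests on the Separability Lemma and the Product Rule for Success Probabilities), and the closed form for the single-block success probability under $(C_A)$ or $(C_x)$ (Theorem \ref{thm-Donoho-Bnd}). In outline: first pass to $B$-th powers, then evaluate the base factor.

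First I would check that the hypotheses of Lemma \ref{cor:ExactProbMultiBlock} are satisfied here. That lemma requires the pairs $(A^{(b)},x_0^{(b)})$, $b=1,\dots,B$, to be i.i.d.\ from a common law, which is exactly the standing assumption of the present corollary; we are moreover specializing to $\bX=[0,1]$ and setting $N=BM$, $n=Bm$, $k=B\ell$. Applying the lemma then gives at once
\[
  Q_{mb}(k,n,N;\nu,[0,1]) \;=\; Q_{sb}(\ell,m,M;\nu,[0,1])^B ,
\]
where the base $Q_{sb}(\ell,m,M;\nu,[0,1])$ is by definition $\Pr(\Omega^{(1)})$ for a single block $(A^{(1)},x_0^{(1)})\sim\nu$.

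Next I would identify that base with the explicit quantity $Q_{sb}(\ell,m,M;[0,1])$ of \eqref{bnd-formula}. By hypothesis $\nu$ almost surely obeys $(C_A)$ conditionally on $A^{(1)}$ --- i.e.\ $A^{(1)}$ a.s.\ has its columns in general position, and, given $A^{(1)}$, the coordinates of $x_0^{(1)}$ are exchangeable with exactly $\ell$ of them lying off the boundary $\{0,1\}$ --- or else obeys the symmetric condition $(C_x)$ conditionally on $x_0^{(1)}$. In either case, for almost every realized value of the conditioning variable Theorem \ref{thm-Donoho-Bnd} applies to that fixed block and yields a conditional success probability equal to $Q_{sb}(\ell,m,M;[0,1])$, which does not depend on the conditioning value. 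Averaging over the conditioning variable (the tower property) therefore collapses this constant to $Q_{sb}(\ell,m,M;\nu,[0,1]) = Q_{sb}(\ell,m,M;[0,1])$, and substituting into the displayed product formula gives \eqref{eq:MBfromSB}.

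The argument has essentially no obstacle beyond bookkeeping; the one place I would be careful is the conditioning step in the last paragraph --- checking that ``$\nu$ a.s.\ obeys $(C_A)$ conditionally on $A$'' genuinely licenses applying Theorem \ref{thm-Donoho-Bnd} (whose statement fixes either $A$ or $x_0$) block by block, and that, because the resulting probability is the same constant for every admissible conditioning value, taking the expectation leaves it unchanged. The reduction to a $B$-th power is then a pure citation of Lemma \ref{cor:ExactProbMultiBlock}, and the value of the base is a pure citation of Theorem \ref{thm-Donoho-Bnd}.
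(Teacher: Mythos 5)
Your proposal is correct and matches the derivation the paper intends: the corollary is stated as an immediate consequence of Lemma \ref{cor:ExactProbMultiBlock} (giving $Q_{mb}=Q_{sb}^B$) together with Theorem \ref{thm-Donoho-Bnd}, whose universality under $(C_A)$ or $(C_x)$ makes the conditional single-block success probability the constant $Q_{sb}(\ell,m,M;[0,1])$, so averaging over the conditioning variable changes nothing. Your explicit attention to the conditioning/tower-property step is a sound spelling-out of what the paper leaves implicit.
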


Note that the RHS  of (\ref{eq:MBfromSB}) does not depend on any further details of the joint distribution
$\nu$. It is in this sense {\it universal}.

\subsection{Finite-$N$ Phase transition Location for $(P_{1,[0,1]})$}

Here is how we obtain estimates of the Finite-$N$ phase transition.

\begin{dfn}\label{dfn:q} Assume that we have experimental data for the  frequency 
of successful reconstruction at a fixed undersampling ratio $\delta$ and 
varying sparsity ratio $\epsilon$, Assume that we  fit a generalized linear model
\begin{equation*}
\Pr \{ \textit{Success}|\delta,\eps \}  = \pi(a + b \eps| \delta),
\end{equation*}
to the empirical success frequencies  $\hat{Q}(k,n,N)$, where 
$\pi( \cdot  | \delta)$ denotes the fitted distribution.
\bitem 
\item In the single-block case,  we use the Normal distribution (Probit link).
\item In the multi-block case, we use the Gumbel distribution (CLL link).
\eitem
In the multiblock case we define the special constant
$q^* = 1-1/e$ and in the single block case, we set $q^* = 1/2$. 
\end{dfn}

\begin{dfn}
Consider a random instance of an optimization problem $(P)$ with problem sizes $(k,n,N)$,
where $n$ and $N$ are the extent of the matrix $A$ and $k$ is the number of nonzeros in $x_0$.
Let $Q(k,n,N)$ denote the probability of success with given size parameters.
Let $q^*$ be the probability defined in Definition \ref{dfn:q}.
Let $k^*$ denote the smallest integer closest to achieving success probability $q^*$: 
\[
   Q(k^*,n,N)  \approx  q^*.
\]
The Finite-$N$ phase transition location is the ratio
\[
    \epsilon^*(n,N;(P)) = \frac{k^*}{N} .
\]
\end{dfn}

We now apply this concept using the formulas
of the last section, in two ways.
Once, on a `classical' single-block problem, and once on 
a multiblock problem of equivalent size.

\bitem
\item {\it Single-Block Problem}.  
Consider a single-block problem of size $N = B M$,
 $n = B m$, $k = B \ell$, which is equivalent to the problem size
 of a multiblock problem to be considered next. We emphasize that
 this is not the main case for analysis in this section, but we study it for
 comparison purposes. It corresponds to the case $N=M$, $B=1$ in our notation,
 which is not our usual case. Using the preceding Theorem, the critical number
 of nonzeros
 $k^*_{sb}(n,N)$ solves
 \[
     Q_{sb}(k_{sb}^*,n,N) \approx q^*,
 \]
 and  we define the single-block Finite-$N$ phase transition by
 \[
    \epsilon^*_{sb} = \frac{k^*_{sb}(n,N)}{N}.
 \]
 \item {\it Multi-Block Problem.} Again in the multiblock setting $B \gg 1$,
the preceding corollary shows that 
the probability of success   is a function of $\ell,m,M,B$. 
The  critical number of nonzeros $\ell_{mb}^* = \ell_{mb}^*(m,M,B)$, yielding
\begin{eqnarray*}
Q_{sb}(\ell_{mb}^*,m,M)^B &\approx & q^*.
\end{eqnarray*}
Setting  $k_{mb}^* = B \cdot \ell^*$ for the equivalent total number of nonzeros and
 the total problem sizes $n = B m $, $N  =B M$, 
  the phase transition location is
\[
     \emb(m,M,B) =  \frac{k^*_{mb}}{N} =  \frac{B \cdot \ell^*}{B \cdot M} = \frac{\ell_{mb}^*(m,M,B)}{M}.
\]
\eitem

To be more concrete, we need specific assumptions about $m$, $M$, and $B$.

\begin{lem}\label{Lemma-PT-MultiBlock}
Consider a sequence of problem sizes where $B=M$,
$M \goto \infty$,  and 
$m/M \goto \delta \in (1/2,1)$.
With $N = B \cdot M$ and $n = B \cdot m$
we have $n \sim \delta N$. Define the asymptotic phase transition 
\[
   \easy(\delta ; [0,1] ) = (2 \delta-1)_+.
\]  
For the single-block finite-$N$ phase transition we have:
\[
   \esb(m,M; [0,1] )  = \easy(\delta)  + O(\frac{1}{M}) .
\]
Define $\gamma_M = \sqrt{\frac{2 \log(M)}{M}}$.
For the multi-block finite-$N$ phase transition we have
\[
   \emb(m,M,B; [0,1] ) =  \easy(\delta) -  \sqrt{2(1-\delta)} \cdot \gamma_M + o(\gamma_M). 
\]
\end{lem}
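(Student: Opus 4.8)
The plan is to reduce everything to the explicit formula \eqref{bnd-formula} for $Q_{sb}(\ell,m,M;[0,1]) = 1 - P_{M-m,M-\ell}$, where $P_{M-m,M-\ell}$ is the upper tail of a $\mathrm{Binomial}(M-\ell-1,\tfrac12)$ at $M-m-1$, and then perform a careful asymptotic analysis. First I would handle the single-block claim: write $\esb$ as the value of $\ell/M$ for which $Q_{sb}(\ell,m,M) = q^* = 1/2$. Since $P_{M-m,M-\ell}$ is a binomial upper tail equal to $1/2$ roughly when the threshold $M-m-1$ equals the mean $(M-\ell-1)/2$, solving $M-m-1 \approx (M-\ell-1)/2$ gives $\ell \approx 2m - M$, i.e. $\ell/M \approx 2\delta - 1 = \easy(\delta;[0,1])$; a one-term Berry--Esseen / lattice correction then gives the $O(1/M)$ error term. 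This is the easy half.

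For the multi-block claim, the target equation is $Q_{sb}(\ell,m,M)^M \approx q^* = 1-1/e$, equivalently $Q_{sb}(\ell,m,M) \approx (1-1/e)^{1/M} = 1 - \tfrac1M + o(\tfrac1M)$, so we need $P_{M-m,M-\ell} \approx 1/M$. Thus instead of hitting the median of the binomial we must hit a point in its \emph{lower tail} where the upper-tail probability is of order $1/M$. The plan is to set $M - \ell - 1 = n_0$ (the number of "trials"), $M - m - 1 = t$ (the threshold), so $t/n_0 \to (1-\delta)/(1-\ldots)$; more usefully, write $\ell/M = \easy(\delta) - \eta$ for a small shift $\eta = \eta_M \to 0$ and expand. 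Shrinking $\ell$ by $\eta M$ moves the threshold $M-m-1$ a distance of about $\tfrac12 \eta M$ \emph{above} the binomial mean, in units of the standard deviation $\sigma \approx \tfrac12\sqrt{M-\ell-1} \approx \tfrac12\sqrt{(1-\delta)M}$ (using $M - \ell \approx M(1-(2\delta-1)) = 2M(1-\delta)$, so $\sigma \approx \sqrt{(1-\delta)M/2}\cdot$const — I'd pin the constant down carefully). The upper tail at $z$ standard deviations is $\approx \Phi(-z) \approx \tfrac{1}{z\sqrt{2\pi}}e^{-z^2/2}$, and setting this equal to $1/M$ forces $z^2/2 \approx \log M$ to leading order, i.e. $z \approx \sqrt{2\log M}$. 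Translating back, $\tfrac12 \eta M / \sigma \approx \sqrt{2\log M}$, so $\eta \approx \tfrac{2\sigma}{M}\sqrt{2\log M} = 2\sqrt{\tfrac{(1-\delta)}{2M}}\cdot\text{const}\cdot\sqrt{2\log M}$, which (once the variance constant is tracked honestly) should collapse to $\sqrt{2(1-\delta)}\,\gamma_M$ with $\gamma_M = \sqrt{2\log M / M}$, matching the claimed $\emb = \easy(\delta) - \sqrt{2(1-\delta)}\gamma_M + o(\gamma_M)$.

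Concretely the steps are: (1) restate $Q_{sb}$ via \eqref{bnd-formula} and reduce the two phase-transition defining equations to binomial-tail equations; (2) dispatch the single-block case with a median-of-binomial estimate plus a lattice/continuity correction for the $O(1/M)$ term; (3) for the multi-block case, convert $Q_{sb}^M \approx 1-1/e$ into $P_{M-m,M-\ell} \approx 1/M$ using $\log(1-1/e)^{1/M} = -\tfrac1M + o(\tfrac1M)$; (4) apply a moderate-deviations / refined normal approximation to the binomial upper tail (e.g. Bahadur--Rao or a careful Stirling estimate, valid since the deviation $z \sim \sqrt{2\log M}$ grows only logarithmically so we stay in the Gaussian regime), getting $P \approx \tfrac{1}{z\sqrt{2\pi}}e^{-z^2/2}$; (5) solve $\tfrac{1}{z\sqrt{2\pi}}e^{-z^2/2} = \tfrac1M$ asymptotically to get $z = \sqrt{2\log M} + o(\sqrt{\log M})$ (the $\log\log M$ and $\log(2\pi)$ corrections are absorbed into the $o$); (6) back-substitute to recover $\eta_M = \sqrt{2(1-\delta)}\,\gamma_M(1+o(1))$, taking care with the variance of the binomial when $\ell$ itself is shifting (a fixed-point/self-consistency argument: plug the leading guess for $\ell$ back into $\sigma$, check the correction is lower order).

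The main obstacle I anticipate is step (4)--(6): getting the \emph{constant} in front of $\gamma_M$ exactly right, because several $(1-\delta)$ and factor-of-$2$ contributions interact — one factor of $2$ from the binomial-median relation $\ell \approx 2m-M$, another from $\mathrm{Var} = n_0/4$, and the $\sqrt 2$ in $\gamma_M$ itself. One must also be careful that the error terms genuinely compose to $o(\gamma_M)$ and not $O(\gamma_M)$: the sub-leading terms in the binomial tail expansion ($\log\log M$ from inverting $z e^{z^2/2}$) contribute to $z$ at order $\log\log M/\sqrt{\log M}$, which after back-substitution is $o(\gamma_M)$ as required, but this needs to be checked rather than asserted. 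A secondary technical point is justifying that the normal/moderate-deviation approximation is uniform enough in $\delta$ on compact subsets of $(1/2,1)$ and that the continuity correction (integer $\ell$, integer threshold) only perturbs things at the $O(1/M) = o(\gamma_M)$ level.
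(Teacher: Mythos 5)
Your proposal follows essentially the same route as the paper's Appendix~A proof: reduce both defining equations to binomial-tail equations via $Q_{sb}=1-P_{M-m,M-\ell}$, solve the single-block case at the binomial median ($\ell^*\approx 2m-M$), and for the multi-block case invert the normal approximation at tail height of order $\mathrm{const}/M$, giving $z\approx\sqrt{2\log M}$ and hence the offset $\sqrt{2(1-\delta)}\,\gamma_M$, with the same bookkeeping of the variance $(M-\ell)/4\approx (1-\delta)M/2$ and the same attention to making the heuristic rigorous (the paper does this via Uspensky's bound, a monotonicity estimate on $P_{k,n+h}/P_{k,n}$, and a stability lemma for the continuum solution, playing the role of your Bahadur--Rao/uniformity step). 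The one slip, writing $(1-1/e)^{1/M}=1-\tfrac1M+o(\tfrac1M)$ instead of $1-\log\bigl(\tfrac{1}{1-1/e}\bigr)/M+o(1/M)$, is harmless since, as you and the paper both note, the constant in front of $1/M$ only enters $z_B$ at order $o(\sqrt{\log M})$ and is absorbed into the $o(\gamma_M)$ term.
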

\begin{proof}
See Appendix \ref{app:PT-MultiBlock}.
\end{proof}

In particular, this lemma shows that as $B = M \goto \infty$ with $m \sim \delta M$,
\[
  \esb( m ,M)  -  \emb(m,M,B)  = \sqrt{2(1-\delta)} \cdot \gamma_M \cdot (1 + o(1)).
\]
Because $\gamma_M \goto 0$ as $M \goto \infty$, 
 this shift in phase transition\sout{s}
locations is asymptotically negligible. However, our experimental 
observations---given above and also below---show it to be quite substantial
in the intended applications. The mismatch between the single-block prediction
and the observed behavior in the multiblock case is quite substantial unless
$M$ (not $N$) is large.  In applications it is much harder to make $M$ large
than to make $N$ large.
Note that in the above lemma the system size is $N = B M = M^2$.  Hence we may equivalently write
\[
  \esb -  \emb  \sim \frac{\sqrt{2(1-\delta) \log(N)}}{N^{1/4}}  , \qquad  N \goto \infty.
\]
The denominator shows that the gap between the two phase transitions closes
very slowly with increasing problem size $N$.

\subsection{Nonidentical Subproblems?}
{
Corollary \ref{cor:MBfromSB} showed us that in case the different 
subproblems $(A^{(b)},x^{(b)})$ are i.i.d. from
a common distribution, simple formulas for the multiblock 
success probability become available.
In applications, as we will discuss later,
the different subproblems might not be identical
in structure. However, the above formulas 
provide ample clues for those cases as well,
as we will discuss further below. 

For example, we can see that, if among the $B$ subproblems,
if there were one `outlier subproblem' with dramatically
higher fraction of nonzeros $\eps^{(b)} = k_b/M$,
then that subproblem would likely be the one whose success or failure
determined the success or failure of the whole reconstructruction.
That subproblem would be in a sense the `weakest link'.

Following down this path, we see that having identical
sparsity fractions and iid matrices $A^{(b)}$ is a
kind of extremal situation; in other situations the
finite-$N$ phase transition is likely to be worse. 
We call this situation the regular situation, and
because we document a sizeable offset below $\easy$
in this situation, one easily sees that other cases will
show even larger effects than documented here in Lemma \ref{Lemma-PT-MultiBlock}.

As an example, consider
a situation where the vector $\bx$ has $\eps = k/N$ nonzeros
at randomly chosen positions. 
In particular the different partitions of the block would
have different numbers of nonzeros, according to the usual multinomial 
distribution. Below we call this situation the multinomial
situation. We have worked 
out the offset of the finite-$N$ phase transition
below $\easy$, and indeed the offset is even larger in the multinomial case
than in the case with equal numbers of nonzeros per block.
We leave detailed discussion of the 
multinomial case for future work.

Below we focus on the regular case, keeping in mind its extremal
nature as the block-diagonal situation somehow closest to 
the fully dense situation.
}
\section{Equivalence with Anisotropic Undersampling}

We now discuss the precise equivalence between 
anisotropic undersampling and block-diagonal undersampling,
considering for now only the case of 2D Fourier imaging.
We wish to recover an unknown object 
$x_0 = (x_0(t_0,t_1):  0 \leq t_i <M)$ with complex-valued entries,
defined on a 2D grid of size $M \times M$.  Our observations
are of the form { $\xaus(k_0,i) = \hat{x}(k_0,k_{1,i})$ for some specific 
choices $ \{ k_{1,i}$, $i=1,m \}$, 
and for each $k_0$ satisfying $0 \leq k_0 < M$.
 Let $\bC^{M \times M}$
denote the collection of arrays $x(t_0,t_1)$ with $0 \leq t_0,t_1 < M$,
while $\bC^{M^2}$ denotes the collection of arrays $\bx = (x(i))_{i=1}^{M^2}$.
}

{ We  think of these measurements $\xaus \in C^{M \times m}$ 
as arising from a linear operator $\cFu$ applied to $x_0$:   
$\xaus = \cFu (x_0)$}. The operator $\cFu$ is representable as a pipeline 
{ $\cFu = \cS_2 \circ \cF_2$} of two linear operators.
The first, $\cF_{2}$, say, is simply the usual 
complex-valued 2D discrete Fourier transform that
maps arrays in $\bC^{M \times M}$ to their 2D DFT's, also in $\bC^{M \times M}$.
The second,  $\cS_{2,M,\cK}$, 
is a selection operator 
that  takes as input an $M \times M$ array, and extracts from it
the $m$ {rows}
with indices in $\cK = (k_i: 0 \leq i <  m)$; 
here $0 \leq k_i < M$ and the $k_i$ are all distinct).
Within each selected {row}, it exhaustively samples all $M$ elements.
The composition $\cFu =  \cS_{2,M,\cK} \circ \cF_2$ performs anisotropic sampling in 2D-Fourier imaging.

For comparison, let $A^{(1)}$ denote  an $m \times M$ block 
matrix representing the pipeline of two linear operators.
The first, $\cF_1$, performs the usual {\it one-dimensional}  discrete Fourier transform
of a vector $ v \in \bC^M$ delivering a transformed vector $\hat{v} \in \bC^M$.  
The second, $\cS_{1,\cK}$, takes as input an $M$-vector  ($\hat{v}$, say) and
selects the $m$ entries  $(\hat{v}_{k_i}: 1 \leq i \leq m)$ out of the $M$ entries available, where $\cK = (k_i)_{i=0}^{m-1}$. 
Further, let $A$ denote the block-diagonal matrix made by repeating 
the block matrix $A^{(1)}$ along the diagonal $M$ times. 
Then $A \in \bC^{m M \times M^2}$. 

Let $vec(): \bC^{M \times M} \mapsto \bC^{M^2}$ 
denote the operator of stacking all the rows of a matrix
one by one in one tall vector. Let $\bx_0 = vec(x_0)$.
and let $\by = A \cdot \bx_0$, so that 
$\by \in \bC^{m\cdot M}$.
As $m \cdot M < M^2$,
$\by$ is an undersampling of $\bx_0 \in \bC^{M \times M}$.

The problems of recovering $\bx_0$ from $\by= A \cdot \bx_0$ and
from $\by=\cFu(x_0)$ are not obviously related.  One involves a 2D Fourier
transformation that is then subsampled, the other involves a
stack of separate 1D Fourier transforms.

To connect the two,  we need for
the element indices selected by $\cS_{1,\cK}$ 
in the construction of $A$ to be {\it identical} 
to the {row} indices selected by 
the anisotropic selection operator $\cS_{2,M,\cK}$ 
in the construction of $\cFu$.

\begin{thm}{\bf (Anisotropic undersampling models 2D Fourier Imaging.)} \label{thm:RUID} 
In the construction of $A$ and $\cFu$, suppose the underlying indices $(k_i)_{i=1}^m$ used by
$\cS_{1,\cK}$ in the specification of $A^{(1)}$ are the same as the indices
$(k_{1,i})_{i=1}^m$ used by $\cS_{2,M,\cK}$ in the specification of
$\cFu(\cdot ; \cK)$.
Let $ x_0 $ be an array
in $\bC^{M \times M}$,
and $\bx_0 = vec(x_0)$ the corresponding array in $\bC^{M^2}$.
The following two problems have identical values and isomorphic solution sets:
$$
\mathrm{(P_{1,\bC}^{\aus})} \quad \min \| x \|_{1,\bC} \quad \textrm{subject to} \quad  \cFu(x)= \cFu(x_0), \qquad x \in \bC^{M \times M},
$$
$$
(P_{1,\bC}) \quad \min \| \bx \|_{1,\bC} \quad \textrm{subject to} \quad A \bx=  A \bx_0,  \qquad \bx \in \bC^{M^2}.
$$
Namely, $val(P_{1,\bC}^{\aus}) = val(P_{1,\bC})$, and every solution
of the first problem is converted into a solution of the second problem
by $vec()$.
\end{thm}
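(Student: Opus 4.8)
The plan is to show that the map $vec(\cdot)$ intertwines the two constrained problems, so that it carries feasible points to feasible points bijectively and preserves the objective. The key algebraic fact to establish first is the operator identity
\[
    A \circ vec = vec \circ \cFu ,
\]
i.e. that applying the block-diagonal matrix $A$ to $vec(x)$ produces exactly (a reordering of) the anisotropic measurements $\cFu(x)$. To see this, note that $\cFu = \cS_{2,M,\cK}\circ \cF_2$, and that the 2D DFT $\cF_2$ factors as a composition of 1D DFTs acting on the two coordinate directions: $\cF_2 = (\cF_1 \otimes I_M)\circ(I_M \otimes \cF_1)$ in the appropriate ordering of coordinates. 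One of these 1D transforms acts within each row and the other acts across rows; the selection operator $\cS_{2,M,\cK}$ keeps only the rows (in the transformed-across-rows coordinate) indexed by $\cK$, while retaining all $M$ entries within each kept row. I would choose $vec(\cdot)$ to stack rows in the order that makes the across-rows transform correspond to the block index and the within-row transform correspond to position inside a block. With that bookkeeping, $A = I_M \otimes A^{(1)}$ with $A^{(1)} = \cS_{1,\cK}\circ \cF_1$ acts blockwise exactly as ``1D DFT within the row, then select rows by $\cK$''; after the outer transform-and-select step this is precisely $\cFu$ up to the fixed permutation implicit in $vec$. The one point requiring care is that in $\cFu$ the selection is applied to a genuine 2D DFT, whereas in $A$ it is $M$ independent 1D DFTs glued together — but separability of the 2D DFT is exactly what reconciles these, and the block-diagonal structure of $A$ is the matrix incarnation of that separability.

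Given the intertwining identity, the rest is formal. Since $vec(\cdot):\bC^{M\times M}\to\bC^{M^2}$ is a linear bijection (indeed an isometry for the relevant norms), $x$ satisfies $\cFu(x) = \cFu(x_0)$ if and only if $vec(x)$ satisfies $A\,vec(x) = A\,vec(x_0) = A\bx_0$; so $vec$ maps the feasible set of $(P_{1,\bC}^{\aus})$ bijectively onto the feasible set of $(P_{1,\bC})$. Next I would check that $vec$ preserves the objective: $\|\cdot\|_{1,\bC}$ on $\bC^{M\times M}$ is defined as the sum over entries $x(t_0,t_1)$ of the complex modulus (the mixed $\ell_{2,1}$ norm treating each complex entry as a point in $\bR^2$), and $vec$ merely relabels those entries, so $\|x\|_{1,\bC} = \|vec(x)\|_{1,\bC}$. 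Combining these two facts: for every feasible $x$ of the first problem, $vec(x)$ is feasible for the second with the same objective value, and conversely every feasible $\bx$ of the second problem is $vec(x)$ for a unique feasible $x$ of the first; hence the two infima coincide, $val(P_{1,\bC}^{\aus}) = val(P_{1,\bC})$, and $vec$ restricts to a bijection between the two solution sets, which is what ``isomorphic solution sets'' means here.

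The main obstacle is entirely in the first paragraph: pinning down the correct ordering conventions so that the separable factorization of $\cF_2$ lines up coordinate-for-coordinate with the block structure of $A = I_M \otimes A^{(1)}$ and with the chosen row-stacking $vec$. This is not deep, but it is the step where an inconsistent indexing choice would silently break the argument, so I would state the conventions explicitly up front — which among $t_0,t_1$ indexes ``within a row'' versus ``which row'', how $\cK$ labels rows in $k$-space, and in which order $vec$ stacks — and then the identity $A\circ vec = vec\circ\cFu$ becomes a direct check on a single basis array $e_{t_0,t_1}$. Once that identity is in hand the theorem follows with no further analytic content.
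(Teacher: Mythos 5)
There is a genuine gap, and it sits exactly where you located the ``main obstacle'': the intertwining identity $A\circ vec = vec\circ \cFu$ is false, and no choice of stacking order can make it true. The block-diagonal matrix $A = I_M\otimes A^{(1)}$ applied to $vec(x)$ computes, block by block, only the \emph{partial 1D} transform-and-select in the randomly sampled direction; it never applies the Fourier transform along the exhaustively sampled direction at all. By contrast $\cFu(x) = \cS_{2,M,\cK}\circ\cF_2(x)$ involves the full 2D transform, so each of its entries $\hat{x}(k_0,k_{1,i})$ depends on \emph{all} $M^2$ entries of $x$, whereas each entry of $A\,vec(x)$ depends on only the $M$ entries of one block. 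Your own suggested sanity check exposes this: for a basis array $e_{t_0,t_1}$, $A\,vec(e_{t_0,t_1})$ is supported in a single block ($m$ entries), while $\cFu(e_{t_0,t_1})$ has $mM$ nonzero entries of modulus $1/M$; these are not reorderings of one another. Hence the two measurement vectors are not related by ``the fixed permutation implicit in $vec$,'' and the feasibility correspondence in your second paragraph, as written, does not follow.

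The missing idea — and the way the paper's direct proof (Appendix C) proceeds — is to interpose an invertible isometry on the \emph{measurement} side rather than a permutation: one shows $\cFu = T\circ A\circ vec$, where $T$ is (a reshape composed with) the leftover 1D DFT applied along the exhaustively sampled dimension of the selected $M\times m$ data. This uses the factorization $\cF_2 = \cF_r\cF_c$ together with the commutation $\cS_{r,\cK}\,\cF_{r;M,M} = \cF_{r;M,m}\,\cS_{r,\cK}$ (select-then-transform equals transform-then-select in the untouched direction). Since $T$ is invertible, $\cFu(x)=\cFu(x_0)$ holds iff $A\,vec(x)=A\,vec(x_0)$, and from there your formal argument (objective preservation under $vec$, bijection of feasible sets, hence equal values and isomorphic solution sets) goes through essentially as the paper does it. So the repair is local, but as stated your key identity would fail the very basis-vector verification you propose. (The paper also gives a second, quite different proof via the Gram matrix $A^*A$, its block structure, and its singular vectors; your approach is the direct one, modulo the missing isometry $T$.)
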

\begin{proof}
We give two proofs. Appendix \ref{sec:AltProofRUID} gives a direct proof.
Appendix \ref{app:RUID} gives a much more general result of this kind, 
which is adapted later to prove
further results.
\end{proof}

We point out a very special variant that connects to  earlier results.

\begin{cor}{\bf (Anisotropic undersampling in 2D Fourier Imaging, bounded coefficients.)} \label{cor:RUID} 
In the construction of $A$ and $\cFu$, let the underlying indices $k_i$,
$i=1,\dots, m$ in $\cS_1$  be the same as the indices { $k_{1,i}$} used
by $\cS_2$. Let $ x_0 $ be an array
in $[0,1]^{M \times M}$ and 
$\bx_0 =  vec(x_0)$.
The following two problems have identical values and isomorphic solution sets:
$$
\mathrm{(P_{1,[0,1]}^{\aus})} \quad \min \| x \|_{1,\bR} \quad \text{subject to} \quad  \cFu( x)= \cFu(x_0), \qquad x \in {[0,1]}^{M \times M},
$$
$$
(P_{1,[0,1]}) \quad \min \| \bx \|_{1,\bR} \quad \text{subject to} \quad A \bx= A \bx_0, \qquad \bx \in {[0,1]}^{M^2}.
$$
\end{cor}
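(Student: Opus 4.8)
The plan is to derive Corollary \ref{cor:RUID} as a straightforward specialization of Theorem \ref{thm:RUID}. The only issue is that Theorem \ref{thm:RUID} is stated over the complex coefficient domain $\bC^{M\times M}$ with the $\ell_1$ norm $\|\cdot\|_{1,\bC}$, whereas the Corollary lives over the real box $[0,1]^{M\times M}$ with the ordinary real $\ell_1$ norm. So the work is entirely in transporting the equivalence through the restriction of the feasible set and the change of objective.

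First I would observe that the proof of Theorem \ref{thm:RUID} (as pointed to in Appendix \ref{app:RUID}) establishes more than equality of optimal values: it exhibits the explicit isometry $vec(\cdot)$ that intertwines the two measurement maps, namely $A \circ vec = vec \circ \cFu$ as operators, and this identity holds pointwise on {\it every} array $x \in \bC^{M\times M}$, not just on minimizers. Consequently the feasible set of $(P_{1,\bC}^{\aus})$, namely $\{x : \cFu(x) = \cFu(x_0)\}$, is mapped bijectively by $vec(\cdot)$ onto the feasible set of $(P_{1,\bC})$, and $vec(\cdot)$ also preserves the objective because $\|vec(x)\|_{1,\bC} = \|x\|_{1,\bC}$ coordinatewise. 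The next step is to intersect both feasible sets with the box constraint: since $vec(\cdot)$ is just a reindexing (a coordinate permutation composed with reshaping), it carries $[0,1]^{M\times M}$ onto $[0,1]^{M^2}$ entrywise, so the restricted feasible sets still correspond under $vec(\cdot)$. This reduces the Corollary to the claim that on real, box-constrained arrays the complex $\ell_1$ objective coincides with the real $\ell_1$ objective.

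That last point is the only genuine content, and it is immediate: for a real scalar $t$, $\|t\|_{\ell_2(\bR^2)} = |t|$, so for any $x\in[0,1]^{M\times M}\subset\bR^{M\times M}$ we have $\|x\|_{1,\bC} = \sum_i \|x(i)\|_{\ell_2(\bR^2)} = \sum_i |x(i)| = \|x\|_{1,\bR}$, and likewise for $\bx\in[0,1]^{M^2}$. Since $\cFu(x_0)$ for real $x_0$ is the same whether we regard $x_0$ as living in $[0,1]^{M\times M}$ or in $\bC^{M\times M}$, the feasible sets of $(P_{1,[0,1]}^{\aus})$ and $(P_{1,\bC}^{\aus})$ restricted to the box are literally the same set, with the same objective value on each point; similarly for $(P_{1,[0,1]})$ versus $(P_{1,\bC})$ restricted to the box. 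Chaining these identifications---box-restricted real problem equals box-restricted complex problem on each side, and the two box-restricted complex problems are isomorphic via $vec(\cdot)$ by Theorem \ref{thm:RUID}---yields $val(P_{1,[0,1]}^{\aus}) = val(P_{1,[0,1]})$ and the bijection of solution sets, which is exactly the Corollary.

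I expect no real obstacle here; the conceptual step to get right is simply that Theorem \ref{thm:RUID} must be invoked in its operator-level form ($vec(\cdot)$ intertwines $\cFu$ and $A$ everywhere) rather than only as an equality of optimal values, so that adjoining the extra convex constraint $x\in[0,1]^{M\times M}$ is harmless. The minor bookkeeping to be careful about is that $vec(\cdot)$ is the row-stacking map and matches the block structure of $A$ (its $b$-th block of $M$ coordinates is the $b$-th row of $x$), which is precisely what makes $A\circ vec = vec\circ\cFu$ hold; this was already set up in the construction of $A$ preceding the theorem, so nothing new is needed.
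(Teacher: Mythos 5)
Your overall route is the intended one: the paper gives no separate argument for Corollary \ref{cor:RUID}, and the implicit proof is exactly what you describe---transport the equivalence of Theorem \ref{thm:RUID} through $vec(\cdot)$, observe that $vec(\cdot)$ carries the box $[0,1]^{M\times M}$ onto $[0,1]^{M^2}$ and preserves the real $\ell_1$ norm, and note that adjoining the box constraint is harmless because the correspondence holds at the level of feasible sets, not just optimal values.

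However, one key assertion in your justification is false as stated: it is \emph{not} true that $A \circ vec = vec \circ \cFu$ as operators. The block-diagonal matrix $A$ applied to $vec(x)$ only performs 1D Fourier transforms along the randomly sampled dimension (block by block) followed by selection, whereas $\cFu(x)$ applies the full 2D transform before selection; the two outputs differ by a further 1D Fourier transform along the exhaustively sampled (direct) dimension, applied after selection. This is precisely the content of Appendix \ref{sec:AltProofRUID}: the correct relation is $\cFu = T \circ A \circ V$ as in (\ref{eq:isometry}), where $T = \cF_{r;M,m}\,V_{c;M,m}^{-1}$ of (\ref{eq:Tdef}) is an invertible isometry but emphatically not a mere reshaping. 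What rescues your argument is that you only ever use the \emph{consequence} you drew from the false identity, namely that $\{x : \cFu(x)=\cFu(x_0)\}$ is carried bijectively by $vec(\cdot)$ onto $\{\bx : A\bx = A\bx_0\}$; and that consequence does follow from the true relation, since $T$ is injective and hence $\cFu(x)=\cFu(x_0)$ iff $A\,vec(x)=A\,vec(x_0)$. With the intertwining identity corrected in this way (invertibility of $T$, rather than exact equality of the two measurement maps, is what equates the constraint sets), the remainder of your argument---box preserved by $vec(\cdot)$, and $\| \cdot \|_{1,\bC}$ coinciding with $\| \cdot \|_{1,\bR}$ on real vectors so the objectives match---is correct and yields the Corollary.
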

\begin{lem} ({\bf T. Tao} \cite{tao2003uncertainty})
Suppose that $M$ is prime. Then the $m \times M$ matrix 
$A^{(1)} = \cS_{1} \circ \cF_1$ 
constructed above has its columns in general position in $\bC^m$.
\end{lem}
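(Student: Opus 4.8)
The plan is to recognize that the statement is exactly Chebotarev's theorem on roots of unity --- the algebraic core of Tao's uncertainty principle \cite{tao2003uncertainty} --- applied in the right place, with primality of $M$ being precisely what that theorem needs. First I would unwind the construction. Writing $\omega = e^{2\pi i /M}$, the $m \times M$ matrix $A^{(1)} = \cS_{1} \circ \cF_1$ is obtained from the $M \times M$ discrete Fourier matrix $F = (\omega^{ab})_{0 \le a,b < M}$ (up to the irrelevant sign in the exponent fixed by the Fourier convention) by retaining only the $m$ rows whose indices lie in $\cK = \{k_1,\dots,k_m\}$. Hence its $l$-th column is $\ba^{(l)} = (\omega^{k_i l})_{i=1}^{m} \in \bC^m$ for $l = 0, \dots, M-1$.

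Next I would reduce ``general position'' to a minor condition. By definition the columns are in general position in $\bC^m$ iff no sub-collection of at most $m$ of them is linearly dependent. Since a subfamily of a linearly independent family is again independent, it is enough to show that any $m$ distinct columns $\ba^{(l_1)}, \dots, \ba^{(l_m)}$ are linearly independent; equivalently, that the $m \times m$ matrix $(\omega^{k_i l_j})_{1 \le i,j \le m}$ is nonsingular for every choice $0 \le l_1 < \cdots < l_m < M$. The point is then simply that this matrix is the square submatrix of $F$ cut out by the row set $\cK$ and the column set $\{l_1, \dots, l_m\}$, and Chebotarev's theorem asserts that when $M$ is prime \emph{every} square submatrix of $F$ has nonzero determinant. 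Applying it to these particular row and column sets completes the proof.

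There is essentially no obstacle once the citation to \cite{tao2003uncertainty} is allowed; all the difficulty lives inside Chebotarev's theorem itself, which is not elementary, and note that primality is genuinely needed (for composite $M$ one has singular square submatrices, e.g. from characteristic functions of subgroups). If one wanted a self-contained argument, the work would go there: regard the determinant of a $t \times t$ submatrix of $F$ as a polynomial in $\omega$ with integer coefficients and degree strictly less than $M$, verify it is not the zero polynomial, and then invoke the irreducibility of the $M$-th cyclotomic polynomial --- which holds exactly because $M$ is prime --- so that an integer-coefficient polynomial of degree below that of the minimal polynomial of $\omega$ over $\mathbb{Q}$ cannot vanish at $\omega$ unless it is identically zero. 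But given the attribution in the lemma, the reduction above is the intended proof.
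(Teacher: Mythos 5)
Your reduction is exactly the intended argument: the paper offers no proof beyond the citation, since once one notes that any $m$ chosen columns of $A^{(1)}$ form (up to normalization) a square submatrix of the prime-order DFT matrix with rows indexed by $\cK$, the claim is precisely Chebotarev's theorem as proved in \cite{tao2003uncertainty}, and the ``at most $m$'' clause follows because subfamilies of independent families are independent. Your closing sketch of a self-contained cyclotomic argument glosses the genuinely hard step (nonvanishing of the reduced determinant polynomial), but you correctly flag that this difficulty is what the cited theorem supplies, so the proposal matches the paper's approach.
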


\newcommand{\bw}{{\bf w}}
\begin{cor}
Let $M$ be prime.
Let $\bw_0 \in [0,1]^N$ be a random vector of length $N=M^2$ with exactly $\ell$ 
entries not equal to $0$ or $1$ in each $M$-block.
Let $\bx_0$ be a random vector created by randomly permuting the entries of $\bw_0$
in each $M$-block,
via uniformly-distributed random permutations that are stochastically 
independent from block to block.  

With $A$ the fixed block matrix created above, and $\bx_0$ the random vector described in this Corollary,
the assumptions $(C_A)$ and general position of Corollary \ref{cor:ExactProbMultiBlock} apply. Hence
the probability that the solution $\bx_1$  of the multiblock problem $(P_{1,[0,1]})$ is identical to $\bx_0$
is precisely given by the formula
\[
   \Pr(\{ \bx_0 = \bx_1 \} ) =  Q_{sb}(\ell,m,M; [0,1])^M .
\]
\end{cor}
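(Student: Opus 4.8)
The plan is to verify that the hypotheses of the Corollary on exact multiblock success probabilities (Corollary \ref{cor:ExactProbMultiBlock}, in the guise of Corollary \ref{cor:MBfromSB}) are met, with the common distribution $\nu$ being the law of a single block $(A^{(b)}, x_0^{(b)})$ described here, and then simply read off the formula. Concretely, I would proceed in three steps.

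First, I would establish the general-position hypothesis. By the cited lemma of Tao, since $M$ is prime, the fixed $m \times M$ matrix $A^{(1)} = \cS_1 \circ \cF_1$ has its columns in general position in $\bC^m$. Because $A$ is the block-diagonal matrix whose diagonal blocks are all equal to this one fixed $A^{(1)}$, each block $A^{(b)} = A^{(1)}$ trivially has columns in general position, and this holds surely (there is no randomness in $A$ at all). Second, I would check the exchangeability hypothesis for $x_0$ on each block. By construction, the $b$-th block $x_0^{(b)}$ is obtained by applying a uniformly random permutation to a fixed vector $w_0^{(b)} \in [0,1]^M$ that has exactly $\ell$ coordinates lying strictly inside $(0,1)$; a uniformly random permutation of a fixed vector produces an exchangeable random vector, and it surely has exactly $\ell$ non-constrained entries. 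Moreover the permutations are independent across blocks, so the pairs $(A^{(b)}, x_0^{(b)})$ are i.i.d.\ across $b$ (the matrix part being a degenerate constant, the $x_0$ part being i.i.d.\ exchangeable draws). Thus assumption $(C_A)$ holds conditionally on $A$ for each block, which is exactly what Corollary \ref{cor:MBfromSB} requires.

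Third, I would invoke Theorem \ref{thm-Donoho-Bnd} to identify the single-block success probability: under $(C_A)$ with $A = A^{(1)}$ (columns in general position) and $x_0^{(b)}$ exchangeable with $\ell$ non-constrained entries, the probability that $(P_{1,[0,1]}^{(1)})$ has the unique solution $x_0^{(b)}$ equals $Q_{sb}(\ell, m, M; [0,1])$, independent of all further details. Then Corollary \ref{cor:MBfromSB} (equivalently Lemma \ref{cor:ExactProbMultiBlock} together with the product rule Corollary on success probabilities) gives $\Pr(\{\bx_0 = \bx_1\}) = Q_{sb}(\ell, m, M; [0,1])^B = Q_{sb}(\ell, m, M; [0,1])^M$, using $B = M$ here since there are $M$ blocks of size $M$. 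That is the claimed identity.

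There is essentially no hard step: the proof is a bookkeeping exercise in matching hypotheses. The one point that deserves a sentence of care is that $(C_A)$ as stated wants $A$ \emph{fixed} and $x_0$ \emph{random} within each block, which is exactly the situation here (the matrix is fixed and shared across blocks, the signal is randomly permuted per block); one should note explicitly that the product rule for success probabilities still applies because the \emph{events} $\Omega^{(b)}$ depend only on $(A^{(b)}, x_0^{(b)})$ and hence are independent across $b$ even though the $A^{(b)}$ are identical rather than independently redrawn — the independence comes entirely from the independence of the permutations. Beyond that caveat, the conclusion follows immediately from the quoted results.
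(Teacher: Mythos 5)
Your proposal is correct and follows exactly the route the paper intends: the paper states this corollary without a separate proof precisely because it is the bookkeeping chain you carry out — Tao's lemma (with $M$ prime) gives general position, the uniform per-block permutations give exchangeability and hence $(C_A)$, and Lemma \ref{cor:ExactProbMultiBlock} / Corollary \ref{cor:MBfromSB} with $B=M$ yield $Q_{sb}(\ell,m,M;[0,1])^M$. Your explicit remark that independence of the events $\Omega^{(b)}$ comes from the independent permutations (the identical, deterministic blocks being a degenerate i.i.d.\ sample) is a careful touch consistent with the paper's framework.
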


In consequence, our earlier results for 
block-diagonal undersampling give exact results for success probabilities in anisotropic 
undersampling. Namely, consider $M \times M$ 
images $\ell$ nonzeros thrown down at random
within each column.
Let  $\epsilon_\textit{\aus}(m,M)$ denote the associated 
finite-$N$ phase transition for  exact recovery in
anisotropic undersampling of the object $x_0$
in 2D-Fourier imaging. This is identical to $\eps_{mb}(m,M;M)$. We have:

 \begin{cor} \label{cor:GaussianLemma}
Let $\bx_0$ be the random object constructed in the previous corollary. 
Let $W$ denote  an \emph{i.i.d} Gaussian sensing matrix of size $n \times N$ and
let $\by_0 = W \bx_0$ denote Gaussian undersampled measurements. Define
$$
{\rm ({P}_{1,[0,1]}^W)} \quad \min \| \bx \|_{1,\bR} \quad \text{subject to} \quad W \bx= W \bx_0, \qquad \bx \in [0,1]^{N} .
$$
Let $\epsilon_{W}(n,N)$ denote the associated finite-$N$ phase transition for exact recovery from 
Gaussian undersampling.
            Then, as $M$ increases, 
            the offset between Gaussian and anisotropic undersampling 
phase transitions has the following behavior:
\[
  \epsilon_{W}(mM,M^2) - \epsilon_\textit{\aus}(m,M) = \sqrt{2(1-\delta)} \cdot  \gamma_M + o(\gamma_M) ,
\]
where, as above, $\gamma_M = \sqrt{\frac{2 \log(M)}{M}}$.
\end{cor}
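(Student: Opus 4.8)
The plan is to recognize the dense-Gaussian program $({P}_{1,[0,1]}^W)$ as itself a \emph{single-block} instance of the box-constrained problem covered by Theorem~\ref{thm-Donoho-Bnd}, so that its finite-$N$ phase transition is governed by exactly the same formula as $\esb$; the claimed offset then drops out by subtracting the two asymptotic expansions already supplied by Lemma~\ref{Lemma-PT-MultiBlock}.

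First I would check that an i.i.d.\ Gaussian matrix $W$ of size $n\times N$, with $n=mM$ and $N=M^2$, has columns that are i.i.d.\ (hence exchangeable) and, being $N$ i.i.d.\ continuous vectors in $\bR^n$, almost surely in general position. The random object $\bx_0$ from the previous Corollary has, for \emph{every} outcome of its block-wise random permutations, exactly $k=M\ell$ coordinates lying strictly inside $(0,1)$. Conditioning on $\bx_0$ and applying assumption $(C_x)$ of Theorem~\ref{thm-Donoho-Bnd} with $m\leftarrow n$, $M\leftarrow N$, $\ell\leftarrow k$, the conditional probability of exact recovery in $({P}_{1,[0,1]}^W)$ equals $Q_{sb}(k,n,N;[0,1])$ regardless of the realization of $\bx_0$; averaging over $\bx_0$ changes nothing. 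Hence the success probability defining $\epsilon_{W}(mM,M^2)$ is literally $Q_{sb}(M\ell,mM,M^2;[0,1])$ --- the same function at the same arguments used to define $\esb(m,M;[0,1])$, and with the same single-block target $q^{*}=1/2$ and Probit fit. So $\epsilon_{W}(mM,M^2)=\esb(m,M;[0,1])$ up to the harmless integer rounding in the choice of $k^{*}$.

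Next I would pin down $\epsilon_{\aus}(m,M)$. By Corollary~\ref{cor:RUID} the anisotropic program $(P_{1,[0,1]}^{\aus})$ and the block-diagonal program $(P_{1,[0,1]})$ built from $A^{(1)}=\cS_{1}\circ\cF_{1}$ have isomorphic solution sets under $vec()$, so exact recovery holds for one exactly when it holds for the other; by the previous Corollary the common success probability is exactly $Q_{sb}(\ell,m,M;[0,1])^{M}=Q_{mb}(M\ell,mM,M^2;[0,1])$. Therefore $\epsilon_{\aus}(m,M)=\emb(m,M,M;[0,1])$, again up to rounding. With $\delta=m/M=n/N\in(1/2,1)$ and $B=M$, Lemma~\ref{Lemma-PT-MultiBlock} then gives $\esb(m,M;[0,1])=\easy(\delta)+O(1/M)$ and $\emb(m,M,M;[0,1])=\easy(\delta)-\sqrt{2(1-\delta)}\,\gamma_M+o(\gamma_M)$. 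Since $\gamma_M=\sqrt{2\log M/M}$ dominates $1/M$, subtracting the two yields
\[
  \epsilon_{W}(mM,M^2)-\epsilon_{\aus}(m,M)=\esb(m,M;[0,1])-\emb(m,M,M;[0,1])=\sqrt{2(1-\delta)}\,\gamma_M+o(\gamma_M),
\]
which is the assertion.

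The one step that needs genuine care is the reduction in the second paragraph: one must be confident that the box-constrained Gaussian problem really falls under the universality umbrella of Theorem~\ref{thm-Donoho-Bnd}. The delicate points are that the exchangeability / general-position randomness is being carried by $W$ (condition $(C_x)$) rather than by $\bx_0$, and that $\bx_0$ deterministically carries the same number $M\ell$ of unconstrained coordinates for every outcome of the permutations, so that the conditional success probability is a constant that can be read off Theorem~\ref{thm-Donoho-Bnd} verbatim. Everything past that point is bookkeeping with the expansions already established in Lemma~\ref{Lemma-PT-MultiBlock}.
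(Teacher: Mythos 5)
Your proposal is correct and follows essentially the route the paper intends: identify the dense Gaussian box-constrained problem as a single-block instance of Theorem \ref{thm-Donoho-Bnd} under condition $(C_x)$ (so $\epsilon_W(mM,M^2)$ is the single-block transition $\esb$ at total size $n=mM$, $N=M^2$), identify $\epsilon_{\aus}(m,M)$ with $\emb(m,M,M;[0,1])$ via Corollary \ref{cor:RUID} and the preceding corollary, and subtract the two expansions of Lemma \ref{Lemma-PT-MultiBlock}, noting the $O(1/M)$ single-block error is $o(\gamma_M)$. Your care about $(C_x)$ (exchangeable, a.s.\ general-position Gaussian columns, with $\bx_0$ carrying exactly $M\ell$ unconstrained entries in every realization) is exactly the point the paper relies on implicitly.
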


\section{Experimental Approach}

The preceding section precisely locates 
the finite-N phase transition from anisotropic undersampling
in one specific case.  The finite-$N$ phase transition 
was shown theoretically to be displaced downwards 
from the asymptotic Gaussian phase transition 
by a definite amount, which depends on $\delta$ and $M$.

This formula can be generalized to predict behavior
of finite-N phase transitions across a wide range 
of situations,  including general $d$-dimensional anisotropic sampling and
encompassing coefficients that are real, complex and  hypercomplex.
In all these cases,
the formula predicts that the 
phase transition for anisotropic undersampling is substantially displaced
from the phase transition for Gaussian undersampling, by an amount
that matters in practically-important problem sizes.
The scaling of this offset with $M$ and $B$ 
is the same in these cases, and the dependence 
on $\delta$ involves in a very particular way
the underlying coefficient set $\bX$. 

To evaluate the accuracy of these predictions, we developed a framework for massive
empirical simulation, which ultimately involved millions of computational
experiments. Empirical results are more informative for applications
than mathematical proofs would be, as they concern 
behavior in situations of the scale
and type that one might actually encounter, instead of the very large problem sizes 
typically assumed by
asymptotic mathematical analysis, which happen
beyond the reach of modern computers and modern NMR experimentation.
Our computational framework is consistent with the approach
developed in \cite{DoTa09a, MoJaGaDo2013}.

 Though our computational setup allows for an arbitrary number of blocks, 
 in this paper we present results  only for the case of $B = M$,
 which as we have seen corresponds to undersampled 2D-Fourier imaging.
 
\subsection{Predictions of Phase Transition Location}

Our formulas for the finite-$N$ phase transition 
location in block-diagonal undersampling
 will be stated in terms of deviation from
the asymptotic phase transition for Gaussian undersampling.  We first make clear what this means,
and then we state our formulas.

\smallskip

\noindent
{\it Formulas for Gaussian Phase Transition.}  
We extend the discussion of Gaussian undersampling 
from Corollary  \ref{cor:GaussianLemma}, to
cover situations of greater generality. Let  the  $n \times N$ random measurement
matrix  $W$  have \emph{i.i.d} $N(0,1)$ entries \footnote{Exactly what this means can be spelled out more precisely in the case
of quaternionic or hypercomplex entries, although we do not pause to do so here.}. For an object $x_0 \in \bX^N$, 
we obtain $n$  measurements $\by_0 = W \bx_0$. We attempt reconstruction via
\[
(P^W_{1,\bX})\qquad  \min \  \| \bx \|_{1,\bX}  \quad  \text{subject to} \quad W \bx = \by_0, \quad \bx \in \bX^N.
\]
To predict success or failure, we take an asymptotic approach.
Consider a sequence of  problems indexed by $N \goto \infty$ with $n/N \goto \delta \in (0,1)$,
and in each  problem instance let 
 $x_0$ be $k_N$-sparse, where $k_N/N \goto  \epsilon \in (0,1)$.
Let $\bx_1$ denote the solution of $({P}^W_{1,\bX})$ with problem instance $(W,\bx_0)$.
The existing literature on compressed sensing gives formulas for the critical 
sparsity level { $\easy(\delta ; \bX)$ 
such that, as $N \goto \infty$,
\[
  \Pr \{ \bx_1 = \bx_0 \} \goto  \left \{ 
   \begin{array}{ l l }
       1 & \epsilon < \easy(\delta ; \bX) \\
       0 & \epsilon > \easy(\delta ; \bX) 
   \end{array} \right . .
\]
For different choices of $\bX$ one can find such formulas in \cite{DoTa05,AMP,donoho-accurate,Donoho1,Amelunxen14}.
For example we have already used above the formula $\easy(\delta; [0,1]) = (2 \delta -1)_+$.
}
\smallskip

\noindent
{\it Formula for regular sparsity.}
Now return to the block-diagonal undersampling case, where $N=MB$ and 
the measurement matrix $A$ is block-diagonal, made from $B$ different $m \times M$
blocks. We can partition the underlying vector $x_0 \in \bX^N$ into $B$ blocks 
of size $M$ consistent with those of $A$.
We say that $x_0$ has {\it regular sparsity} if it has the same number, $\ell$ say,
of nonzeros in each block. We further assume that $x_0$ is random, with
a block-exchangeable distribution.
In this  setting our  formula states that observed solution to $(P_{1,\bX})$ will exhibit, as a function of 
$(\bX,m,M,B)$\footnote{When $B=M$, we use $\epreg(m,M;\bX)$ for the purpose of brevity.}, a finite-$N$ phase transition $\epreg(m,M,B;{\bX})$. Under the assumption that 
$M \goto \infty$ and $m/M \goto \delta \in (0,1)$, the predicted offset 
of the anisotropic undersampling phase transition 
$\epreg(m,M,B; \bX)$ `below'  the asymptotic 
transition $\easy(m/M)$ obeys
\begin{eqnarray}
\frac{\easy(\delta) - \epreg(m,M,B)}{\easy(\delta)} = \alpha \cdot \eta(\delta)   \cdot \gamma + O(\gamma^2),
\label{eq:PTforTensor}
\end{eqnarray}
where $\delta = n/N$,  $\gamma = \gamma_{M,B}  = \sqrt{2 \log(B)/M}$, 
$\alpha = \alpha_\bX$ is a constant given in Table \ref{table-alpha-beta} below, 
and 
$$ \label{eq:DisplacePT}
\eta(\delta ; \bX) = 
\left\{
{\arraycolsep=1.4pt\def\arraystretch{2.2}
\begin{array}{lll}
{\easy(\delta)}^{-1} (1 - {\easy}(\delta))^{\frac{1}{2}},  & \bX = [0,1] & \delta \in (\frac{1}{2},1]\\
\delta^{-\frac{1}{2}}  (1 - {\easy}(\delta))^{\frac{1}{2}}, & \bX = \bR_+ & \delta \in (0,1] \\
\delta^{-\frac{1}{2}} ,& \bX \in \{\bR,\bC\} & \delta \in (0,1]
\end{array}
}
\right. .
$$

%

In the above formulas, $\easy(\delta) = \easy(\delta; \bX)$
denotes the vertical location of the 
asymptotic Gaussian phase transition for 
the indicated coefficient set $\bX$. 
The specific forms of the offset shapes
$\eta$ used here have some precedent\footnote{
{
More specifically, Donoho and Tanner
in \cite{DoTa10} proved finite-$N$ bounds on the probability 
of failure, and their bounds involve a vertical offset in the $(\delta,\eps)$
plane of finite-$N$ iso-probability contours away from
the corresponding large-$N$ phase-transition location.
In the case {$\bX=\bR$}, their offset is proportional to our
offset function  $\eta(\delta; \bR) = \delta^{-1/2}$. 
See Appendix \ref{app:exp-bound} for more details.}} in \cite{DoTa10}.

\noindent
{\it Modeling the second order effect.}
When problem sizes are very small (e.g., $M=B=100$), we go beyond
equation (\ref{eq:PTforTensor}) by
including a second-order term:
\begin{eqnarray}
\frac{\easy(\delta) - \epreg(m,M,B)}{\easy(\delta)} = \alpha \cdot \eta(\delta)   \cdot \gamma + \beta \cdot \zeta(\delta) \cdot \gamma^{2} + o(\gamma^2),
\label{eq:PTforTensor-SO}
\end{eqnarray}
where $\beta = \beta_{\bX}$ is a constant given in Table \ref{table-alpha-beta}, and 
$$ \label{eq:DisplacePT-SO}
\zeta(\delta ; \bX) = 
\left\{
{\arraycolsep=1.4pt\def\arraystretch{2.2}
\begin{array}{lll}
1,  & \bX = [0,1] & \delta \in (\frac{1}{2},1]\\
\eta(\delta;\bX),& \bX \in \{\bR_+,\bR,\bC\}  & \delta \in (0,1] \\
\end{array}
}
\right. .
$$
The additional term, quadratic in $\gamma$, leads to improved accuracy in phase transition
locations, as will be evident from the plots of Section \ref{sec:Verify}.

\begin{figure}[t]
\centering
\includegraphics[height=2.054in]{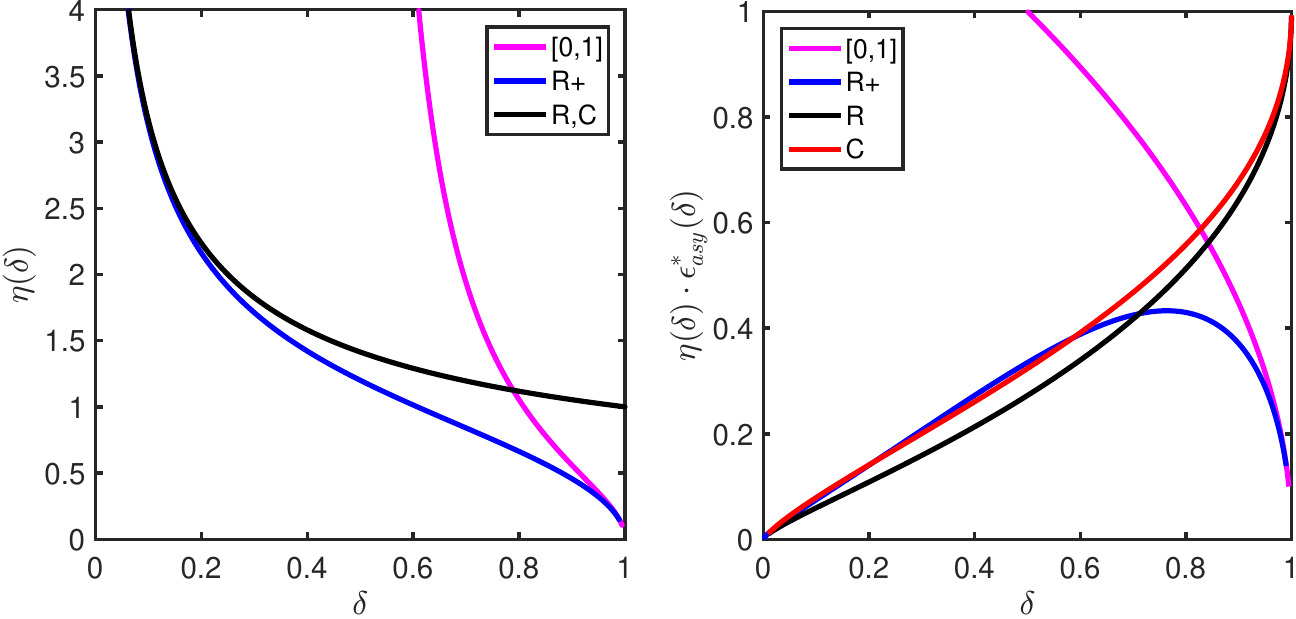}
\caption{Slope functions in the finite-N prediction formula (\ref{eq:PTforTensor}): (left) $\eta(\delta)$, (right) { $\eta(\delta) \cdot  \easy(\delta)$}}
\label{figure-fdelta}
\end{figure}

\begin{table*}[htbp]
\begin{center}
\caption{Values for $\alpha_{\bX}$ and $\beta_{\bX}$ used in (\ref{eq:PTforTensor-SO})}
\label{table-alpha-beta}

\smallskip
\begin{tabular*}{.5\linewidth}{@{\extracolsep{\fill}} c | cccc}
\hline
$\bX$	 	& $[0, 1]$ & $\bR_+$ & $\bR$ & $\bC$ \\
\hline
$\alpha_\bX$	& $1$ & $1$       & $1$& $2/3$ \\ 
$\beta_{\bX}$	& 1/2 & $-1/3$		& $-1/2$   &  $-1/3$ 
\end{tabular*}
\end{center}
\end{table*}

%
%
%
%
%

\subsection{Experimental Procedure} 
For each quadruple $(k, m, M, B)$, and each relevant coefficient ground set 
$\bX$ 
we run $S$ Monte Carlo trials.  
In each experiment, we generate a pseudo-random $k$-sparse object $\bx_0 \in \bX^{N}$ according to
the regularity constraint {\tt rc}. We take undersampled linear measurements, $\by_0 = A \bx_0$ where the $B$ blocks of matrix $A$
are each of size $m \times M$ and generated according to a certain random or deterministic sequence. 
$(y, A)$ provides an instance of $(P_{1,\bX})$ that we supply to 
a convex optimization solver to obtain solution $\bx_1$. We then compare $\bx_0$ with $\bx_1$. If the relative error $\| \bx_0 - \bx_1\|_2 / \| \bx_0\|_2 < 0.001$,
we declare the reconstruction a success;
otherwise we declare it a failure. 
We thus obtain $S$ binary measurements $Y_i$ indicating success or failure of reconstruction. 
The empirical success probability is then calculated as
$$
\hat{\pi}(k | A, \bX  ) = \frac{\#\text{successes}}{\#\text{trials}} = S^{-1} \sum_{i=1}^{S} Y_i .
$$
Our raw dataset contains these empirical success fractions, at each combination of 
$(k, m, M, B, S)$ we explored.

\subsection{Modeling the Quantal Response Function}
In biological assessment, the quantal response measures the probability of organism 
failure (e.g., death) as a function of drug dose. In the context of compressed sensing, 
the quantal response gives the probability of failure in reconstruction as a function of 
the `complexity dose', i.e. the number of nonzeros in the vector $\bx_0$.
This of course is measured by sparsity ratio $\epsilon$. 
It is shown in \cite{DoTa09a} that a Probit model adequately describes the 
quantal response for Gaussian measurement matrices.   

For  block-diagonal matrices with block-regular sparsity, 
the failure probability is expected to follow the
generalized extreme value distribution, as it involves the product of failure probabilities of individual blocks.
Extreme value theory shows that for large $B$, the 
Complementary Log Log (CLL) distribution  is an appropriate model for quantal response. 
Given certain problem size $(M,B)$, 
that theory states that the expected fractional success rate 
can be approximated by:
\begin{equation}
\pi(\eps| \delta) = \Pr \{ \textit{Success}|\delta,\eps \}  = 1 - \exp \left\{ - \exp (a(\delta) + b(\delta) \cdot \epsilon) \right\},
\end{equation} 
for certain underlying parameters $a=a(\delta)$, and $b=b(\delta)$.
We then define the empirical phase transition location, at each fixed $\delta$, as the sparsity level 
$\epsilon$ at which the success probability $\pi = 1-1/e$ (i.e., 63.2\%).

\subsection{Studying Very Large Problem Sizes}

In the results section we compare models 
(\ref{eq:PTforTensor}) and (\ref{eq:PTforTensor-SO})
to data.
clear understanding of models explaining 
offsets  of order $\gamma$ and $\gamma^2$. This 
required data from experiments conducted at a
range of problem sizes \-- in particular large problem sizes.
Actually, plausible sizes can easily led to computational difficulties.
In a  2D anisotropic undersampling problem on a $768 \times 768$ Fourier grid,
we would be considering block-diagonal  undersampling with parameters
 $M=B = 768$, in which case $N = 768^2 = 589824$.
General-purpose convex optimization solvers such as CVX 
are not really appropriate for solving such large problems.

Nevertheless, we have been able to get precise information about
the behavior of $(P_{1,\bX})$ on block-diagonal problems of such large sizes.
The key comes in applying  Lemma \ref{cor:ExactProbMultiBlock} ,
which allows us to
infer success probabilities for problems of size $N = B \cdot M$, once we
 know them for problems of size $M$. In the cases we are studying, $B=M$, so  $M = \sqrt{N}$ and
 we can use computationally modest resources (denominated in terms
 of $\sqrt{N}$)  to study very large-$N$ problems that 
would ordinarily require massive investments of computational resources.
 

\newcommand{\cQ}{{\cal Q}}

Let $Q_{mb}(k, n, N ; \nu, \bX) $ denote the probability of success in the multiblock optimization problem
$(P_{1,\bX})$ at given $k = B \cdot \ell$, $n = B m$ and $N = B M$, where the component
subproblems are \emph{i.i.d} according to a fixed distribution $\nu$.
Let $Q_{sb} = Q_{sb}(\ell,m,M; \nu , \bX)$ denote the probability of success in a component single-block problem.
 Lemma \ref{cor:ExactProbMultiBlock} gives us the equivalence:
\[
Q_{mb}(k,n,N)  \leq q^*  \Leftrightarrow  Q_{sb}(\ell,m,M)  \leq  (q^*)^{1/B}.
\]
At first blush, a hypothesis on $Q_{mb}$---such as the finite-$N$ phase transition---would
seem to require evidence from trials 
in which the multiblock problem $(P_{1,\bX})$ of total size $N = B\cdot M$ gets solved.
But we have just shown that such a hypothesis on $Q_{mb}$ is equivalent to one on $Q_{sb}$.
We get information about $Q_{sb}$ by solving random instances of
a single-block problem  of size $M$.
 Suppose $k = B \cdot \ell$ and $N = B M$.
 Then the hypothesis that  $\epsilon_{mb}^* <  k/N $ is equivalent to
$Q_{mb}( k , n , N) < q^*$, which is equivalent to 
 $Q_{sb}( \ell , m , M) < (q^*)^{1/B}$. So we can indeed use single-block
problem realizations to shed light on $\epsilon_{mb}^*$.

Generate $S$  independent problem realizations $(A^{(s)},x^{(s)}_0)$, 
each one a single-block
problem instance with size parameters $\ell, m , M$.
Solve each realization in turn and record the binary success indicators
 $X_s = 1_{\{x_1^{(s)} = x_0^{(s)}\}}$. 
 These are Bernoulli random variables at some common but
 unknown success probability, $\pi$, say.
Let $Y_s = 1-X_s$ denote the indicator of failure.
Calculate  the mean failure rate
 $\bar{Y}  = S^{-1} \sum_{s=1}^S Y_s$.

We propose the following statistical test of
$H_0:  (1-\pi)^B \leq  q^*$ against $H_1:  (1-\pi)^B > q^*$.
Fix $\alpha > 0$ small (e.g. $\alpha = 1/20$), and let $z_{1-\alpha/2}$ denote the usual
$1 -\alpha/2$ quantile of the Normal distribution, 
so that $z_{.975} \approx 1.96$.
Define $\mu = \mu_B = \log(1/q^*)/B$.
Reject the hypothesis $H_0$ if the failure fraction is high:
\[
      \bar{Y} >   \mu + z_{1-\alpha/2}  \sqrt{\frac{\mu}{S}}.
\]
Accept $H_0$  if the fraction of failures is  low:
\[
      \bar{Y} <  \mu - z_{1-\alpha/2}  \sqrt{\frac{\mu}{S}}.
\]
Make no decision otherwise.

{\bf Derivation:}
Let $q_B = 1- (q^*)^{1/B}$, and suppose our variables were distributed as 
$X_s \sim \textrm{Ber}((q^*)^{1/B})$, i.e., just on the sharp edge of the asymptotic
phase transition at problem size $B$. Then
$Y_s \sim \textrm{Ber}(q_B)$. Let $T = \sum_{s=1}^S Y_s$;
then $T \sim_\textrm{approx} \textrm{Poi}(\lambda)$, where $\lambda = S \cdot \mu$.
By normal approximation to the binomial,
 when $\lambda$ is large, 
$T \sim_\textrm{approx} N(\lambda,\lambda)$.
Consequently,
\[
    \Pr \{ T \in [ \lambda - z_{1-\alpha/2} \sqrt{\lambda} , \lambda + z_{1-\alpha/2} \sqrt{\lambda} ] \} \approx 1-\alpha ,
\]
where the approximation gets increasingly good as $\lambda \rightarrow \infty$. The rule we proposed above then follows.

Another way to write the rule sets  $T = S \cdot \bar{Y}$. Then we can decide to reject/accept just in case
\[
   \bar{Y}  \not\in  \mu \cdot  \left( 1 \pm  \frac{z_{1-\alpha/2}}{\sqrt{S \cdot  \mu}} \right).
\]
The probability of mistaken rejection is approximately $\alpha$.


\section{Results}

\subsection{Data collection}
To efficiently generate the quantal response data for various ensembles, we have developed and used
software package Clusterjob (CJ) \cite{clusterjob} \-- a collection of Perl scripts for automating reproducibility and 
hassle-free submission of massive computational jobs to clusters. 
Our computational jobs have mainly run on three different clusters at Stanford, namely \verb+sherlock+, \verb+solomon+, and \verb+proclus+. The optimization solvers used include \verb+ASP+\cite{ASP,BPdual}, \verb+CVX+ \cite{CVX}, and {\tt MOSEK}\cite{Mosek}. It is worth mentioning that software package CVX uses SDPT3 and SEDUMI as its main optimization solvers. 
Our dataset currently includes $29$ million rows, which are the results of nearly 35 million Monte Carlo runs for various problem sizes, and ensembles including \verb+RBUSE+, \verb+DBUSE+, \verb+RBPFT+, etc. For experiments involving smaller problem sizes, one row of data contains information such as the probability of successful reconstruction and error in reconstruction for a particular quadruple $(\ell,m,M,B)$ in the phase space. For data of larger problem sizes, one row contains information such as error in reconstruction and a binary number indicating success or failure for a particular triple $(\ell,m,M)$.

\subsection{Verifying predictions} \label{sec:Verify}

Figures \ref{fig-RBUSE-bnd} through \ref{fig-RBUSE-C-error} show the comparison of experimental phase transition data against the first-order and second-order predictions for the four different coefficient sets $\bX \in \{[0,1], \bR_+, \bR, \bC \}$. As an example, Figure \ref{fig-RBUSE-bnd} shows the empirical offset from the asymptotic phase transition 
location and the corresponding predictions for the case $\bX = [0,1]$, 
for which precise and mathematically rigorous results 
were derived in Section 4. 
In all these cases, the match between the predictions and data is quite good. 
The figures also show that our second-order correction terms improve the predictions of the phase transition location \-- especially for smaller problem sizes.



\begin{figure}[htbp]
\centering
\includegraphics[width=2.5in,angle=0]
{}
\caption{Offset $\easy - \epreg(m,M)$ versus $\gamma_M$ for RBUSE ensemble and $\bX = [0,1]$ at $\delta = m/M = 3/4$. Problem sizes: $M = 48$, $96$ and $192$. The green and blue curves show the predictions with and without considering the second-order effects, respectively. The plus symbol $+$ locates the $+ 2 s.e.$ confidence bar and minus symbol `-' locates the $- 2 s.e.$ limit.}
\label{fig-RBUSE-bnd}
\end{figure}

\begin{figure}[htbp]
\centering
\begin{tabular}{cc}
\includegraphics[width=2.5in,angle=0]
{} &
\includegraphics[width=2.5in,angle=0]
{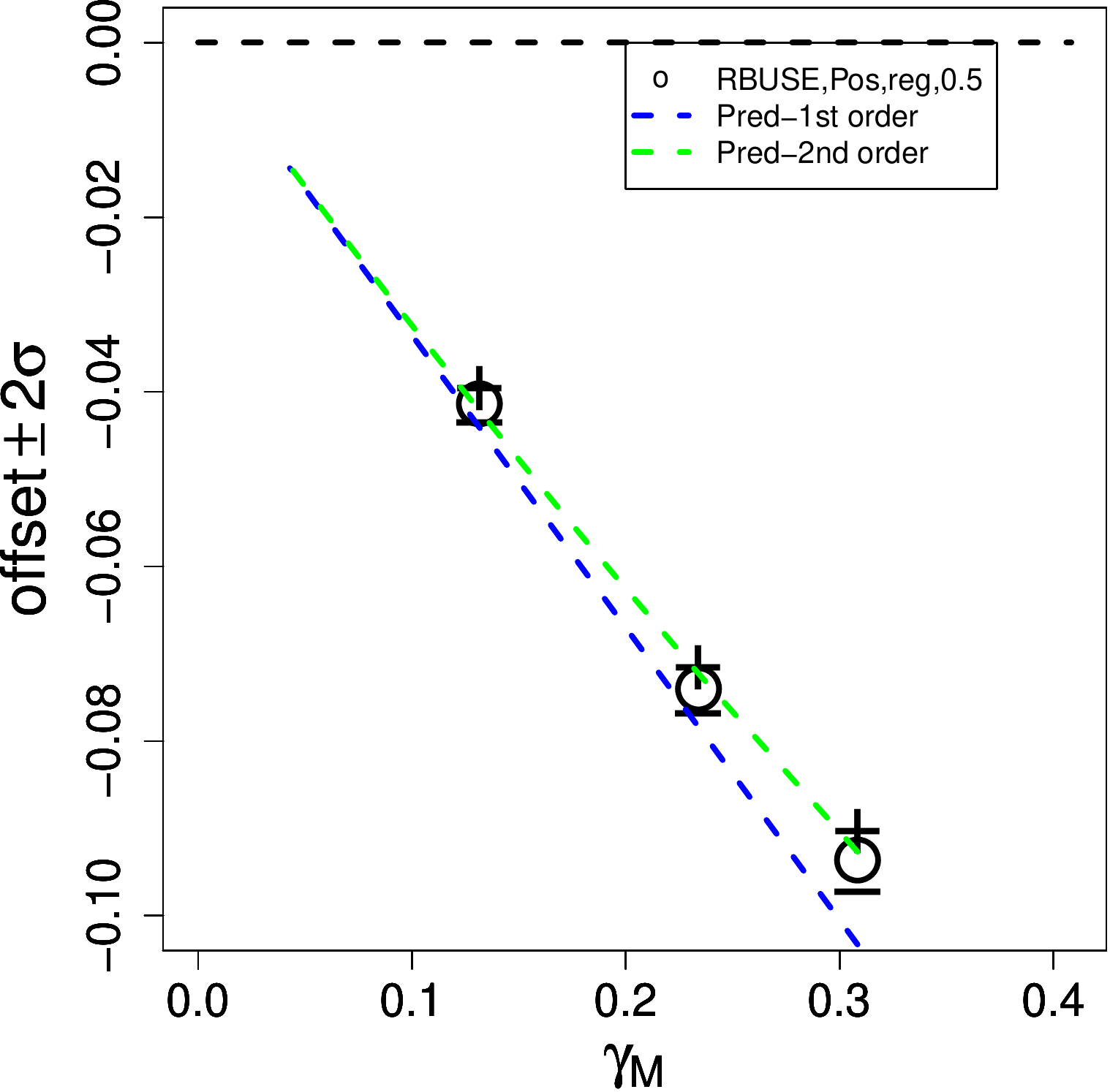} 
\end{tabular}
\caption{Offset $\easy - \epreg(m,{M})$ versus $\gamma_M$ for RBUSE ensemble and $\bX = \bR_+$  at $\delta = 1/2$ (left panel) and $\delta = 1/4$ (right panel). Problem sizes: $M = 96,192,480$ and $768$. The green and blue curves show the predictions with and without using the second-order term, respectively. The plus symbol $+$ locates the $+ 2 s.e.$ confidence bar and minus symbol `-' locates the $- 2 s.e.$ limit.}
\label{fig-RBUSE-pos-errorbars}
\end{figure}

\clearpage
\subsubsection*{+ Positive Coefficients}
\begin{figure}[htbp]
\centering
\begin{tabular}{cc}
\includegraphics[width=2.5in,angle=0]
{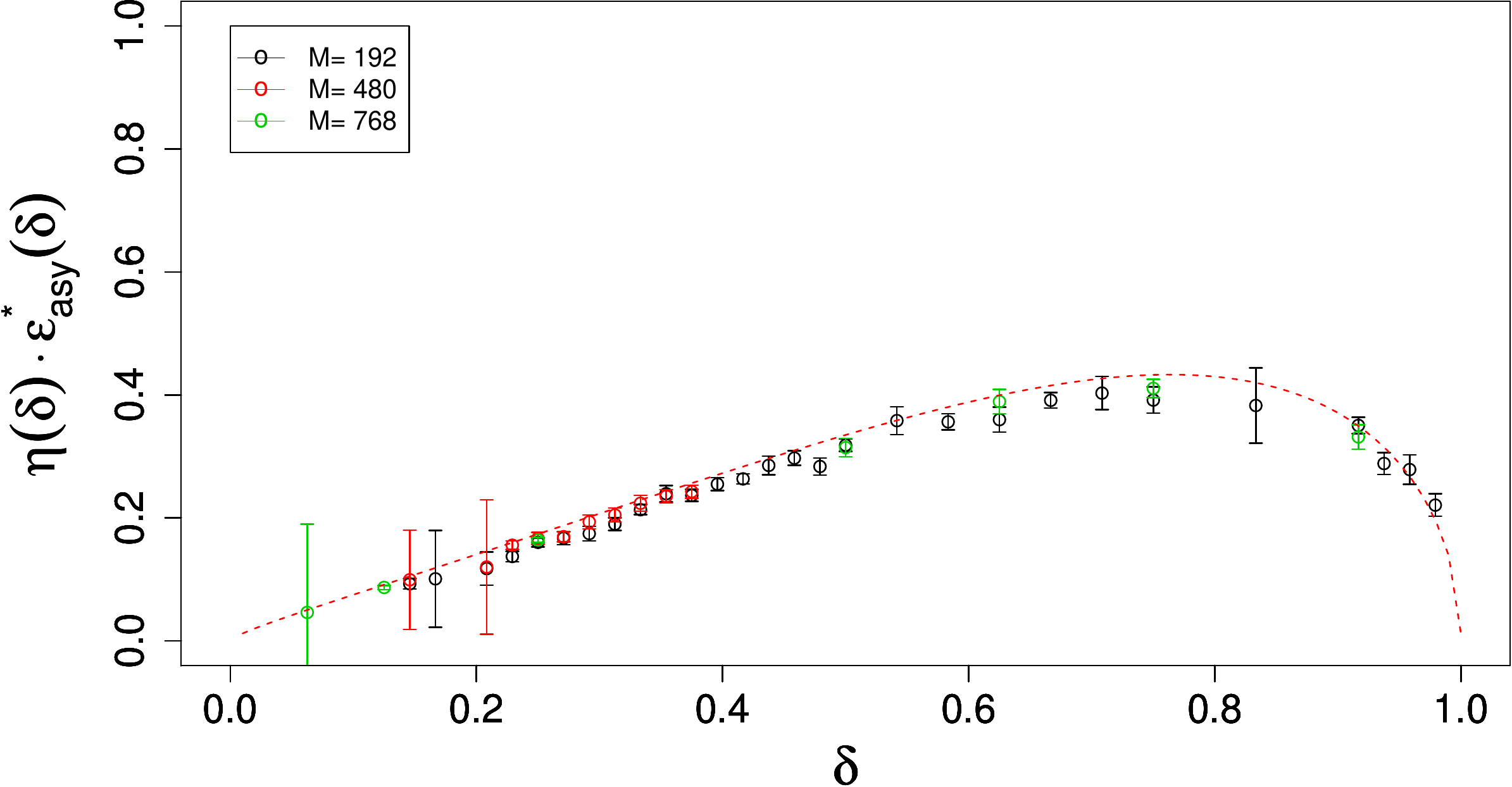} &
\includegraphics[width=2.5in,angle=0]
{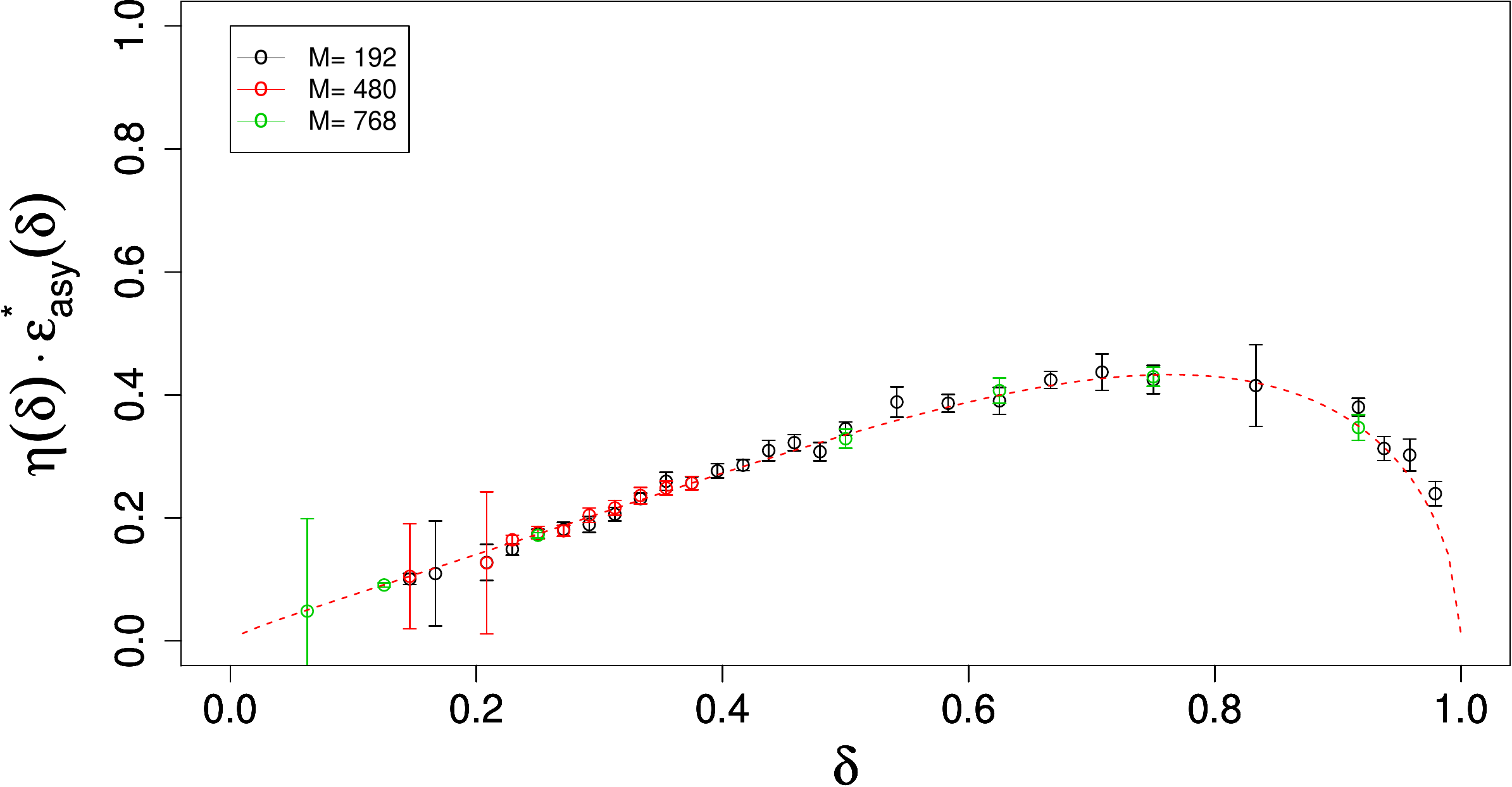} 
\end{tabular}
\caption{The ratio of offset  $\easy - \epreg(m,{M})$ to the first-order coefficient $\alpha \gamma_M$ (left panel) and  to the second-order coefficient $(\alpha \gamma_M+ \beta \gamma_M^2)$ (right panel) versus undersampling fraction $\delta$.  Here, RBUSE ensemble and $\bX = \bR_+$ coefficient set. Problem sizes: $M =192,480$ and $768$. The red dashed curve shows the predicted curves $\eta(\delta) \cdot \easy(\delta)$.}
\label{fig-RBUSE-Pos-fDeltaEpsilon}
\end{figure}

\begin{figure}[htbp]
\centering
\begin{tabular}{cc}
\includegraphics[width=2.5in,angle=0]
{} &
\includegraphics[width=2.5in,angle=0]
{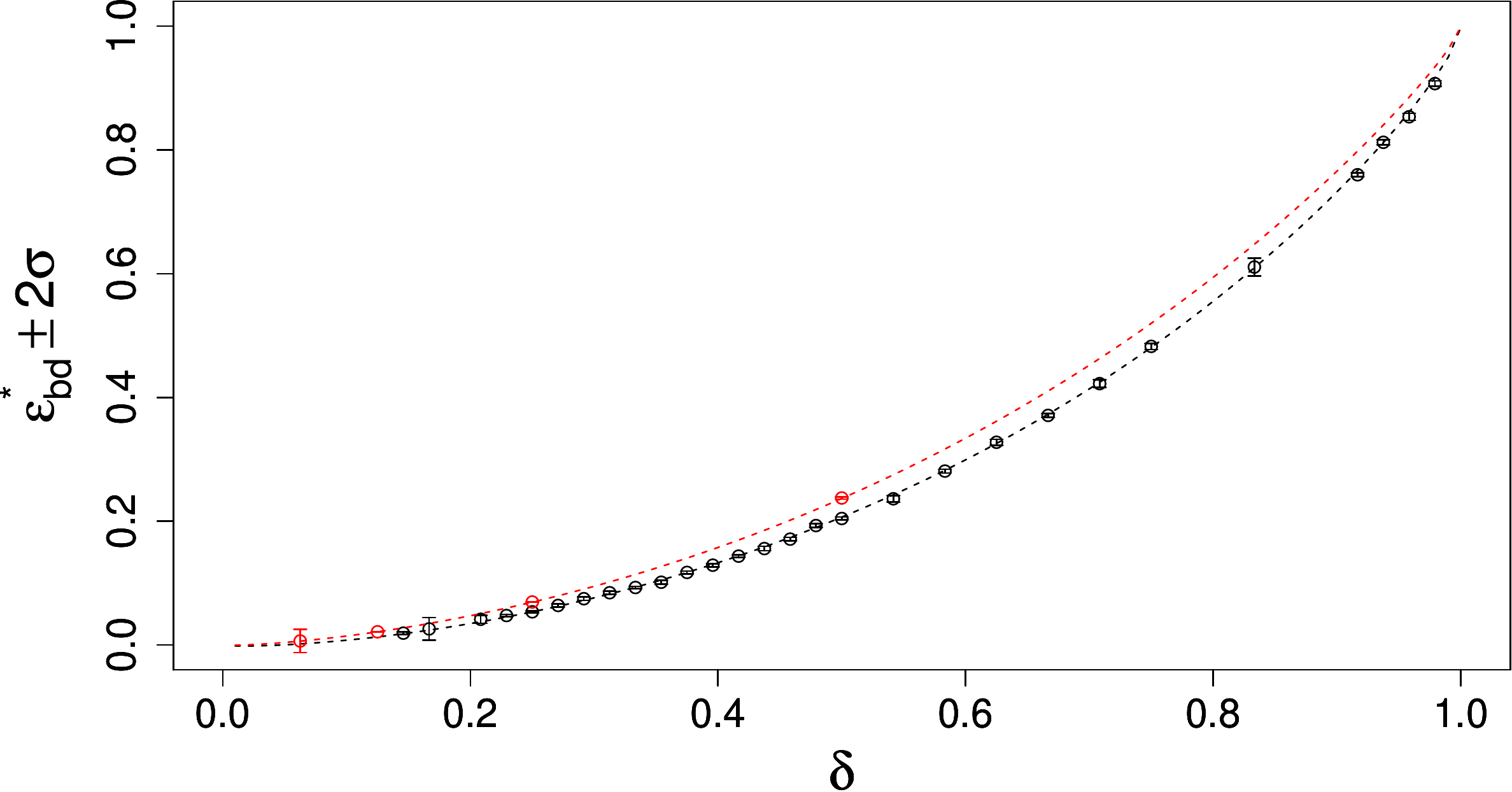}
\end{tabular}
\caption{Experimental data against first-order (left), and second-order (right) predictions of phase transition for $\bX = \bR_+$. Problem sizes: $M =192,768$. The circles show data and the dashed lines show predictions.}
\label{fig-RBUSE-R-fullPT}
\end{figure}

\begin{figure}[htbp]
\centering
\begin{tabular}{cc}
\includegraphics[width=2.5in,angle=0]
{} &
\includegraphics[width=2.5in,angle=0]
{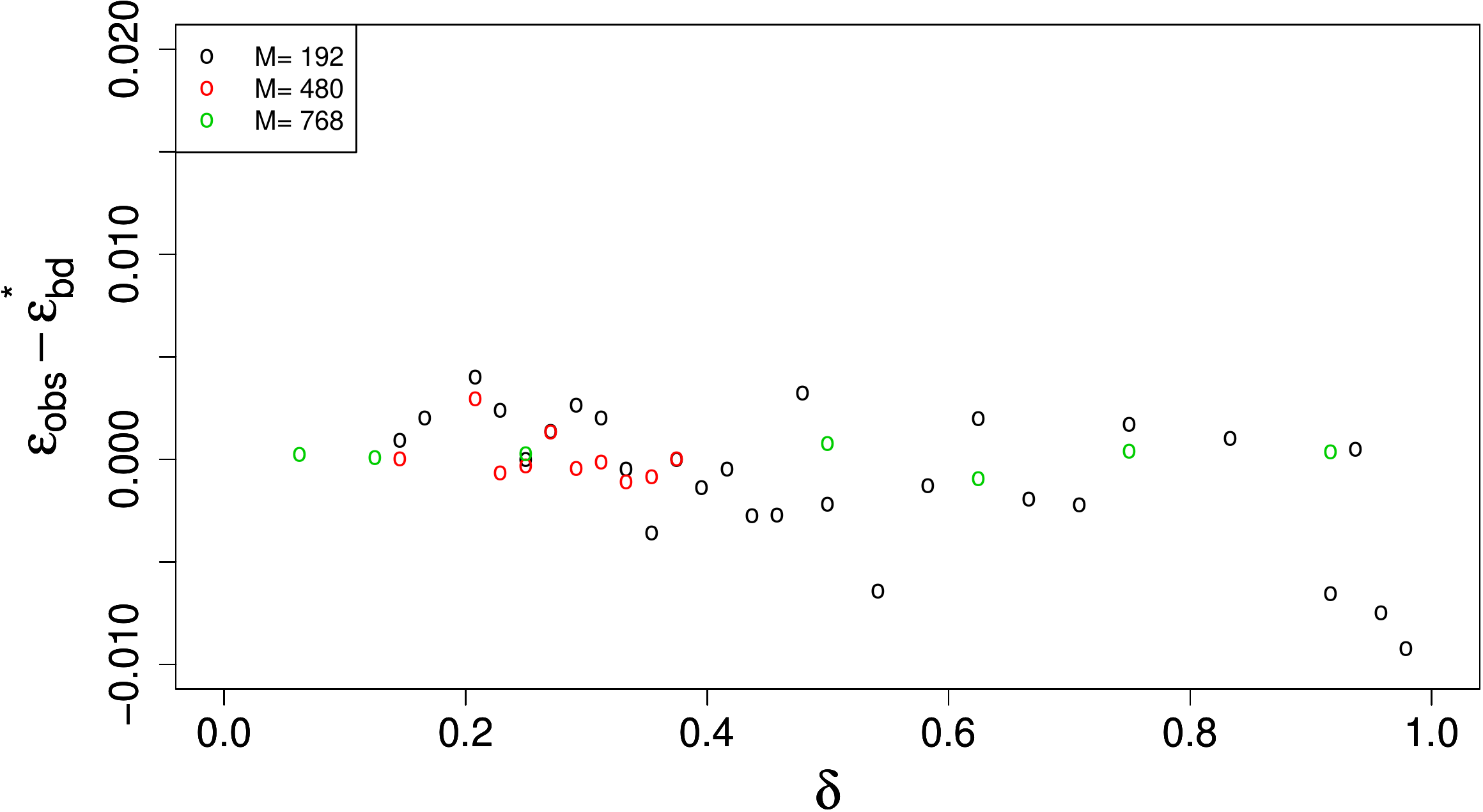}
\end{tabular}
\caption{Difference of predicted and experimental phase transition location for $\bX = \bR_+$
using first-order (left), and second-order (right) predictive models. Problem sizes: $M =192,480$ and $768$.
Residuals are larger near $\delta \approx 1$, 
the residuals at $M=768$ and large $\delta$ are noticeably smaller than 
those at $M=192$.}
\label{fig-RBUSE-Pos-error}
\end{figure}


\clearpage
\subsubsection*{+ Real Coefficients}

\begin{figure}[htp]
\centering
\begin{tabular}{cc}
\includegraphics[width=2.5in,angle=0]
{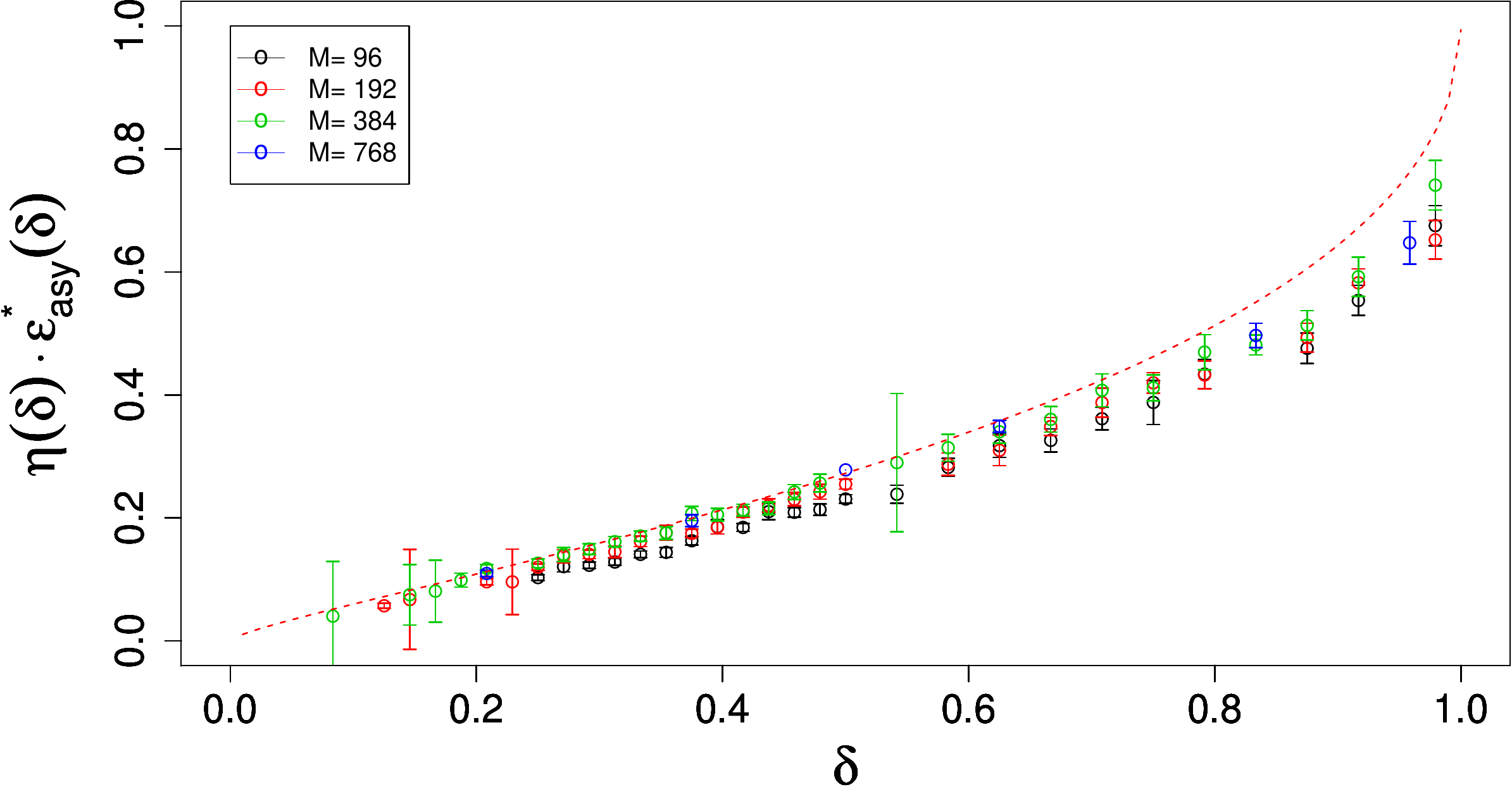} &
\includegraphics[width=2.5in,angle=0]
{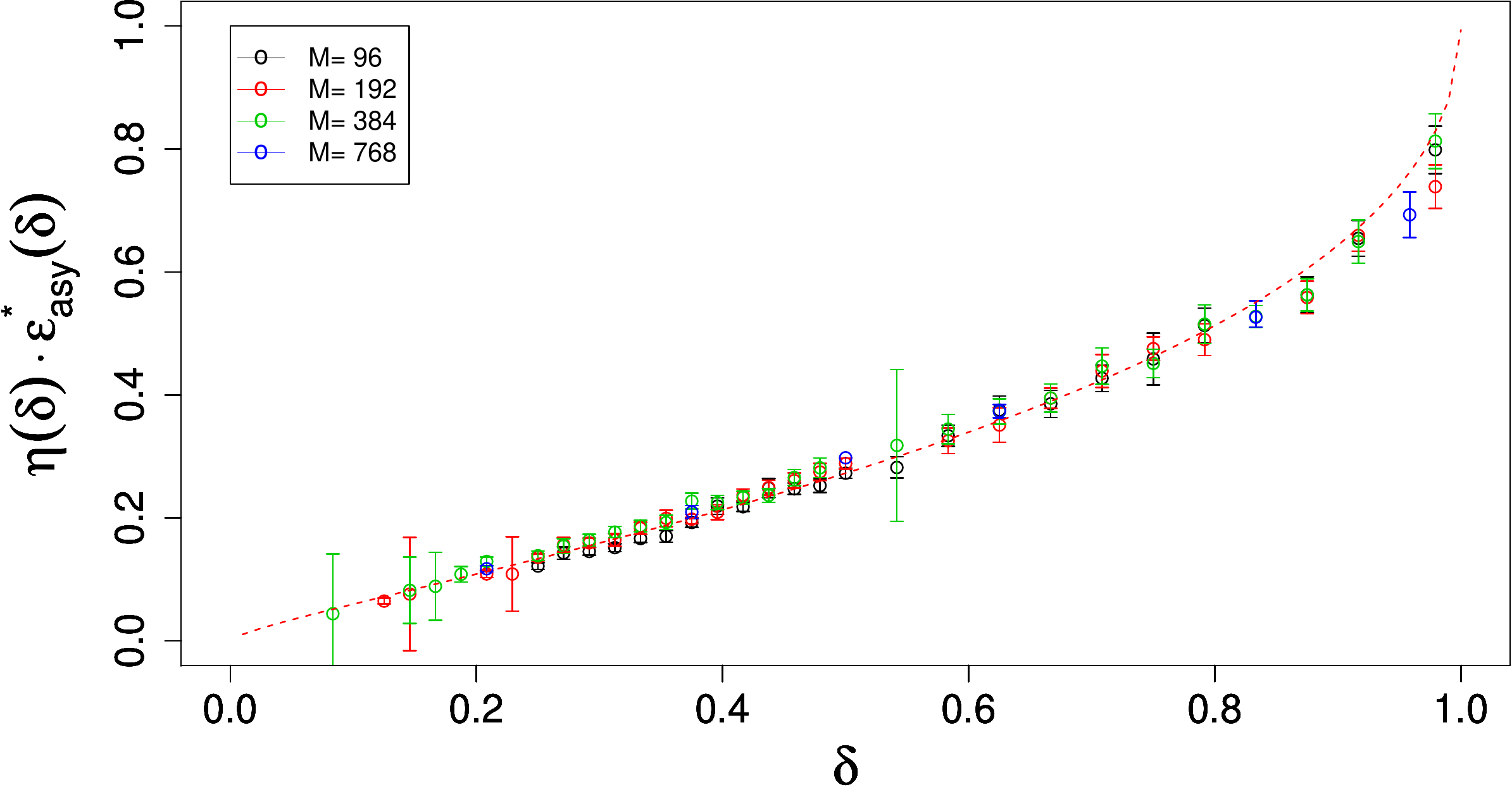} 
\end{tabular}
\caption{The ratio of offset  $\easy - \epreg(m,{M})$ to $\alpha \gamma_M$ (left panel-first order) and  $(\alpha \gamma_M+ \beta \gamma_M^2)$ (right panel-second order) versus undersampling $\delta$ for RBUSE ensemble and $\bX = \bR$. Problem sizes $M =96,192,384$ and $768$. The red dashed curve shows the predicted curves $\eta(\delta) \cdot \easy(\delta)$.}
\label{fig-RBUSE-R-fDeltaEpsilon}
\end{figure}

\begin{figure}[htbp]
\centering
\begin{tabular}{cc}
\includegraphics[width=2.5in,angle=0]
{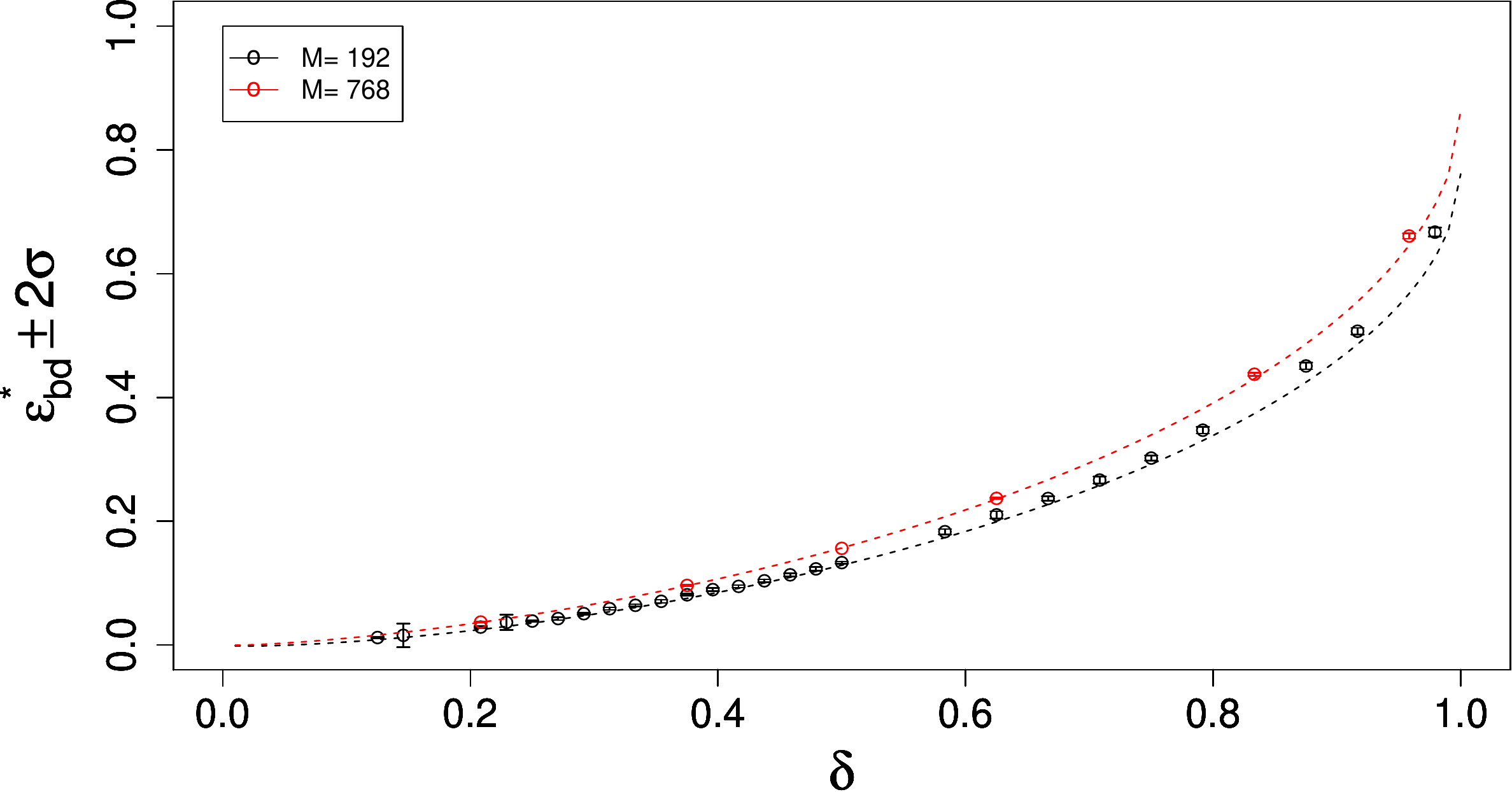} & 
\includegraphics[width=2.5in,angle=0]
{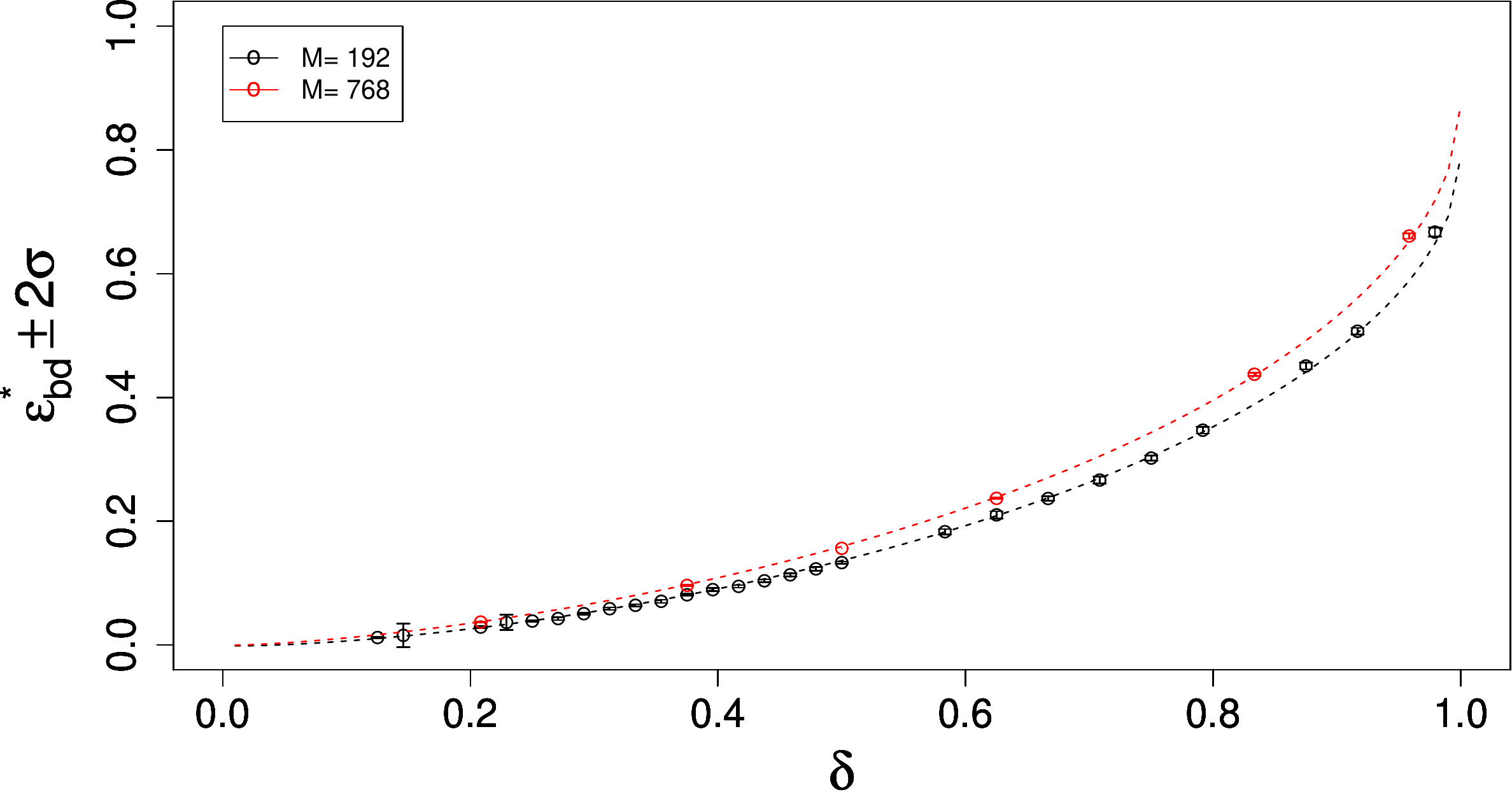}
\end{tabular}
\caption{Experimental data against first-order (left), and second-order (right) predictions of phase transition for $\bX = \bR$. Problem sizes: $M =192,768$. The circles are data and the dashed lines are predictions.}
\label{fig-RBUSE-R-fullPT}
\end{figure}

\begin{figure}[htbp]
\centering
\begin{tabular}{cc}
\includegraphics[width=2.5in,angle=0]
{} &
\includegraphics[width=2.5in,angle=0]
{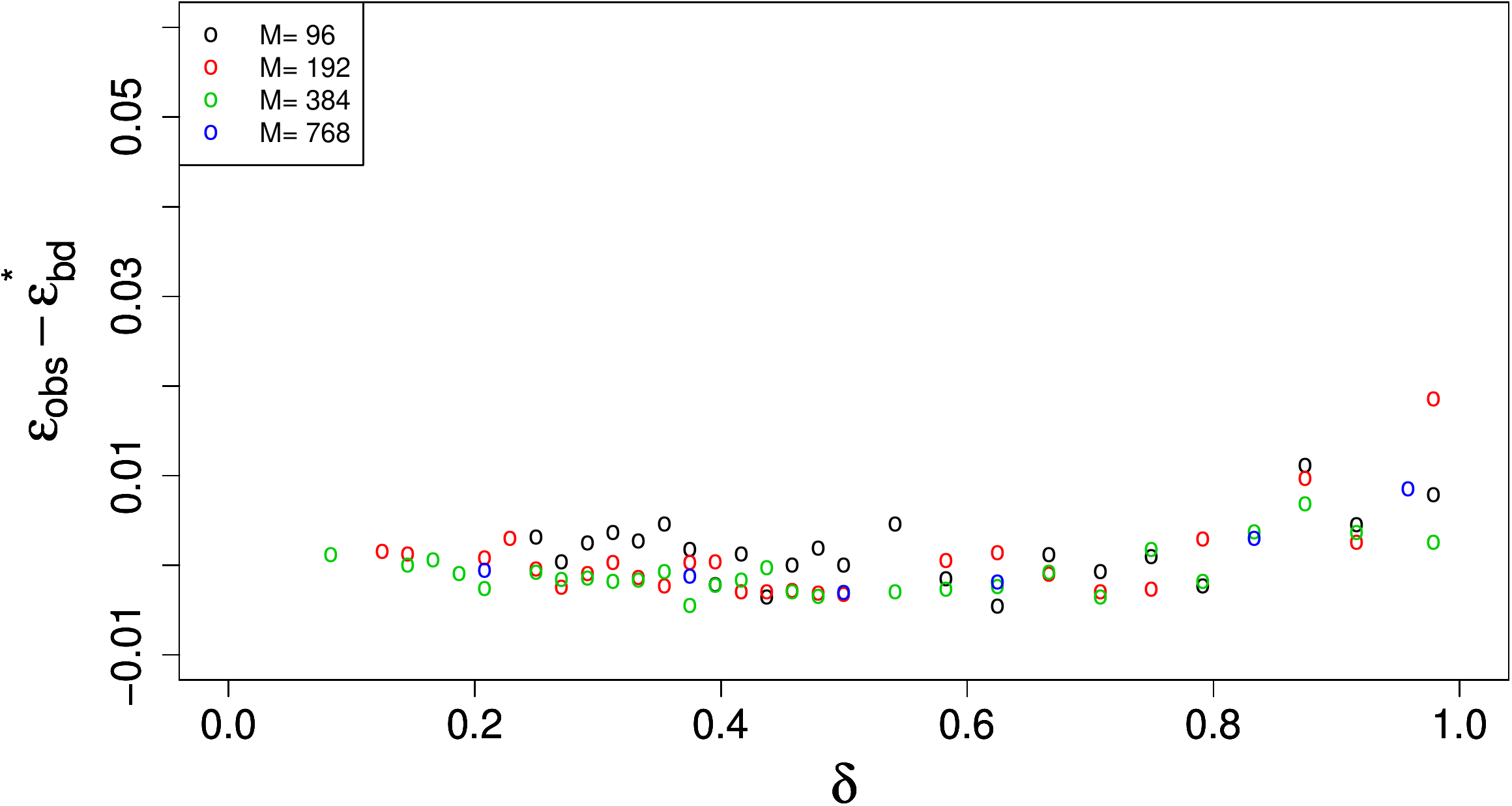}
\end{tabular}
\caption{Difference of predicted and experimental phase transition location for $\bX = R$ using first-order (left), and second-order (right) predictive models. Problem sizes $M =96,192,384$ and $768$.
Residuals are larger near $\delta \approx 1$, 
the residuals at $M=768$ and large $\delta$ are noticeably smaller than 
those at $M=96$.}
\label{fig-RBUSE-R-error}
\end{figure}


\clearpage
\subsubsection*{+ Complex Coefficients}

\begin{figure}[htp]
\centering
\begin{tabular}{cc}
\includegraphics[width=2.5in,angle=0]
{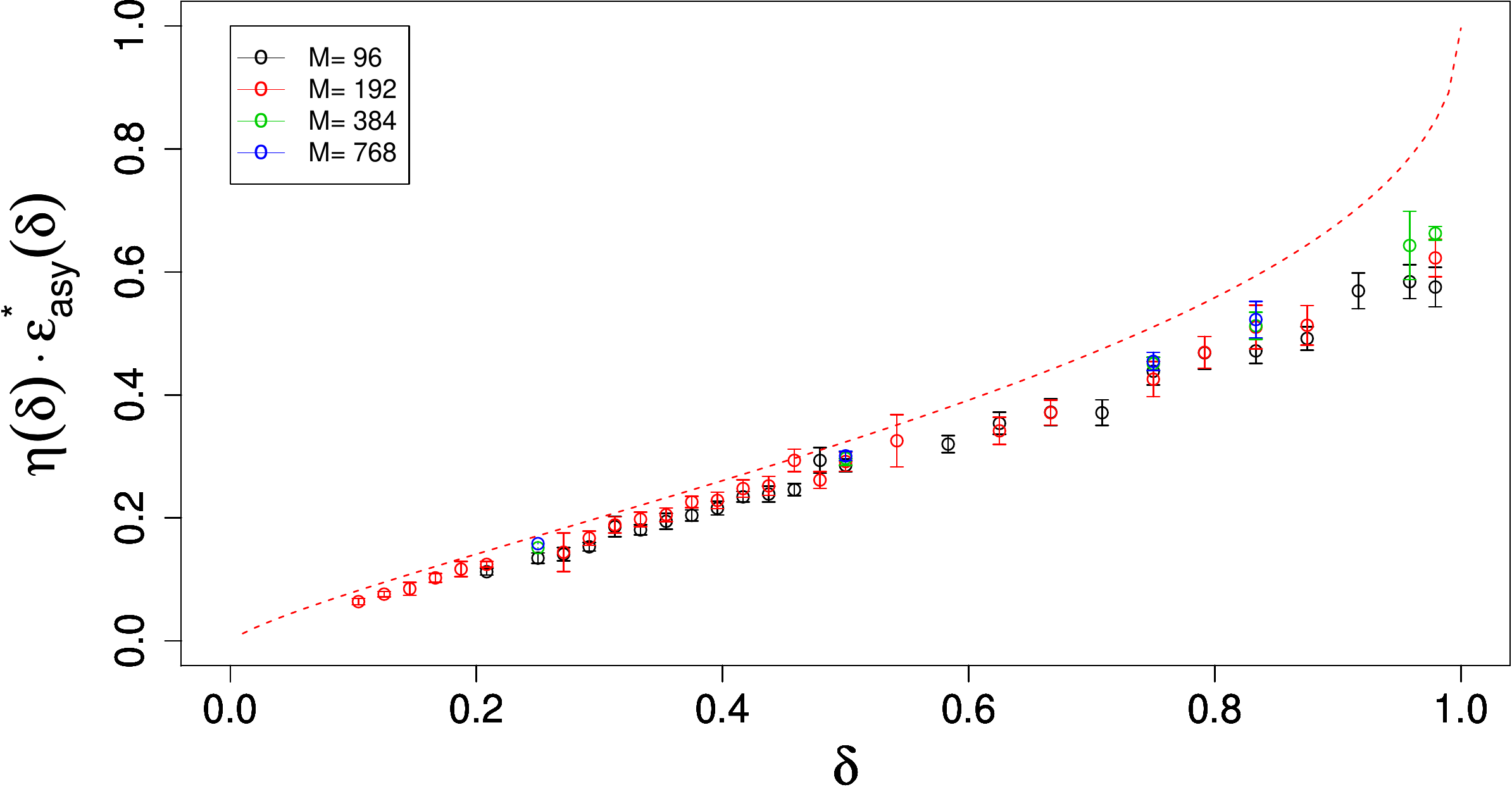} & 
\includegraphics[width=2.5in,angle=0]
{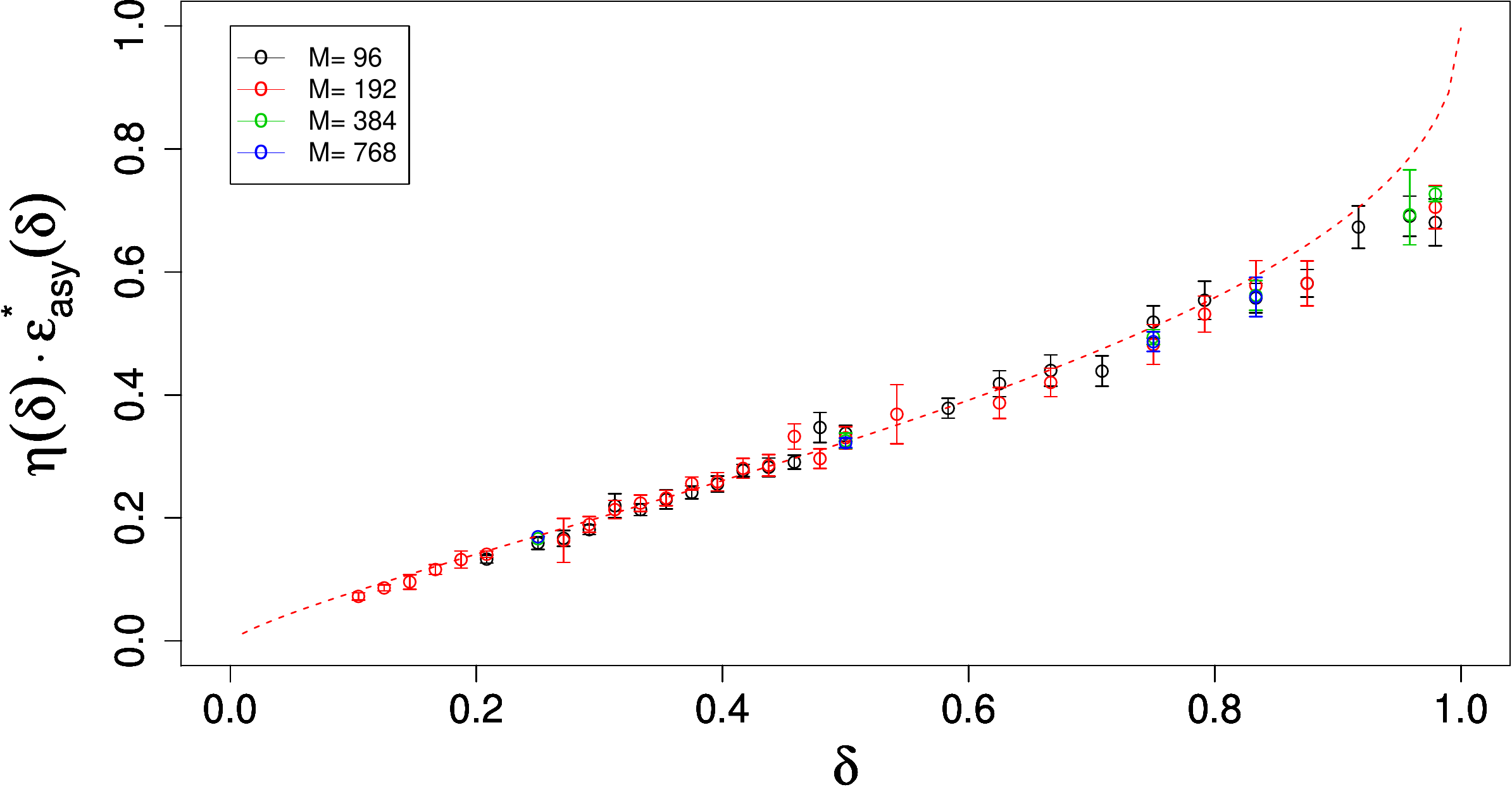} 
\end{tabular}
\caption{The ratio of offset  $\easy - \epreg(m,{M})$ to $\alpha \gamma_M$ (left panel-first order) and  $(\alpha \gamma_M+ \beta \gamma_M^2)$ (right panel-second order) versus undersampling $\delta$ for RBUSE ensemble and $\bX = \bC$. Problem sizes $M =96,192,384$ and $768$. The red dashed curve shows the predictive curves $\eta(\delta) \cdot \epsilon(\delta)$.}
\label{fig-RBUSE-R-fDeltaEpsilon}
\end{figure}%
\begin{figure}[htbp]
\centering
\begin{tabular}{cc}
\includegraphics[width=2.5in,angle=0]
{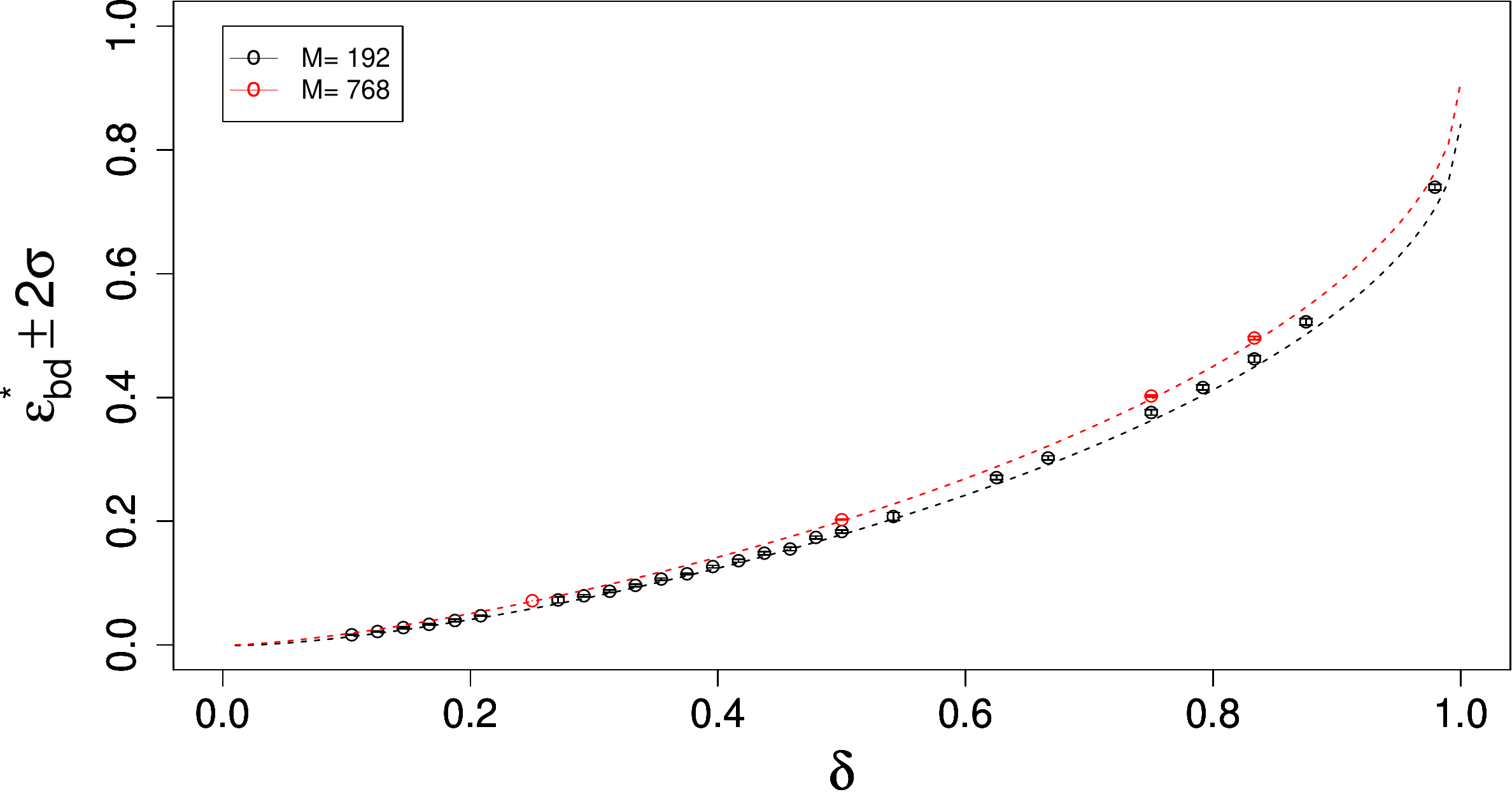} &
\includegraphics[width=2.5in,angle=0]
{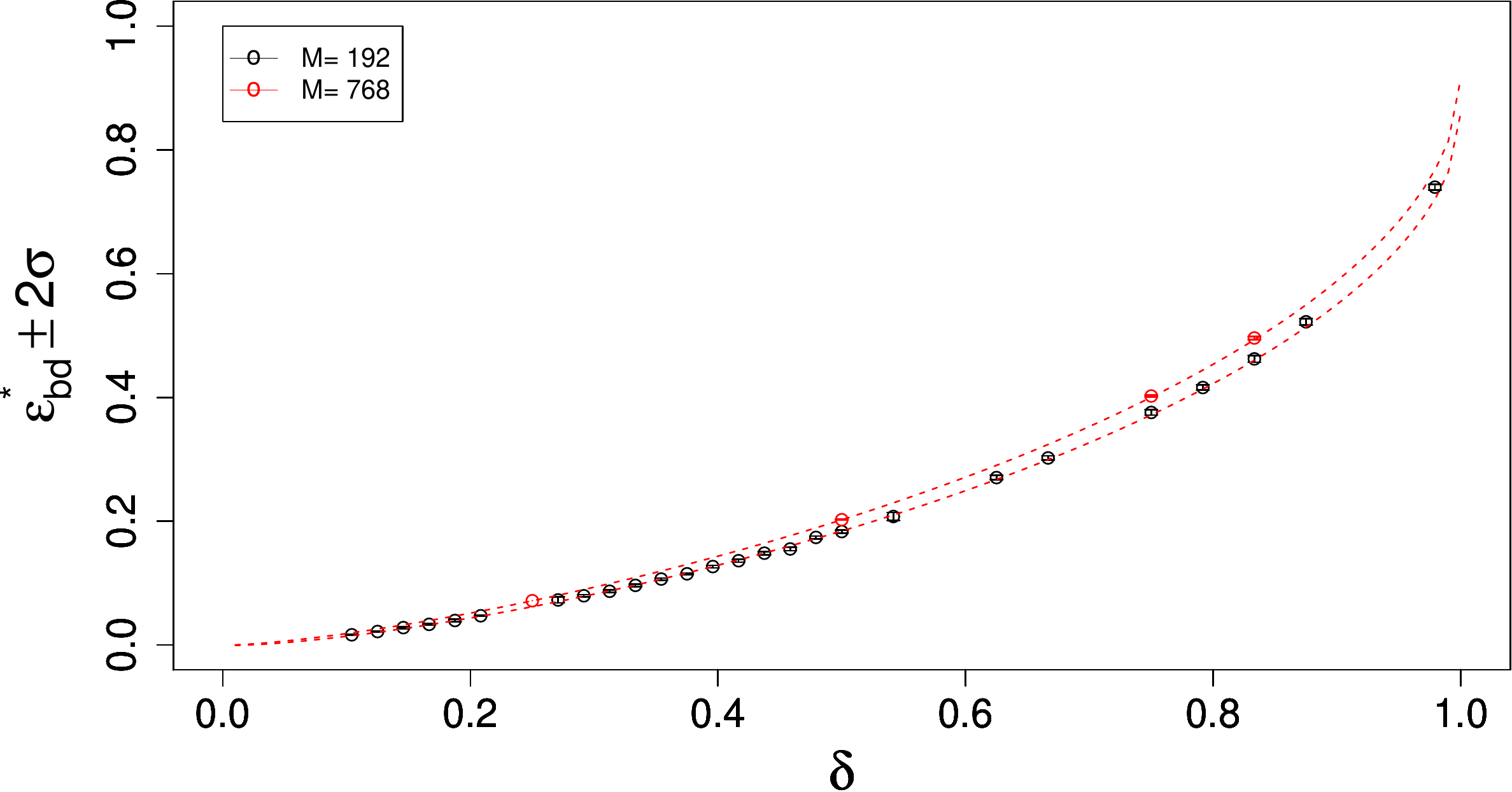}
\end{tabular}
\caption{Experimental data against first-order (left), and second-order (right) predictions of phase transition for $\bX = \bC$. Problem sizes: $M =192,768$. The circles are data and the dashed lines are predictions.}
\label{fig-RBUSE-R-fullPT}
\end{figure}

\begin{figure}[htbp]
\centering
\begin{tabular}{cc}
\includegraphics[width=2.5in,angle=0]
{} &
\includegraphics[width=2.5in,angle=0]
{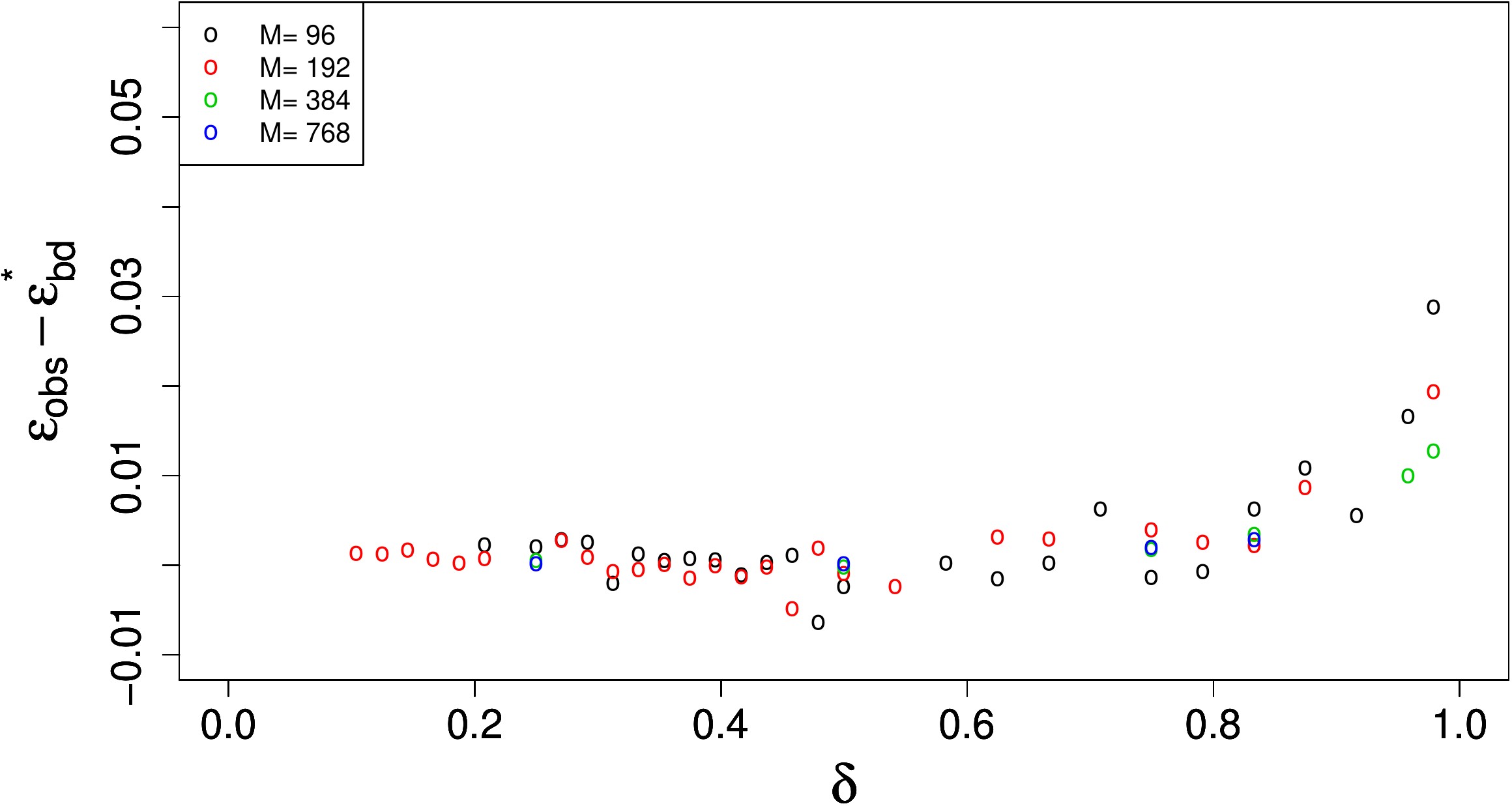}
\end{tabular}
\caption{Difference of experimental and predicted phase transition location for $\bX = \bC$ using first-order (left), and second-order (right) predictive models. Problem sizes $M =96,192,384$ and $768$.
Residuals are larger near $\delta \approx 1$, 
the residuals at $M=768$ and large $\delta$ are noticeably smaller than 
those at $M=96$.}
\label{fig-RBUSE-C-error}
\end{figure}

\clearpage
\section{Stylized Application to MR Imaging}

Numerous researchers \cite{vasanawala2010improved,zhao2012image,zhang2015clinical,tamir2016t2,feng2016compressed} 
have been conducting MR imaging experiments
where one dimension is sampled exhaustively
and the others are sampled at random, and in some cases
uniformly at random exactly as discussed here;
see for example  \cite{SparseMRI}. 

Theorem \ref{thm:RUID} shows that 
the $2D$ Fourier imaging with anisotropic 
undersampling is equivalent to block-diagonal 
measurements with $B = M$ and $N = M^2$. 
This equivalence is illustrated in Figure \ref{figure-2d-fft-vs-tiuse}. 
As expected, the empirical phase transition of the anisotropically-undersampled
2D FT is substantially below the transition point for
Gaussian measurement matrices. 

The $2D$ imaging situation corresponds 
to the case where $\bX=\bC$
and $\gamma_M = \sqrt{2 \log(M)/M}$.
In the case of of $\| \cdot \|_{1,\bC}$ minimization 
our formulas give the following
offset between the asymptotic phase transition 
and the finite-$N$ transition:
\begin{eqnarray*}
  \mbox{offset}  
  &\sim&  \alpha_{\bC} \cdot \eta(\delta ; \bC) \cdot \gamma_M + \beta_{\bC} \cdot \zeta(\delta; \bC) \cdot \gamma_M^2  \\
  &=&  \delta^{-1/2} [  \frac{2}{3} \gamma_M - \frac{1}{3} \gamma_M^2]   \\
  &=&  \delta^{-1/2} [\frac{2\sqrt{2}}{3} \sqrt{\frac{\log(M)}{M}}  -\frac{2}{3} \frac{\log(M)}{M} ] .
\end{eqnarray*}
The experiments  reported here 
validated the formalism's predictions,
which can thus be used to gauge the amount of undersampling
required in $2D$ imaging experiments.

Lustig and Pauly \cite{SparseMRI} also proposed 
anisotropic undersampling for $3D$ 
MR imaging,  
where one dimension is acquired  
exhaustively and the other two are acquired 
uniformly at random  \cite{SparseMRI}. 
Our formalism
applies to 3D MR imaging, 
where $\bX=\bC$, $B=M$, $N = M^3$,
and $\gamma_M = \sqrt{2\log(M)/M^2}$.
\begin{eqnarray*}
 \mbox{offset}  
  &\sim&  \alpha_{\bC} \cdot \eta(\delta ; \bC) \cdot \gamma_M + \beta_{\bC} \cdot \zeta(\delta; \bC) \cdot \gamma_M^2  \\
  &=&  \delta^{-1/2} [  \frac{2}{3} \gamma_M - \frac{1}{3} \gamma_M^2]   \\
  &=&  \delta^{-1/2} [\frac{2\sqrt{2}}{3} \frac{\sqrt{\log(M)}}{M}  -\frac{2}{3} \frac{\log(M)}{M^2} ].
\end{eqnarray*}
The leading term involves $1/M = 1/{N^{1/3}}$ in the $3D$ case, replacing the leading term $1/\sqrt{M} = 1/N^{1/4}$ from the $2D$ case. 

Note: a referee has emphasized that the model of sparsity 
entertained here is appropriate for images that look like hot spots scattered at random.
This might be appropriate for imaging with contrast agents.
Further work should study other image models and consider finite-$N$ phase transition phenomena they induce;
see also Section \ref{sec:Limitations}.


\section{Stylized Application to MR Spectroscopy}

Jeffrey Hoch and collaborators have used anisotropic
random undersampling in multi-D NMR spectroscopy for more than two decades \cite{schmieder93}.
In MR Spectroscopy, anisotropic undersampling
is not the full story; 
we must also consider the 
{ Hypercomplex nature of object $\bx$}.

A $d$-dimensional experiment collects measurements on
an array $x_0$ indexed by a $d$-dimensional grid of size $T_0\times \cdots \times T_{d-1}$, and having hypercomplex entries.
Each hypercomplex entry is a $2^d$-dimensional vector over the real field $\bR$
\footnote{Traditionally, the complete set of measurements in MR spectroscopy
is a set of $2^{d-1} \cdot \left( \prod_{1 \leq i < d} T_i \right) $ FID's;
different FIDs are indexed by $(\ell; k_1,\dots,k_{d-1})$.
Each FID  $F^{\ell}_{k_1,\dots,k_{d-1}}(\cdot)$ 
is a complex-valued time series $(F^{\ell}_{k_1,\dots,k_{d-1}}(k_0): 0 \leq k_0 < T_0)$
and measures two real coordinates of the hypercomplex entry associated with
 site $(k_0,k_1,\dots,k_{d-1})$ as $k_0$ varies,
effectively sampling along an axis-oriented line $u \mapsto (u,k_1,\dots,k_{d-1})$ in $\bZ_{T_0} \times \cdots \times \bZ_{T_{d-1}}$. 
Traditional full acquisition requires $2^{d-1}$ 
full passes along each line, 
 each pass \-- indexed by $\ell=0,1,\dots , 2^{d-1}-1$  \-- measuring a different
pair of coordinates of the full $2^d$-dimensional entry associated with
a given site.  In effect, the full $d$-dimensional hypercomplex transform
$\cF_{d,\bH^d} (x_0)$ is obtained at element 
$(k_0,k_1,\dots, k_{d-1})$ by gluing together the FID's
\[
\hat{x}(k_0,k_1,\dots, k_{d-1}) = \left (re(F^{0}),im(F^{0}),re(F^{1}),im(F^{1}),\dots,
  re(F^{2^{d-1}-1}),im(F^{2^{d-1}-1}) \right),
\]
where, in this display, $F^{\ell} \equiv F^{\ell}_{k_1,\dots,k_{d-1}}(k_0)$.}.


In NMR spectroscopy, anisotropic undersampling is generally called {\it NUS}
(for {\it non-uniform sampling}) \cite{schmieder93}.
To carry it out, simply sample uniformly at random  
without replacement from the set of  $d-1$ tuples $(k_1,\dots,k_{d-1})$ 
and then collect $2^{d-1}$ FIDs at each such tuple \-- $\ell = 0,\dots, 2^{d-1}-1$ \-- 
i.e. collecting each $F^{\ell}_{k_1,\dots,k_{d-1}}(\cdot)$ associated to each selected tuple.
Theorem \ref{thm:RUID} can be generalized as follows, although we omit details in this article.


Let $N=\prod_{i=0}^{d-1} T_i$ 
and suppose that $n$ is divisible by $ T_0$. From the collection
of  $d-1$ tuples $( k_1,\dots, k_{d-1})$ in $\bZ_{T_1} \times \cdots \times \bZ_{T_{d-1}}$,
sample uniformly at random $m = n/T_0$ such tuples; 
and let $\cK$ denote the resulting set of selected tuples.
Let $\cS_d \equiv \cS_{d,T_0,\cK}$ denote the selection operator that, from 
a full array indexed by $d$-tuples $( k_0, k_1,\dots, k_{d-1})$
selects all the elements with indices in the product 
set $\{0,\dots,T_0-1\} \times \cK$.
In this setting, let $\cFu$ denote the 
linear operator defined by the 
pipeline $\cFu = \cS_{d,T_0,\cK} \circ \cF_{d, \bH^d}$.
 
For comparison, let $\cS_{d-1; \cK}$ denote a selection operator 
on $(d-1)$-dimensional arrays indexed by $d-1$ tuples $( k_1,\dots, k_{d-1})$ in $\bZ_{T_1} \times \cdots \times \bZ_{T_{d-1}}$. It selects just those entries with indices in $\cK$.
Let $\cF_{d-1,\bH^d}$ denote the $d-1$-dimensional discrete Fourier transform
{\it with scalars in the associative algebra $\bH^d$} (and {\it not} $\bH^{d-1}$),
and let $A^{(1)}$ denote the $m \times M$ matrix with 
$\bH^{d}$-valued entries representing the linear operator 
$\cS_{d-1;\cK} \circ \cF_{d-1, \bH^{d}}$.
Construct the $n \times N$ block-diagonal matrix  $A$
with $B = T_0$ identical blocks $A^{(1)}$;  
each block is a short fat  matrix with hypercomplex $\bH^d$ entries
of size $m \times M $ with 
$m = n/T_0$ and $M=N/ T_0$.

\begin{thm}{\bf (Multi-D NUS)} \label{thm:MultiD-NUS}   
Suppose that the same set $\cK$ of $d-1$ tuples
is used in defining both of the above-mentioned
selection operators $\cS_{d-1,\cK}$ and $\cS_{d,T_0,\cK}$.
Let $x_0$ be a hypercomplex array, and $\bx_0 = vec(x_0)$. 
The following two problems have identical values and isomorphic
solution sets:
$$
 (P_{1,\bX}^{\aus}) \quad \min \| x \|_{1,\bX} \quad \text{subject to} \quad \cFu(x)= \cFu(x_0),
$$
$$
{ (P_{1,\bX})} \quad \min \| \bx \|_{1,\bX} \quad \text{subject to} \quad A \bx= A \bx_0.
$$
{ where $\bX \in \{[0,1],\bR_+,\bR,\bH^d\}$ define the choice of the $\ell_1$ norm.}
\end{thm}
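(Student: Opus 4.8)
The plan is to adapt the general construction behind Theorem~\ref{thm:RUID} (carried out in Appendix~\ref{app:RUID}) to the $d$-dimensional hypercomplex setting; the only genuinely new ingredients are the axis-wise factorization of the hypercomplex Fourier transform and the observation that the anisotropic selection operator acts as the identity along the exhaustively-sampled coordinate. First I would exhibit the factorization
\[
\cF_{d,\bH^d} \;=\; \cF^{(0)}_{1,\bH^d}\circ\cF^{(1,\dots,d-1)}_{d-1,\bH^d},
\]
where $\cF^{(0)}_{1,\bH^d}$ denotes the one-dimensional $\bH^d$-scalar DFT acting only in the $k_0\leftrightarrow t_0$ coordinate, and $\cF^{(1,\dots,d-1)}_{d-1,\bH^d}$ the $(d-1)$-dimensional $\bH^d$-scalar DFT acting only in the remaining coordinates. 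Since the algebra $\bH^d$ is commutative and the two factors operate on disjoint coordinate axes, they commute; this is exactly the classical fact that the DFT over a product grid is the tensor product of the one-dimensional DFTs, now carried out with scalars in $\bH^d$ rather than in $\bR$ or $\bC$ — which is precisely why the inner factor is $\cF_{d-1,\bH^d}$ and not $\cF_{d-1,\bH^{d-1}}$.

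Next I would commute the selection operator past the outer transform. By construction $\cS_{d,T_0,\cK}$ keeps coordinate $0$ in full and retains only the tuples of $\cK$ in coordinates $1,\dots,d-1$; that is, in the product-coordinate picture $\cS_{d,T_0,\cK}$ is $\mathrm{id}$ on coordinate $0$ tensored with $\cS_{d-1;\cK}$ on the rest. Hence it commutes with $\cF^{(0)}_{1,\bH^d}$ and restricts to $\cS_{d-1;\cK}$ on the inner factor, giving
\[
\cFu \;=\; \cS_{d,T_0,\cK}\circ\cF_{d,\bH^d} \;=\; \cF^{(0)}_{1,\bH^d}\circ\bigl(\cS_{d-1;\cK}\circ\cF^{(1,\dots,d-1)}_{d-1,\bH^d}\bigr).
\]
Reading coordinate $0$ as the block index under $vec(\cdot)$, the operator $\cS_{d-1;\cK}\circ\cF^{(1,\dots,d-1)}_{d-1,\bH^d}$ applies, slice-by-slice in $t_0$, precisely the single-block map $A^{(1)}=\cS_{d-1;\cK}\circ\cF_{d-1,\bH^d}$, so under the relabelling $vec(\cdot)$ it is exactly the block-diagonal matrix $A=I_{T_0}\otimes A^{(1)}$ of the statement. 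Writing $L$ for $\cF^{(0)}_{1,\bH^d}$ followed by the coordinate-reordering permutation that matches the output index conventions, we obtain $\cFu(x)=L\,A\,vec(x)$ for every array $x$, with $L$ invertible (a permutation composed with a nonsingular DFT).

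The conclusion is then immediate. Because $L$ is injective, $\cFu(x)=\cFu(x_0)$ if and only if $A\,vec(x)=A\,vec(x_0)$, so $vec(\cdot)$ is a bijection between the feasible sets of $(P_{1,\bX}^{\aus})$ and $(P_{1,\bX})$; and because $vec(\cdot)$ is merely a relabelling of the $N$ sites while $\|\cdot\|_{1,\bX}$ and the constraint that every site-value lie in $\bX$ are both defined site-by-site through a fixed per-site rule, we have $\|x\|_{1,\bX}=\|vec(x)\|_{1,\bX}$ and the feasibility constraints transport along $vec(\cdot)$ for each $\bX\in\{[0,1],\bR_+,\bR,\bH^d\}$. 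Hence the two programs have equal optimal value and $vec(\cdot)$ carries the solution set of the first bijectively onto that of the second. The main obstacle lies entirely in making the first two steps rigorous over the hypercomplex algebra: one must pin down the definition of $\cF_{d,\bH^d}$ (as in \cite{Monajemi2016ACHA}) precisely enough to justify the axis-wise factorization and the commutation of $\cS_{d,T_0,\cK}$ with the transform on the exhaustive axis, and one must verify that the per-site hypercomplex one-norm and the per-site feasibility constraints really are invariant under the $vec(\cdot)$ relabelling; everything downstream is the same bookkeeping as in the $d=2$ proof of Theorem~\ref{thm:RUID}.
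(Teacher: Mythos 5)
Your proposal is correct, but it follows a genuinely different route from the paper's own proof of Theorem \ref{thm:MultiD-NUS}. The paper (Appendix \ref{app:NUS}) proves the theorem by redoing, ``step-by-step,'' the \emph{first} proof of Theorem \ref{thm:RUID} (Appendix \ref{app:RUID}) with $\bC$ replaced by $\bH_d$: it passes to the normal equations with the hypercomplex Gram matrix $G=A^{\#}A$, invokes the block-diagonal Gram computation of Lemma \ref{gram-matrix} via the hypercomplex analog of the Dirichlet sum, uses the rank-reduction and singular-vector lemmas, and then handles the coefficient sets $[0,1],\bR_+,\bR$ by embedding $\bH_d$ into the algebra of $2^d\times 2^d$ real matrices. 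You instead generalize the \emph{second} proof (Appendix \ref{sec:AltProofRUID}) from $d=2$ complex to general $d$ hypercomplex: factor $\cF_{d,\bH^d}$ axis-wise, commute the selection operator past the transform on the exhaustive axis, obtain the identity $\cFu = L\,A\,vec(\cdot)$ with $L$ invertible, and transfer feasibility and the site-wise norm through the $vec$ relabelling. Your route buys a cleaner argument: it avoids Gram matrices, rank, and singular vectors over $\bH_d$ (delicate notions, since $\bH_d$ is a commutative algebra with zero divisors, not a field) and it treats all four coefficient sets $\bX$ uniformly through the feasible-set bijection, with no need for the matrix-representation detour; the paper's route buys economy, since Lemma \ref{gram-matrix} is already stated for general $d$ and only the scalars need changing. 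One point you should make explicit rather than bury in ``nonsingular DFT'': because $\bH_d$ has zero divisors, invertibility of the outer one-dimensional hypercomplex DFT $L$ is not a generic determinant statement but follows from the hypercomplex Dirichlet orthogonality relation (the same identity the paper's proof uses), which supplies the explicit inverse transform; with that noted, your argument is complete modulo the definitional groundwork from \cite{Monajemi2016ACHA} that the paper itself also defers.
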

\begin{proof}
See Appendix \ref{app:NUS}.
\end{proof}


%
{
As an example, Figure \ref{fig-NUS-RBUSE} shows such equivalence for $\bX=R$ and the special case of NUS in 2D experiments (with hypercomplex FIDs) when only half of the indirect times are sampled (i.e., $\delta=1/2$). Here, the equivalent block diagonal matrix in Theorem \ref{thm:MultiD-NUS} 
consists of $B=2T$ repeated blocks of size $m \times M = T\times 2T$. 
Each repeated block is a real-valued matrix implementing 
a partial 1D complex discrete Fourier transform 
(in this representation, each complex entry in the complex DFT
matrix is replaced by its equivalent 
$2\times 2$ real matrix). We label this block diagonal matrix by `RBRealDFT'. 
The figure documents the equivalence of NUS with  `RBRealDFT'. 
The figure also documents performance
with the RBUSE block-diagonal measurement matrix where,
in place of each partial Fourier matrix  we
insert a random USE matrix.
The results are similar; this is an instance of the 
universality phenomenon discovered in \cite{DoTa08,MoJaGaDo2013}. For more results of this kind, see \cite{Monajemi_thesis_2016}.
}


\begin{figure*}[t] 
\centering
\includegraphics[width=3in]
{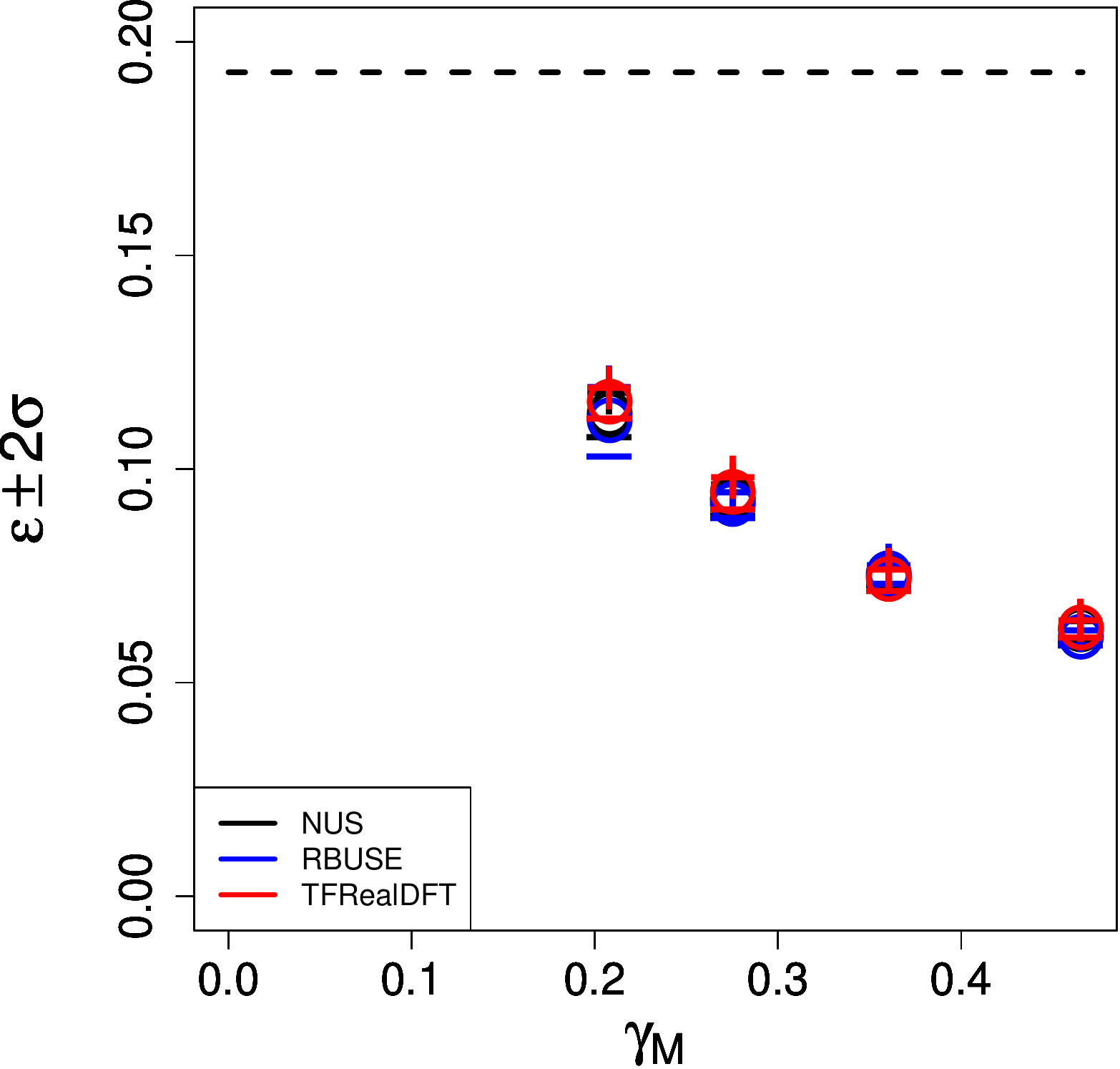}
\caption{Equivalence of standard undersampling schemes in  2D
MR Spectroscopy with anisotropic undersampling schemes involving block-diagonal measurements.
The data come from a $T \times T$ Cartesian grid, 
amounting to  $4T^2$ real coefficients.
The undersampling fraction $\delta=1/2$. 
`NUS' models undersampling of the indirect dimension in MR Spectroscopy, 
selecting at random half of the $y=$ constant lines to be measured. 
`RBRealDFT' corresponds to block-diagonal measurements with $B=2T$ repeated blocks of size 
$T\times 2T$, each block the real representation of a partial $1D$ complex Fourier matrix, 
in which each complex entry is replaced by its equivalent $2\times 2$ real matrix.
`RBUSE' corresponds to block-diagonal measurements with $2T$ identical blocks, 
each block an $T \times 2 T$ real USE matrix. 
The dashed line represents the asymptotic Gaussian phase transition at $\delta=1/2$.  Problem sizes: $N=16,32,64,128$. }
\label{fig-NUS-RBUSE}
\end{figure*}



\section{In $d>2$, how many dimensions to randomly undersample?}

We plan a full report on
the multidimensional case elsewhere,
documenting the accuracy of the 
prediction formalism developed here.
The key points can already be seen.
Suppose the object of interest is a $d$-dimensional
array, with sidelength ${T}$ on each axis,
so the total data volume $N = {T}^d$. 
In anisotropic undersampling of such an array,
let $B = {T}^{d_e}$ where $d_e$ 
is the number of exhaustively sampled dimensions; 
the individual blocks themselves 
are then of dimension $M = {T}^{d_r}$,
where $d_r = d - d_e$.
Our {\it ansatz} for the
location of the finite-N phase transition 
in Lemma \ref{Lemma-PT-MultiBlock}
translates to this special case as follows:

\begin{cor} ({\bf General $d \geq 2$}). 
\label{Lemma-PT-MultiBlock-General}
Let $d \in \{ 2,3, 4, \dots \}$,  and fix $0 \leq d_e, d_r  \leq d$,
such that $d_r+d_e=d$.
Consider a sequence of problem sizes ${T} \goto \infty$
and associated block-diagonal matrices  with
$B={T}^{d_e}$ blocks of equal size $m \times M$,
and with  $m/M \goto \delta \in (0,1)$ . 
For the offset between the 
asymptotic phase transition $\easy(\delta; [0,1])$
and the multi-block finite-$N$ phase transition { $\emb(m,M,B;[0,1])$ we have:
\[
  \easy(\delta) -  \emb  \sim \sqrt{ \frac{ 4 \frac{d_e}{d}(1-\delta) \log(N)}{N^{d_r/d}} }  , 
  \qquad  N \goto \infty.
\]
}
\end{cor}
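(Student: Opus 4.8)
The plan is to reduce the statement to a binomial‑tail inversion — essentially the computation behind Lemma~\ref{Lemma-PT-MultiBlock}, carried out for a general number of blocks $B$ in place of the special choice $B=M$ — and then substitute $B=T^{d_e}$, $M=T^{d_r}$, $N=BM=T^{d}$. First I would record the exact finite-$N$ success probability available in the $\bX=[0,1]$ regular‑sparsity regime. By Theorem~\ref{thm-Donoho-Bnd} together with Lemma~\ref{cor:ExactProbMultiBlock}, a $B$-block block-diagonal problem with equal blocks of size $m\times M$ and $\ell$ nonzeros per block satisfies
\[
 Q_{mb}(B\ell,Bm,BM;[0,1]) \;=\; \bigl(1-P_{M-m,\,M-\ell}\bigr)^{B},\qquad P_{M-m,\,M-\ell}=\Pr\bigl\{\textrm{Bin}(M-\ell-1,\tfrac12)\le M-m-1\bigr\}.
\]
Hence the critical per-block sparsity $\ell^{*}=\ell^{*}(m,M,B)$ defining $\emb=\ell^{*}/M$ (with the multiblock calibration $q^{*}=1-1/e$) solves $(1-P_{M-m,M-\ell^{*}})^{B}=1-1/e$, i.e.
\[
 P_{M-m,\,M-\ell^{*}} \;=\; 1-(1-1/e)^{1/B} \;=\; \frac{c}{B}\,(1+o(1)), \qquad c:=\log\tfrac{e}{e-1},
\]
as $B\to\infty$. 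So the task is to locate the value of $\ell^{*}$ at which the binomial lower tail equals $\Theta(1/B)$.

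Second, I would invert this tail. Write $\theta:=\easy(\delta)M-\ell^{*}=(2\delta-1)M-\ell^{*}$ with $\delta=m/M$; we restrict to $\delta\in(\tfrac12,1)$, the range in which $\easy(\delta;[0,1])=2\delta-1>0$, exactly as in Lemma~\ref{Lemma-PT-MultiBlock}. Then $\textrm{Bin}(M-\ell^{*}-1,\tfrac12)=\textrm{Bin}(2(1-\delta)M+\theta-1,\tfrac12)$ has mean $\tfrac12(2(1-\delta)M+\theta-1)$ and standard deviation $\sim\sqrt{(1-\delta)M/2}$, and we evaluate its lower tail at $M-m-1=(1-\delta)M-1$, a deviation of $z:=\theta/\sqrt{2(1-\delta)M}$ standard deviations below the mean. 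Since $\theta$ will come out of order $\sqrt{M\log B}$ (hence $z=\Theta(\sqrt{\log B})\to\infty$ but $z=o(M^{1/6})$), we are in the classical moderate-deviations window, so $P_{M-m,M-\ell^{*}}\sim\overline{\Phi}(z)\sim(z\sqrt{2\pi})^{-1}e^{-z^{2}/2}$. Equating with $c/B$ and taking logarithms gives $\tfrac12 z^{2}+\log z+O(1)=\log B$; because $\log z=\Theta(\log\log B)=o(\log B)$, this yields $z=\sqrt{2\log B}\,(1+o(1))$, hence $\theta=\sqrt{4(1-\delta)M\log B}\,(1+o(1))$ and
\[
 \easy(\delta)-\emb \;=\; \frac{\theta}{M}\,(1+o(1)) \;=\; \sqrt{\frac{4(1-\delta)\log B}{M}}\,(1+o(1)).
\]
For $B=M$ this is exactly $\sqrt{2(1-\delta)}\,\gamma_{M}$, recovering Lemma~\ref{Lemma-PT-MultiBlock}; the self-consistent correction $n_{0}=2(1-\delta)M+\theta-1$ perturbs $z$, and the offset, only at relative order $\theta/M=o(1)$.

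Third, I would substitute $B=T^{d_e}$ and $M=T^{d_r}$. Since $N=BM=T^{d_e+d_r}=T^{d}$ we have $T=N^{1/d}$, so $\log B=d_e\log T=\tfrac{d_e}{d}\log N$ and $M=T^{d_r}=N^{d_r/d}$; inserting these into the displayed offset gives
\[
 \easy(\delta)-\emb \;\sim\; \sqrt{\frac{4(1-\delta)\,\tfrac{d_e}{d}\log N}{N^{d_r/d}}} \;=\; \sqrt{\frac{4\,\tfrac{d_e}{d}(1-\delta)\log N}{N^{d_r/d}}},
\]
which is the assertion. The content is for $1\le d_e\le d-1$, so that $B\to\infty$ and $M\to\infty$; the boundary cases $d_e=0$ (a single block, offset $0$ to leading order, consistent with $\esb=\easy+O(1/M)$) and $d_r=0$ ($M=1$, no recovery problem) are degenerate. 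The one genuinely non-bookkeeping step — and the place I expect the main difficulty — is the moderate-deviations estimate for the binomial tail: I must verify that $z$ is simultaneously large enough that $\overline{\Phi}(z)\sim\phi(z)/z$ applies (so the correct subexponential prefactor is $1/z$, not $1/\sqrt{M}$) and small enough, $z=o(M^{1/6})$ — which holds because $z^{2}=\Theta(\log N)$ while $M=N^{d_r/d}$ — for the one-term refinement of Cram\'er's theorem to hold, and that the remainder terms $\log z$ and the $O(1)$ constants in $-\log\overline{\Phi}(z)$ are truly $o(\log B)$, so that they do not disturb the leading asymptotics of $\theta$. Everything downstream of that estimate is algebraic substitution.
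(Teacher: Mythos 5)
Your proposal is correct and follows essentially the same route as the paper's Appendix~\ref{app:PT-MultiBlock}: the exact formula $Q_{mb}=(1-P_{M-m,M-\ell})^{B}$ via Theorem~\ref{thm-Donoho-Bnd} and Lemma~\ref{cor:ExactProbMultiBlock}, reduction to a binomial lower tail at level $\log(1/q^{*})/B$, inversion giving $z\sim\sqrt{2\log B}$ and hence $\easy(\delta)-\emb\sim\sqrt{4(1-\delta)\log(B)/M}$, and then the substitution $B=T^{d_e}$, $M=T^{d_r}$, $N=T^{d}$. The only differences are technical rather than structural: you justify the relative-error tail approximation by a moderate-deviations theorem (checking $z=o(M^{1/6})$), whereas the paper uses Uspensky's additive normal-approximation bound together with its sandwich/perturbation lemmas, and your restriction to $\delta\in(1/2,1)$ is the correct reading of the corollary's stated range, consistent with Lemma~\ref{Lemma-PT-MultiBlock}.
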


Two comments are in order:

\bitem
\item Comparing two  schemes  with  equivalent $n/N$ and $N$, 
but different $d_e$, we see that this gap is
increasing in the quantity $d_e/d$.  
In words: other things being equal, the gap
is larger  when there are more exhaustively sampled dimensions and  hence
fewer randomly sampled dimensions.
\item Comparing two problems with the same $d_e$ but different $d$,
we see that the gap is relatively less important when $d$ is larger.
For example, in multidimensional MR spectroscopy, 
the gap between the asymptotic Gaussian-measurements phase transition
and the finite-N phase transition is larger in smaller dimensions $d$ than 
in larger dimensions. The order of the gap  in $2D$-MRI -- where $d_e=1$, $d_r=1$ --
 is $O( \sqrt{\log(N)} N^{-1/4})$; while in $3D$ MRI -- where $d_e=1$, $d_r=2$ -- it is $O( \sqrt{\log(N)} N^{-1/3})$.
\eitem

\section{Limitations of Our Work}
\label{sec:Limitations}
{
There are several ways this study has been more limited than we would like.
Here are some possible variations and extensions:
\begin{enumerate}
\item {\it Pixel sparsity}. We have considered here only situations where
the object of interest is sparse in the original pixel/voxel domain. This is
a very specific assumption, and makes most sense for NMR spectroscopy when the
exponential decay times are very long. It also makes sense for MR Imaging
with contrast agents where we are looking for relatively rare `hotspots'.

\item {\it  Transform Sparsity. }
Referees suggested that sparsity in a transform basis would be more general and
more widely applicable. We leave this for further work, expecting that 
results of the precision we have been deriving 
here would require very specific assumptions.
\item {\it   Uniform Sparsity,}
Referees suggested that non-uniform sparsity -- i.e. different amounts of sparsity in
different blocks -- would be more general and more applicable. 
We agree, and have performed extensive
experiments where the sparse signals are scattered randomly,
leading to multinomial counts in the different blocks.
We also developed theory for the multinomial case, showing how
the first- and second- order correction terms will change.
Those terms are somewhat different than before,
however, the larger point remains the same: there are precise corrections
of order $polylog(M)/ \sqrt{M}$ which we can predict accurately.

We remind the reader that the regular case here is extremal --
other nonuniform sparsity cases will have phase transitions
that are even lower than this one. On the other hand, our analysis of block-diagonal systems
in Section 2 above suggests that if there are
dramatic differences in nonzeros from one column to another,
what really matters is the maximal number of nonzeros in
any column.
\end{enumerate}
All these directions of extension seem worth pursuing.

Finally we remind the reader of the existing theoretical work on
block diagonal undersampling \--  Eftekhari et al. \cite{Eftekhari2015}
and by Adcock and Chun \cite{Chun16} \--  which, taking a large-$N$ viewpoint
and thereby viewing $\log N$ factors as relatively inconsequential, explicity claims that
anisotropic undersampling is effectively just as good as dense Gaussian undersampling.
The deviations from the asymptotic model that 
we exhibit in Figure \ref{fig-NUS-RBUSE} above, and which seem practically consequential
to us, would be considered {\it de minimis} from that theoretical viewpoint.
}
\section{Conclusion}

We formalized the notion of anisotropic undersampling in multi-dimensional
Fourier imaging, and showed its mathematical equivalence with the use of 
block-diagonal measurement matrices in compressed sensing. 

We rigorously analyzed
a special case of block-diagonal measurement matrices where the object of interest 
has real coefficients bounded between $0$ and $1$ and typically at
the extreme values $0$ and $1$, and derived a precise
expression for the finite-$N$ phase transition, finding it to be
 displaced substantially from the large-$N$
phase transitions applicable fully dense Gaussian measurement schemes. 

Massive computational experiments involving millions of CPU hours
established the empirical equivalence of random anisotropic 
Fourier undersampling with block diagonal Gaussian measurements.
The experiments showed that the phenomenon 
of substantial finite-$N$ phase transition offset from the fully dense
Gaussian measurement case \-- proven theoretically in the above special case \--
continues to hold empirically across a range of other settings, 
including the recovery of sparse objects with real coefficients, 
with real nonnegative coefficients or complex coefficients. 
The experiments allowed us to validate precise
formulas for the finite $N$-phase transitions adapted to all those cases, including 
second-order [in $\gamma$] versions of our formulas matching the experimental data closely. 

We presented formulas for the location of finite-$N$ phase transitions
in 2D and 3D Sparse MRI, where anisotropic undersampling has
a long history and has been extensively used. We briefly discussed
multi-dimensional MR spectroscopy, which involves
anisotropic undersampling of the hypercomplex Fourier transform,
and we empirically demonstrated its equivalence to block-diagonal
Gaussian measurements in the 2D hypercomplex case.
We left detailed discussion of the 
multidimensional hypercomplex  case
for future work.

\section*{Reproducible Research}
The code and data that generated the figures in this article may be found online at \url{https://purl.stanford.edu/th702qm4100} \cite{DataMonajemiAniso2017}.
\appendix

\section{Proof Sketches for Lemmas \ref{Lemma-PT-MultiBlock} and \ref{Lemma-PT-MultiBlock-General}} \label{app:PT-MultiBlock}

As the reader will see, the proof is mostly an exercise in manipulating properties of the Binomial distribution and
its normal approximation. 

\subsection{The Single-Block Problem}
The critical $\ell$-value $\ell^*(m,M)$ for the single block problem solves
\[
   Q_{sb} (\ell^*,m,M) \approx q^*.
\]
Namely, for fixed $q^*$ we find adjacent integers $\ell_\pm$ so that
$Q_{sb} (\ell_+,m,M) \geq  q^*$ and $Q_{sb} (\ell_-,m,M) \leq  q^*$.
Then $\ell^* \in \{ \ell_-,\ell_+\}$.

Recall that $Q_{sb} (\ell,m,M) = 1 - P_{M-m,M-\ell}$ with  $P_{M-m,M-\ell}$
a binomial probability defined  in Theorem \ref{thm-Donoho-Bnd}.
Since we are in the single-block problem, we take $q^* = 1/2$ as explained in Definition
\ref{dfn:q}. Hence we are trying to solve for the $\ell_\pm$ achieving
\[
P_{M-m,M-\ell_\pm} \approx \frac{1}{2}.
\]

The binomial probability $P_{k,n}$ is decreasing as $n$ increases for fixed $k$.
Moreover if $n$ is even, then $P_{n/2,n} = 1/2$ exactly.
We conclude that when $M-\ell = 2(M-m)$ we will exactly solve $P=1/2$. We of course do this by setting
\[
   \ell^*  = 2m -M.  
\]
Then from $\delta \sim m/M$, we get 
\[
   \ell^*/M \sim  (2 \delta-1) ,
\]
and so 
\[
\esb(m,M; [0,1]) \mapsto (2\delta-1)_+,
\]  
as $m,M \goto \infty$ with 
$m/M \goto \delta$.

\subsection{The Multi-Block Problem}

The critical value $\ell^*$
for the multi-block problem and regular sparsity is given by

$$Q_{mb} (B\cdot\ell^*,B\cdot m, B \cdot M) = Q_{sb} (\ell^*,m,M)^B \approx q^*.$$
 Namely, either $Q_{mb} (B\cdot\ell^*,B\cdot m, B \cdot M)$ just barely
exceeds $q^*$ but $Q_{mb} (B\cdot(\ell^*+1),B\cdot m, B \cdot M)$ does not, or else 
$Q_{mb} (B\cdot\ell^*,B\cdot m, B \cdot M)$ barely is below $q^*$ but $Q_{mb} (B\cdot(\ell^*-1),B\cdot m, B \cdot M)$
is not. Let $\ell_- \equiv \ell_-(B;m,M)$ and $\ell_+ \equiv \ell_+(B;m,M)$ 
denote the two adjacent integers just identified,
namely the smallest $\ell$ where $Q \leq q^*$ and the largest $\ell$ where $Q \geq q^*$.
Then $\ell^* \in \{ \ell_-,\ell_+\}$.

We are interested in the setting where the number of blocks $B \goto \infty$; since $q^*$ is fixed,
(e.g. at $1-1/e$), it follows that the success probability for 
individual blocks obeys
\begin{equation} \label{eq:Qsandwich}
Q_{sb} (\ell_+,m,M) \geq (q^*)^{1/B} \geq Q_{sb} (\ell_-,m,M). 
\end{equation}
The last display shows that $Q_{sb} (\ell_+(B;m,M),m,M)$ tends to 1 as $B$ increases.
However, by standard properties of the Binomial probability mass function and 
the fact that $\ell_- - \ell_+ = 1$,  we also have 
$Q_{sb} (\ell_-(B;m,M),m,M) \goto 1$.
We conclude that the failure  probability for individual blocks, 
$P_{M-m,M-\ell_\pm}$, tends to zero.

We first operate purely heuristically to derive the would-be 
formula, which we then verify rigorously.
Taking logarithms of (\ref{eq:Qsandwich}), and recalling $Q_{sb} (\ell,m,M) = 1 - P_{M-m,M-\ell}$,
then from $- \log(1-p) \approx p$ for $p$ small,
we arrive at an approximation of the following form:
$$
P_{M-m,M-\ell_\pm } \approx  \frac{\log(1/q^*)}{B} .
$$
 The binomial distribution is approximated by a Gaussian distribution for suitably large 
 problem sizes:
 \begin{align}
 \nonumber P_{M-m,M-\ell_\pm}  & \approx  \Phi \Big( \frac{(M-m) - \left(\frac{M-\ell_\pm}{2}\right)}{\frac{\sqrt{M-\ell_\pm}}{2}} \Big) & \\
 & = \Phi \Big(\frac{M-2m+\ell_\pm  }{\sqrt{M - \ell_\pm}} \Big) . &
 \label{eq-binomal} 
 \end{align}
Now, let $z_B = \Phi^{-1}(\frac{\log(1/q^*)}{B})$. 
A continuum approximation  $\ell_0$ to the
finite-$N$ phase transition location, say, is found by solving,
$$
\frac{M-2m+\ell_0  }{\sqrt{M - \ell_0}} = z_B,
$$
which yields,
\beq \label{eq:continuumformula}
\ell_0 = 2m - M - \frac{1}{2} \sqrt{z_B^4 + 8 z_B^2 (M - m) } - \frac{z_B^2}{2}  .
\eeq
Assuming $|\ell^* - \ell_0| \leq C$,
dividing both sides by $M$ and letting $\delta=m/M$:
\begin{eqnarray}
\nonumber \epsilon_{mb}^*(m,M,B,[0,1])=& 2\delta - 1 - \frac{1}{2} \sqrt{{\left(\frac{z_B^2}{M}\right)}^2 + \frac{8 z_B^2 (M - m)}{M^2} } - \frac{z_B^2}{2M} + O(\frac{1}{M}) &  \\
\nonumber  =& 2 \delta - 1 -  |z_B| \frac{\sqrt{2 (1-\delta)}}{\sqrt{M}} +O(\frac{z_B^2}{M})& \\
=& \epsilon_{sb}^*(m,M;[0,1]) - |z_B| \frac{\sqrt{2 (1-\delta)}}{\sqrt{M}} + O(\frac{z_B^2}{M}) .&
\end{eqnarray}

For $B$ large, we use the following classical approximation to $z_B$:
\[
|z_B|   =  \sqrt{2 \log(B)} \cdot (1 + o(1)), \quad B \goto \infty.
\]
Setting ${\gamma} = \sqrt{\frac{2 \log(B)}{M} }$ gives:
\begin{eqnarray}\label{eq:epsilon-tensor}
\epsilon_{mb}^*(m,M,B;[0,1])= \epsilon_{sb}^*(m,M;[0,1]) -  \sqrt{2 (1-\delta)}\gamma + o(\gamma).
\label{eq:bnd-offset}
\end{eqnarray}

To justify the above heuristic derivation rigorously, we need the following four lemmas,
which are stated in usual language familiar to probabilists.

\begin{lem} \label{lem:NormalApproxBinomial}
For $0 < k < n/2$,  let   $P_{k,n}$ be the usual binomial probability $2^{-n}  \sum_{h=0}^k {n \choose h} $
and let $\Phi_{k,n} \equiv \Phi( (2k - n)/\sqrt{n})$ be its usual normal approximation. We have
\beq \label{eq:uspensky}
      |P_{k,n} - \Phi_{k,n}| \leq  \frac{.26}{{n}} + e ^{-\sqrt{n}}.
\eeq
\end{lem}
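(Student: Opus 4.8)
The plan for Lemma~\ref{lem:NormalApproxBinomial} is to identify this inequality as the symmetric case $p=\tfrac12$ of Uspensky's refined form of the de Moivre--Laplace theorem (Uspensky, \emph{Introduction to Mathematical Probability}, 1937, Ch.~VII; see also Feller, Vol.~1), and to reconstruct it from three ingredients: a local (pointwise) Gaussian estimate of the binomial mass function with explicit relative error of order $1/n$; an Euler--Maclaurin comparison of the resulting lattice sum with the Gaussian integral $\Phi_{k,n}$ --- read with the usual continuity correction built in --- in which the potential $O(1/\sqrt n)$ term drops out precisely because $p=q=\tfrac12$; and a crude super-exponential bound on both tails, which is the source of the harmless $e^{-\sqrt n}$ summand.

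Concretely, first write $b(h)=2^{-n}\binom nh$ and let $\phi_n(h)=\frac{2}{\sqrt{2\pi n}}\exp\{-(2h-n)^2/2n\}$ be the $N(n/2,n/4)$ density at the lattice point $h$. Stirling's formula with its explicit two-sided error bounds, applied to $\binom nh$ on the bulk range $|h-n/2|\le n^{2/3}$, gives $b(h)=\phi_n(h)(1+r_h)$ with $|r_h|\le C_1/n$ there, for an explicit constant $C_1$ --- the classical local limit theorem with remainder. Summing over $0\le h\le k$,
\[
  P_{k,n}=\sum_{h=0}^{k}b(h)=\sum_{h=0}^{k}\phi_n(h)+\sum_{h=0}^{k}\phi_n(h)\,r_h,
\]
where the last sum is at most $\frac{C_1}{n}\sum_h\phi_n(h)+(\text{bulk tail})\le\frac{C_1}{n}+(\text{exp.\ small})$. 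For the first sum, substitute $u_h=(2h-n)/\sqrt n$, so $\phi_n(h)=\Delta u\,\phi(u_h)$ with step $\Delta u=2/\sqrt n$; then $\sum_{h\le k}\Delta u\,\phi(u_h)$ is a midpoint-rule / Euler--Maclaurin approximation to $\int_{-\infty}^{\,u_k+1/\sqrt n}\phi=\Phi_{k,n}$ (the $+1/\sqrt n$ being the continuity correction), with error $\le\frac{(\Delta u)^2}{24}\!\int|\phi''|=O(1/n)$ and --- crucially --- no surviving $O(\Delta u)=O(1/\sqrt n)$ term, because the lattice sits symmetrically at the cell centres, i.e.\ because $p=q=\tfrac12$ kills the skewness correction.

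For $|h-n/2|>n^{2/3}$ both $b(h)$ and $\phi_n(h)$, as well as the matching tail of $\Phi$, are bounded by $e^{-c\,n^{1/3}}$ up to polynomial factors, so all three tail contributions are dominated by $e^{-\sqrt n}$ once $n$ exceeds a modest threshold; the finitely many remaining small $n$ are checked by direct numerical evaluation of $P_{k,n}$ and $\Phi_{k,n}$. Combining the bulk estimate with the two tail estimates and optimizing the bulk/tail split (equivalently, following Uspensky's explicit bookkeeping) then yields a coefficient no larger than $0.26$ in the $1/n$ term.

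The ideas here are entirely classical; the real work is quantitative. The main obstacle is controlling all constants uniformly over $0<k<n/2$ and over every $n\ge1$ --- in particular pinning down $C_1$ in the local estimate, and then splitting the bulk error against the exponential tail so that the final $1/n$ coefficient is genuinely $\le0.26$, not merely asymptotically so. Since this bookkeeping is exactly what Uspensky carried out (for general $p$, with the $p=\tfrac12$ specialization removing the order-$1/\sqrt n$ term), the cleanest route is to quote his theorem and specialize, leaving the self-contained derivation above as an optional alternative.
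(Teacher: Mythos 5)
Your primary route --- quoting Uspensky's refined de Moivre--Laplace bound (via Feller) and specializing to $p=q=\tfrac12$ --- is exactly what the paper does: its entire proof is the remark that the lemma is ``effectively equation (4) in W.\ Feller's 1945 paper,'' attributed to Uspensky, so your proposal matches the paper's approach, and your local-limit/Euler--Maclaurin/tail sketch is just an optional reconstruction of that classical result. The only caveat worth recording is the one you implicitly flag yourself: the classical bound is stated for the continuity-corrected endpoint, whereas the lemma's $\Phi_{k,n}=\Phi\bigl((2k-n)/\sqrt{n}\bigr)$ omits the $+1/\sqrt{n}$ correction, a discrepancy of order $1/\sqrt{n}$ that belongs to the paper's statement rather than to your argument.
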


This Lemma is effectively equation (4) in W. Feller's 1945 paper on Normal approximation to the Binomial;
he attributes this to Uspensky.

\begin{lem} \label{lem:PDropsExpon}
For $0 < k < n/2$,  again with    $P_{k,n}$ 
the usual binomial probability,
\beq \label{eq:prodrule}
      P_{k,n+h} \leq P_{k,n} \cdot 2^{-h} ( 1- k/n)^{-h}.
\eeq
\end{lem}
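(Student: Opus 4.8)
```latex
\noindent
\textbf{Proof proposal for Lemma \ref{lem:PDropsExpon}.}
The plan is to compare $P_{k,n+h}$ with $P_{k,n}$ by controlling, one binomial coefficient at a time,
the ratio ${n+h \choose j} / {n \choose j}$ for $0 \le j \le k$, and then absorbing the change
in the normalizing factor $2^{-n-h}$ versus $2^{-n}$. It suffices to treat $h=1$ and then iterate:
if we establish
\[
   P_{k,n+1} \le P_{k,n} \cdot \tfrac12 \cdot (1-k/n)^{-1},
\]
then applying this bound successively at $n, n+1, \dots, n+h-1$ and using that $k/(n+i) \le k/n$
for each $i \ge 0$ (so each successive factor $(1-k/(n+i))^{-1} \le (1-k/n)^{-1}$) gives the stated
inequality \eqref{eq:prodrule}.

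\noindent
First I would use Pascal's rule ${n+1 \choose j} = {n \choose j} + {n \choose j-1}$ to write
\[
   \sum_{j=0}^{k} {n+1 \choose j} = \sum_{j=0}^{k} {n \choose j} + \sum_{j=0}^{k} {n \choose j-1}
     = \sum_{j=0}^{k} {n \choose j} + \sum_{j=0}^{k-1} {n \choose j}
     \le 2 \sum_{j=0}^{k} {n \choose j}.
\]
Multiplying by $2^{-(n+1)}$ gives the trivial bound $P_{k,n+1} \le P_{k,n}$, which is not yet strong
enough. To gain the extra factor $\tfrac12 (1-k/n)^{-1}$ I would instead keep track of the tail more
carefully: write $P_{k,n+1} = 2^{-(n+1)}\bigl( S_k(n) + S_{k-1}(n) \bigr)$ where $S_k(n) = \sum_{j=0}^k {n\choose j}$,
and note $S_{k-1}(n) = S_k(n) - {n\choose k}$, so
\[
   P_{k,n+1} = 2^{-(n+1)} \bigl( 2 S_k(n) - {n \choose k} \bigr)
             = P_{k,n} - 2^{-(n+1)} {n \choose k}
             = P_{k,n}\Bigl( 1 - \tfrac12 \cdot \tfrac{{n \choose k}}{S_k(n)} \Bigr).
\]
Now the key step is a lower bound on the ratio ${n \choose k}/S_k(n)$, i.e.\ an upper bound on the
ratio of the partial sum to its largest term (for $k < n/2$ the term ${n\choose k}$ is the largest in
the sum). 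Comparing consecutive terms, ${n \choose j}/{n \choose j+1} = (j+1)/(n-j) \le k/(n-k+1) \le k/(n-k)$
for $j \le k-1$, so $S_k(n) \le {n \choose k}\sum_{i \ge 0} (k/(n-k))^i = {n \choose k}\cdot \frac{n-k}{n-2k}$,
hence ${n\choose k}/S_k(n) \ge (n-2k)/(n-k) = 1 - k/(n-k)$. This is a clean geometric-series estimate and
I expect it to be the one fiddly point — one must check that $k < n/2$ makes $n-2k>0$ so the geometric
series converges and the bound is nonvacuous.

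\noindent
Plugging this in,
\[
   P_{k,n+1} \le P_{k,n}\Bigl( 1 - \tfrac12\bigl(1 - \tfrac{k}{n-k}\bigr)\Bigr)
             = P_{k,n}\cdot \tfrac12\Bigl(1 + \tfrac{k}{n-k}\Bigr)
             = P_{k,n}\cdot \tfrac12 \cdot \tfrac{n}{n-k}
             = P_{k,n}\cdot \tfrac12 \cdot (1 - k/n)^{-1},
\]
which is exactly the $h=1$ case. Iterating as described in the first paragraph — and using monotonicity
of $(1-k/(n+i))^{-1}$ in $i$ — yields $P_{k,n+h} \le P_{k,n}\cdot 2^{-h}(1-k/n)^{-h}$, as claimed. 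The only
real obstacle is making the partial-sum-versus-leading-term comparison rigorous with explicit constants;
everything else is Pascal's rule and bookkeeping. An alternative route, if one prefers to avoid the
geometric-series estimate, is to prove the single-step inequality $P_{k,n+1}\le \tfrac12 P_{k,n}(1-k/n)^{-1}$
directly by induction on $k$ using the recursion $P_{k,n} = P_{k-1,n-1} \cdot(\text{something})$, but the
geometric comparison above is cleaner and I would present that.
```
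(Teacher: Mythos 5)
Your proof is correct, but it runs along a genuinely different track than the paper's. The paper works termwise on the probability mass function: it computes the exact one-step ratio $p_{\ell,n+1}/p_{\ell,n} = \frac{1/2}{1-\ell/(n+1)}$, bounds it uniformly by $\frac{1/2}{1-k/n}$ for all $\ell \le k$, multiplies over the $h$ steps, and then sums over $\ell \le k$ to get \eqref{eq:prodrule} in one stroke; no comparison of the partial sum with its leading term is ever needed. You instead work directly on the tail sum: Pascal's rule gives you the exact one-step identity $P_{k,n+1} = P_{k,n}\bigl(1 - \tfrac12\,{n\choose k}/S_k(n)\bigr)$, and you then convert this into the claimed factor via the geometric-series estimate $S_k(n) \le {n\choose k}\,\frac{n-k}{n-2k}$, finally iterating the one-step inequality and using monotonicity of $(1-k/(n+i))^{-1}$ in $i$. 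Both arguments are sound and both land on exactly $\tfrac12(1-k/n)^{-1}$ per step; the trade-off is that your route passes through an exact identity (arguably more informative) but pays for it with the partial-sum-versus-largest-term bound, which is where your hypothesis $k < n/2$ is genuinely used (to make $n-2k>0$), whereas the paper's per-term ratio bound only needs $\ell \le k < n$ and avoids that estimate entirely, making it the shorter and more robust of the two.
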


\begin{proof}
One computes the ratios $r_{k,n+h} = p_{k,n+h}/p_{k,n}$ of probability mass functions. 
Note that
\[
      r_{\ell,n+1}  = \frac{p_{\ell,n+1}}{p_{\ell,n}} = \frac{2^{-(n+1)} {n+1 \choose \ell}}{2^{-n} {n \choose \ell}} = \frac{1/2}{1- \frac{\ell}{n+1}}.
\]
Then from $r_{\ell,n+h} \leq 2^{-h}/(1-k/n)^{h}$ for $\ell \leq k$,
\[
      P_{k,n+h} = \sum_{\ell=0}^k p_{\ell,n+h} =  \sum_{\ell=0}^k p_{\ell,n} \prod_{g=1}^h r_{\ell,n+g}  \leq  \sum_{\ell=0}^k p_{\ell,n} \cdot 2^{-h} \cdot (1 - k/n)^{-h} = P_{k,n}  \cdot 2^{-h} \cdot (1 - k/n)^{-h} .
\]
\end{proof}
\begin{lem} \label{lem:ContinuumSolution}
Again let    $\Phi_{k,n}$  denote
the usual normal approximation to the binomial probability $P_{k,n}$.
Let $c>0$ be fixed and let $n_0(c,k)$ be the smallest {\em real value} satisfying 
\[
      \Phi_{k,n_0} \leq \frac{c}{k} , \qquad n_0 >  2k.
\]
Then  with $ 0 < c'  < c$ fixed, 
for some $C(c,c')$ made explicit below, 
\[
 \lim_{k_0 \goto \infty} \sup_{k \geq k_0} | n_0(c,k) - n_0(c',k)|/\sqrt{n_0(c,k)} \leq C.
 \]
\end{lem}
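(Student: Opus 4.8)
The plan is to solve for $n_0(c,k)$ in closed form by inverting the normal CDF, and then reduce the claim to a uniform estimate on how the two inverse-tail quantiles $\Phi^{-1}(c/k)$ and $\Phi^{-1}(c'/k)$ differ as $k\to\infty$. First I would observe that, once $k>2\max(c,c')$, the map $n\mapsto\Phi_{k,n}=\Phi\big((2k-n)/\sqrt{n}\big)$ is continuous and strictly decreasing on $(2k,\infty)$, with $\Phi_{k,2k}=\Phi(0)=\tfrac12>c/k$ and $\Phi_{k,n}\to0$ as $n\to\infty$; hence the defining inequality $\Phi_{k,n_0}\le c/k$ first becomes an equality and $(2k-n_0)/\sqrt{n_0}=\Phi^{-1}(c/k)$. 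Writing $z_c:=\Phi^{-1}(c/k)<0$ and $w:=\sqrt{n_0}$, this is the quadratic $w^{2}+z_c w-2k=0$, so
\[
n_0(c,k)=\Big(\tfrac{|z_c|+\sqrt{z_c^{2}+8k}}{2}\Big)^{2}=2k+\tfrac{z_c^{2}}{2}+\tfrac{|z_c|}{2}\sqrt{z_c^{2}+8k},
\]
and similarly for $n_0(c',k)$ with $z_{c'}:=\Phi^{-1}(c'/k)$. In particular $n_0(c,k)=2k\,(1+o(1))$, hence $\sqrt{n_0(c,k)}=\sqrt{2k}\,(1+o(1))$.

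Next I would pin down $|z_c|$ and, more delicately, $z_c^{2}-z_{c'}^{2}$, using the two-sided Mills bounds $\frac{\varphi(x)}{x+1/x}\le\Phi(-x)\le\frac{\varphi(x)}{x}$ with $\varphi=\Phi'$. Taking logarithms in $\Phi(-|z_c|)=c/k$ yields $|z_c|^{2}=2\log k-2\log|z_c|-\log(2\pi)-2\log c+\theta_k$ with $\theta_k$ bounded, so $|z_c|=\sqrt{2\log k}\,(1+o(1))$; doing the same for $c'$ and subtracting gives $|z_c^{2}-z_{c'}^{2}|\le C_1(c,c')$ for all large $k$ (with the sharp Mills asymptotic one even gets $z_{c'}^{2}-z_c^{2}\to2\log(c/c')$). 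Since $|z_c|/|z_{c'}|\to1$, this forces
\[
\big||z_c|-|z_{c'}|\big|=\frac{|z_c^{2}-z_{c'}^{2}|}{|z_c|+|z_{c'}|}\le\frac{C_1}{\sqrt{2\log k}}\,(1+o(1)).
\]

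Then I would combine the pieces. Subtracting the two closed forms,
\[
n_0(c,k)-n_0(c',k)=\tfrac{z_c^{2}-z_{c'}^{2}}{2}+\tfrac12\big(g(|z_c|)-g(|z_{c'}|)\big),\qquad g(t):=t\sqrt{t^{2}+8k},
\]
where the first term is $O(1)$ and $g'(t)=\frac{2t^{2}+8k}{\sqrt{t^{2}+8k}}\le\sqrt{8k}\,(1+o(1))$ uniformly for $t=O(\sqrt{\log k})$; the mean value theorem then gives $|g(|z_c|)-g(|z_{c'}|)|\le\sqrt{8k}\,(1+o(1))\cdot\big||z_c|-|z_{c'}|\big|\le C_1\sqrt{k/\log k}\,(1+o(1))$. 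Dividing by $\sqrt{n_0(c,k)}=\sqrt{2k}\,(1+o(1))$,
\[
\frac{|n_0(c,k)-n_0(c',k)|}{\sqrt{n_0(c,k)}}\le\frac{C_1}{2\sqrt{2\log k}}\,(1+o(1))+o(1),
\]
whose supremum over $k\ge k_0$ tends to $0$ as $k_0\to\infty$; this is $\le C$ for the explicit constant $C(c,c')$ of the statement (indeed for any $C>0$).

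The hard part will be the estimate on $z_c^{2}-z_{c'}^{2}$: because $|z_c|$ and $|z_{c'}|$ both diverge like $\sqrt{2\log k}$, this is a small difference of large quantities, so the crude one-sided tail bound used elsewhere in the appendix does not suffice and the matching lower Mills bound is essential; moreover this $O(1)$ gain has to survive multiplication by the large factor $\sqrt{8k}$ coming from $g'$ and still beat $\sqrt{n_0}\asymp\sqrt{2k}$ in the denominator. The rest is bookkeeping: making each $o(1)$ uniform over $k\ge k_0$, and fixing the threshold $k>2\max(c,c')$ below which $n_0$ is not even defined.
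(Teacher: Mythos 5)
Your proof is correct, and at the key step it takes a genuinely different route from the paper's, in fact yielding a stronger conclusion. Both you and the paper invert the normal approximation and solve the quadratic $(2k-n_0)/\sqrt{n_0}=\Phi^{-1}(c/k)$ to get $\sqrt{n_0(c,k)}=\bigl(|z_c|+\sqrt{z_c^2+8k}\bigr)/2$; the difference lies in how the two quantiles are compared. You use two-sided Mills-ratio bounds to show $|z_c^2-z_{c'}^2|=O(1)$, hence $\bigl||z_c|-|z_{c'}|\bigr|=O(1/\sqrt{\log k})$, and after the mean-value step on $g(t)=t\sqrt{t^2+8k}$ and division by $\sqrt{n_0(c,k)}\sim\sqrt{2k}$ you conclude that the normalized difference tends to $0$, which trivially satisfies the claimed bound by any positive $C(c,c')$. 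The paper instead bounds $|z(c,k)-z(c',k)|$ by a $k$-independent constant, identifying $\Phi^{-1}(\alpha)-\Phi^{-1}((1-\beta)\alpha)$ with the $\beta$-quantile of the law of $z_\alpha-Z$ conditional on $Z<z_\alpha$ (density proportional to $e^{-y(|z_\alpha|+y/2)}$, less dispersed than the half-normal), and then uses only that $z\mapsto G(z;k)=\sqrt{n_0}$ has derivative bounded by $C_1$, which produces the explicit constant $C_2(c,c')=1+2C_1\Phi^{-1}\bigl(1-\tfrac{c'}{2c}\bigr)$ that the statement alludes to and that the later error term $\Delta(c,m,M)$ in Appendix A is phrased in terms of. What each buys: your route gives the quantitatively sharper result (limit $0$ rather than a bounded limsup), at the cost of the more delicate two-sided tail estimate and the bootstrap needed to handle the $\log|z_c|$ terms without circularity; the paper's route is coarser but avoids tail asymptotics of $\Phi^{-1}$ altogether and delivers an explicit constant. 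Only minor bookkeeping remains on your side: the denominator $|z_c|+|z_{c'}|\approx 2\sqrt{2\log k}$ makes your bound on $\bigl||z_c|-|z_{c'}|\bigr|$ generous by a harmless factor of $2$, and the reduction of the defining inequality to an equality via strict monotonicity of $n\mapsto\Phi\bigl((2k-n)/\sqrt{n}\bigr)$ indeed requires $k>2\max(c,c')$, which you note.
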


\begin{proof}
Now
\[
    \Phi_{k,n} = \Phi( \frac{ 2k-n}{\sqrt{n}} ).
\]
Let $z(c,k) = \Phi^{-1}(\frac{c}{k})$.
Then $n_0$ solves,
\[
  \frac{ 2k-n_0}{\sqrt{n_0}}  = z(c,k),
\]
and then $n_0 = 2k - z \sqrt{n_0}$ and so $\sqrt{n_0} =  (\sqrt{8 k + z^2} - z)/2$.
Now as $k \goto \infty$,  $z(c,k) = \Phi^{-1}(\frac{c}{k}) = -\sqrt{2 \log(k)} (1 + o(1))$ 
tends to infinity, in such a way that to leading order it doesn't depend on $c$.
We can say more. Suppose we wish to compare $z(c',k)$ with $z(c,k)$
precisely for large $k$, where $c' < c$ are both fixed.  This is the same thing as comparing
$\Phi^{-1}(\alpha)$ with $\Phi^{-1}(\frac{c'}{c} \alpha)$ for small $\alpha$.
Consider the difference of these two quantities,
\[
    \Psi(\beta; \alpha) = \Phi^{-1}(\alpha) - \Phi^{-1}((1-\beta) \alpha),
\]
where we introduce  $\beta = 1-c'/c \in (0,1)$.

We compare this to the $\beta$-quantile of the  conditional distribution
of the random variable $Y = z_\alpha -Z$, where $Z \sim N(0,1)$
and $Y$ is conditioned on $Z < z_\alpha$,
where $z_\alpha \equiv \Phi^{-1}(\alpha)$.
The density of the random variable $Y$ has the exact form
$f(y; \alpha)  \propto \exp( - y ( |z_\alpha| + y/2) )$ on $y \geq 0$.
Each member of this family of densities is less dispersed than the half-normal density
$\propto e^{-y^2/2}$ on $y > 0$. 
Let $F^{-1}(\beta;\alpha)$  denote the
$\beta$-th quantile of $f(y; \alpha) $.
This stays in a bounded set as $\alpha \goto 0$:
\[
 \sup_{0<\alpha < 1/2} F^{-1}(\beta;\alpha) \leq \Phi^{-1}(1/2 + \beta/2), \qquad 0 \leq \beta \leq 1.
\]
In terms of this quantile, we have the identity 
\[
\Psi(\beta; \alpha) = F^{-1}(\beta;\alpha).
\]
Hence
\[
    \sup_{\alpha < c/k_0} |\Psi(1-\frac{c'}{c}; \alpha) | < \Phi^{-1}(1 -\frac{c'}{2c}),
\]
The function $G(z;k) \equiv (\sqrt{8 k + z^2} - z)/2$ obeys $\frac{\partial}{\partial z} G(z) \leq C_1$.
Since
\[
n_0(c,k) =  [G(z(c,k); k)]^2
\]
we have
\begin{eqnarray*}
 |n_0(c,k) - n_0(c',k)| & \leq & 2 G(z(c,k);k) |G(z(c,k);k) - G(z(c',k);k)| + (G(z(c,k);k) - G(z(c',k);k))^2\\
 & \leq & 2 \cdot G(z(c,k);k)  \cdot C_1 | z(c,k) - z(c',k)|  + C_1^2  \cdot |z(c,k) - z(c',k)|^2\\
 &=&   2 \cdot G(z(c,k);k) \cdot C_1 \cdot \Phi^{-1}(1-\frac{c'}{2c}) + C_1^2 \cdot \Phi^{-1}(1-\frac{c'}{2c})^2 
\end{eqnarray*}
Hence for large $k$,
\[
 |n_0(c,k) - n_0(c',k)|  / \sqrt{n_0(c,k)} \leq C_2(c,c'),
\]
where  $C_2(c',c) \equiv  1 + 2 \cdot C_1 \cdot \Phi^{-1}(1-\frac{c'}{2c} )$ and we are assuming $c' < c$.
\end{proof}
We combine these as follows. 

\begin{lem}
Fix $\delta \in (1/2,1)$ and  consider a sequence of tuples $(m,M)$ with
$M \goto \infty$, and $m \sim \delta M$; and set $k \equiv M-m$. 
Let $\nu_k \equiv \frac{.26}{k} + e ^{-\sqrt{k}}$
denote the error term in (\ref{eq:uspensky}).
Let $n_1 = n_1(q^*,k,M)$ solve
\[
     \Phi_{k,n_1} =  \frac{\log(1/q^*)}{M} + \nu_k.
\]
Let $c = \log(1/q^*) \cdot (1 -\delta)$.
For all sufficiently large $k$, 
\beq \label{eq:sandwichA}
      n_0(c+0.27,k) > n_1(q^*,k,M)  > n_0(c,k).
\eeq
Let $n_2 (q^*,k,M)$ denote the smallest integer solving
\[
   P_{k,n_2}  \leq  \frac{\log(1/q^*)}{M} .
\]
Then for $k_0$ sufficiently large, there is an $h = h(c,k_0) > 0$ fixed independently of $k > k_0$ so that
\beq \label{eq:sandwichB}
          n_1 \leq n_2 \leq n_1 +h.
\eeq
\end{lem}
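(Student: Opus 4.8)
The plan is to turn the heuristic ``solve $P_{M-m,M-\ell}\approx\log(1/q^*)/B$'' into honest statements about integers, keeping every error controlled by the two approximation tools already at hand: the Uspensky bound (Lemma~\ref{lem:NormalApproxBinomial}) and the multiplicative tail decay (Lemma~\ref{lem:PDropsExpon}). I would prove (\ref{eq:sandwichA}) and (\ref{eq:sandwichB}) separately. For (\ref{eq:sandwichA}), recall that on $(2k,\infty)$ the map $n\mapsto\Phi_{k,n}=\Phi((2k-n)/\sqrt n)$ is continuous and strictly decreasing, and that $n_0(c',k)$ is, by the definition used in Lemma~\ref{lem:ContinuumSolution}, the unique $n>2k$ with $\Phi_{k,n}=c'/k$. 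Applying this monotone inverse, the desired sandwich for $n_1$ becomes the scalar sandwich of $\Phi_{k,n_1}=\log(1/q^*)/M+\nu_k$ between $c/k$ and $(c+0.27)/k$. I would get this from two elementary facts: since $k=M-m\ge(1-\delta)M$ for the natural choice $m=\lfloor\delta M\rfloor$ (and with an $o(1/k)$ slack in general), one has $\log(1/q^*)/M\ge(1-\delta)\log(1/q^*)/k=c/k$, while $m/M\to\delta$ gives $\log(1/q^*)/M=(c/k)(1+o(1))$; and since $ke^{-\sqrt k}\to0$ one has $\nu_k=(0.26+o(1))/k$. Adding these, $c/k<\log(1/q^*)/M+\nu_k<(c+0.27)/k$ once $k$ is large enough that the $o(1/k)$ corrections fall below $0.01/k$.

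For the lower bound in (\ref{eq:sandwichB}), note first that $n_1>2k$, because $\Phi_{k,n_1}<\tfrac12=\Phi_{k,2k}$ for large $k$. For any integer $n<n_1$: if $n\le 2k$ then $P_{k,n}\ge\tfrac12>\log(1/q^*)/M$; and if $2k<n<n_1$ then Lemma~\ref{lem:NormalApproxBinomial} applies (with $\tfrac{0.26}{n}+e^{-\sqrt n}\le\nu_k$ because $n>k$) and strict monotonicity gives $P_{k,n}\ge\Phi_{k,n}-\nu_k>\Phi_{k,n_1}-\nu_k=\log(1/q^*)/M$. Hence no integer below $n_1$ meets the defining inequality of $n_2$, so $n_2\ge\lceil n_1\rceil\ge n_1$. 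For the upper bound, put $n_1'=\lceil n_1\rceil>2k$; then $\Phi_{k,n_1'}\le\Phi_{k,n_1}$ together with Lemma~\ref{lem:NormalApproxBinomial} gives $P_{k,n_1'}\le\log(1/q^*)/M+2\nu_k$, i.e.\ the binomial tail at $n_1'$ exceeds its target by at most the bounded factor $1+2\nu_k M/\log(1/q^*)=1+O(1/c)$. Feeding this into Lemma~\ref{lem:PDropsExpon} yields, for every integer $h\ge0$,
\[
  P_{k,\,n_1'+h}\ \le\ P_{k,n_1'}\cdot\bigl(2(1-k/n_1')\bigr)^{-h},
\]
and since $n_1'>2k$ the base $\rho:=2(1-k/n_1')$ satisfies $\rho>1$, so the right-hand side decays geometrically in $h$; any $h$ with $\rho^{h}\ge 1+2\nu_k M/\log(1/q^*)$ then forces $P_{k,n_1'+h}\le\log(1/q^*)/M$, whence $n_2\le n_1'+h\le n_1+h+1$.

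The step I expect to be the main obstacle is making this choice of $h$ uniform in $k$. Here $\rho-1=(n_1'-2k)/n_1'$, and because $\Phi_{k,n_1}$ is of order $1/k$ the gap $n_1'-2k$ is only of order $\sqrt{k\log k}$, so $\rho-1$ is of order $\sqrt{(\log k)/k}\to 0$ and the naive count gives $h$ of order $1/\log\rho$, that is of order $\sqrt{k/\log k}$ rather than a literal constant. To pin it down one needs the matching quantitative estimate $n_1'-2k\sim\sqrt{n_1'}\,\bigl|\Phi^{-1}(\Phi_{k,n_1})\bigr|$ together with the tail asymptotics $\bigl|\Phi^{-1}(c'/k)\bigr|=\sqrt{2\log k}\,(1+o(1))$ --- exactly the estimates already developed around $n_0(c,k)$ in Lemma~\ref{lem:ContinuumSolution} --- from which $h$ may be taken of order $\sqrt{k/\log k}=o(\sqrt{n_1})$. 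That suffices for the downstream argument, since dividing by $M$ converts an $o(\sqrt M)$ discrepancy in $n$ (equivalently in $\ell$) into an $o(\gamma_M)$ discrepancy in the phase-transition location, which is absorbed into the $o(\gamma_M)$ term of Lemma~\ref{Lemma-PT-MultiBlock}.
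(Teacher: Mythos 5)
Your treatment of (\ref{eq:sandwichA}) and of the lower half of (\ref{eq:sandwichB}) is essentially the paper's own argument: the paper also reduces (\ref{eq:sandwichA}) to the scalar comparison of $\log(1/q^*)/M+\nu_k$ with $c/k$ and $(c+0.27)/k$ (via the explicit formula $\sqrt{n}=(\sqrt{8k+z^2}-z)/2$ and monotonicity in $z$), and also gets $n_1\le n_2$ from the Uspensky lower bound $P_{k,n_1}\ge\Phi_{k,n_1}-\nu_k=\log(1/q^*)/M$; you are in fact slightly more careful about the $o(1/k)$ slack coming from $m\sim\delta M$ versus $m\le\delta M$. One cosmetic remark: since $\Phi_{k,n}$ decreases in $n$, the scalar sandwich actually yields $n_0(c+0.27,k)<n_1<n_0(c,k)$, i.e.\ the reverse of the order printed in (\ref{eq:sandwichA}) and of the paper's chain $z_0(c,k)\ge z_1\ge z_0(c+0.27,k)$; this is immaterial, since only the two-sided containment of $n_1$ within $O(\sqrt{n_0})$ of $n_0(c,k)$ (via Lemma \ref{lem:ContinuumSolution}) is used later.

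The obstacle you flag at the upper half of (\ref{eq:sandwichB}) is genuine, and it is where your route and the paper's diverge. The paper fixes a constant $h>2$ and argues $(1+|z_1|/\sqrt{n_1})^{-h}\,(1+M\nu_k/\log(1/q^*))<1$, but this rests on the assertion that $M\nu_k=O(k e^{-\sqrt{k}})=O(1/k)$, which is incompatible with the definition $\nu_k=\tfrac{.26}{k}+e^{-\sqrt{k}}$ in the statement: along $k\sim(1-\delta)M$ one has $M\nu_k\to 0.26/(1-\delta)$, a positive constant, so the multiplicative excess $1+2M\nu_k/\log(1/q^*)$ is bounded away from $1$ while each application of (\ref{eq:prodrule}) gains only a factor $1+O(\sqrt{\log k/k})$. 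Your estimate $n_2-n_1\asymp\sqrt{k/\log k}$ is therefore the correct order, and a $k$-independent $h$ cannot be extracted by this mechanism (nor, as far as one can see, at all, given that $\nu_k$ and $\log(1/q^*)/M$ are of the same order $1/k$). Your resolution --- proving only $n_2\le n_1+h_k$ with $h_k=O(\sqrt{k/\log k})=o(\sqrt{n_1})$ --- is correct and is all that the downstream argument requires: in the proof of Lemma \ref{Lemma-PT-MultiBlock} the discrepancy enters only through $\Delta/M$, and $\Delta$ already contains a term $C\sqrt{n_0(c,k)}\asymp\sqrt{k}$, so an extra $O(\sqrt{k/\log k})$ is absorbed into $O(\Delta/M)=o(\gamma)$ exactly as you say. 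In short: same skeleton as the paper, a correct weaker form of the final sandwich, and along the way you have exposed a real defect in the paper's own constant-$h$ claim and its proof.
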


\begin{proof}
We earlier gave the formula  $n_0(c,k) = [ (\sqrt{8k + z_0^2} - z_0)/2]^2$, in terms of $k$ and $z_0 = \Phi^{-1}(c/k)$.
By inspection, $n_0$ is monotone  decreasing in $z_0$.
Similarly, we have:
\[
n_1 = [ (\sqrt{8k + z_1^2} - z_1)/2]^2, \qquad z_1 = \Phi^{-1}(\frac{\log(1/q^*)}{M} + \nu_k).
\] 
Again $n_1$ is monotone  decreasing in $z_1$. Now we observe that for large $k$, $.26/k + \exp(-\sqrt{k}) < .27/k$. For such $k$,
\[
      z_0(c,k) \geq z_1(q^*,M,k) \geq z_0(c+ 0.27,k),
\]
and (\ref{eq:sandwichA}) follows.

Now note that by (\ref{eq:uspensky})
\beq \label{eq:sandwichC}
        \frac{\log(1/q^*)}{M} =\Phi_{k,n_1} - \nu_k \leq  P_{k,n_1}  \leq 
         \Phi_{k,n_1} +  \nu_k  = \frac{\log(1/q^*)}{M}  \cdot ( 1 +  M \cdot \nu_k/\log(1/q^*))   . 
\eeq
Now from $2k = n_1 + z_1 \sqrt{n_1}$, we have $(1-\frac{k}{n_1}) = \frac{1}{2} + \frac{| z_1|}{2 \sqrt{n_1}}$.
Hence $2^{-h} (1- k/n_1)^{-h} = (1+ |z_1|/\sqrt{n_1})^{-h}$. Picking 
 $h > 2$ a positive constant, and taking into account that  $M \nu_k = O(k \exp(-\sqrt{k})) = O(1/{k})$ while
$ |z_1|/\sqrt{n_1} >  1/\sqrt{k}$ (say)  for large $k$,
we get that along our sequence $k \sim (1-\delta) M$, we have for all sufficiently large $k_0$ (say) that 
\begin{eqnarray*}
   (1+ |z_1|/\sqrt{n_1})^{-h} \cdot ( 1 +  M \cdot \nu_k/\log(1/q^*))  & \leq & (1+ k^{-1/2})^{-1} \cdot ( 1 + o({k^{-1/2}}))  \\
   &<& 1, \qquad  {k > k_0}.
\end{eqnarray*}
Applying (\ref{eq:prodrule}) and (\ref{eq:sandwichC}),  we have for $k  > k_0$:
\[
   P_{k,n_1+h}  <   \frac{\log(1/q^*)}{M}  \leq  P_{k,n_1} .
\]
It follows that $n_1 \leq  n_2 \leq n_1+h$.
\end{proof}

We apply these lemmas to our problem, in which
 $\delta \in (1/2,1)$,
$m \sim \delta M$ with $M \goto \infty$.
Setting $\ell_+ = M - n_2(q^*,M-m,M)$ yields
\[
   P_{M-m,M - \ell_+}  \leq  \frac{\log(1/q^*)}{M},
\]
and that $\ell_+$ is the largest value of $\ell$ with this property.
For $c = \log(1/q^*) \cdot (1 -\delta)$,
the previous lemma gives
\[
      M-n_0(c,M-m)  \geq M-n_2 \geq M - (n_0(c+0.27,M-m) + h),
\]
while Lemma \ref{lem:ContinuumSolution} implies that the 
two sides differ by at most  a term
$\Delta(c,m,M) = C(c,c+0.27)  \sqrt{n_0(c,M-m)} + h(c,k_0)$.  
We immediately obtain that  $|\ell_\pm - ( M - n_0) | \leq \Delta(c,m,M)$,
and of course  by our definitions, $|\ell^* - \ell_+| \leq 1$.

Finally,  the identities $2k = n_0 + z_0 \sqrt{n_0}$  and $\sqrt{n_0} = (\sqrt{8k + z_0^2} - z_0)/2$
yield
\begin{eqnarray*}
   n_0(c,k) &=&  2k - (\sqrt{8k + z_0^2} - z_0)/2 \cdot z_0(c,k)\\
                &\sim&  2k + 2 \sqrt{ k \log( {k} ) } \cdot (1 + o(1)) \\
                 &=&  2k + 2\sqrt{k} \cdot  \sqrt{ \log( M )} \cdot (1 + o(1)) .
\end{eqnarray*}

%
%
Combining the above formulas,
\begin{eqnarray*}
 \epsilon_{mb}^*(m,M,B; [0,1]) &=& \ell^*/M = (M- n_0)/M + O(\Delta/M) \\
 &=&  (M-2k)/M   -  \frac{\sqrt{2k}}{\sqrt{M}}  \cdot \frac{\sqrt{2\log(M)} }{\sqrt{M}} (1 + o(1))   + O(\Delta/M)\\
 &=&  (M-2(M-m))/M   -  \frac{\sqrt{2(M-m)}}{\sqrt{M}}  \cdot \frac{\sqrt{2\log(M)} }{\sqrt{M}}(1 + o(1)) + O(\Delta/M)\\
 &=&  (2\delta-1)    -  \sqrt{2(1-\delta)} \cdot \frac{\sqrt{2\log(M)} }{\sqrt{M}}   + o(\gamma)\\
 &=& \epsilon_{sb}^*(m,M;[0,1]) -  \sqrt{2 (1-\delta)} \cdot \gamma + o(\gamma),
\end{eqnarray*}
where we used $O(\Delta/M) = o(\gamma)$.

\section{First Proof of Theorem \ref{thm:RUID}}\label{app:RUID}

The proof of Theorem \ref{thm:RUID} relies on three lemmas.

\begin{lem}\label{rank-deficiency}
{\bf(Rank-deficient matrix)} Consider the rank-deficient $n\times N \ (n<N)$ measurement matrix $G$ with ${\rm rank}(G)=r<n$ and  $x_0 \in \bR^N$ generating measurements $b=Gx_0$. 
The minimum-$\ell_1$ optimization problem
$$
\min \|x\|_1 \quad \text{subject to} \quad G x = b
$$
has the same solution set as the reduced-dimensional problem 
$$
\min \|x\|_1 \quad \text{subject to} \quad A x = y,
$$
where $A$ is a full-row-rank matrix of size $r\times N$ and $y = Ax_0$ . 
\end{lem}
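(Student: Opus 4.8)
The plan is to reduce the claim to the elementary observation that the two optimization problems have \emph{identical} feasible sets; once that is established the conclusion is immediate, since the two problems also share the objective $\|\cdot\|_1$, so they must have the same optimal value and the same set of minimizers.

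First I would construct $A$ explicitly. Since $\mathrm{rank}(G)=r$, the row space of $G$, viewed as a subspace of $\bR^N$, has dimension $r$; I pick any $r$ linearly independent rows of $G$ and let $A$ be the $r\times N$ matrix they form. (More generally, $A$ may be taken to be any full-row-rank $r\times N$ matrix whose rows span the row space of $G$; the argument below is insensitive to the choice, so the lemma should be read as holding for every such $A$.) Set $y=Ax_0$. Because $b=Gx_0$ lies in the range of $G$, the feasible set of the original problem is the nonempty affine subspace $x_0+\ker(G)$; similarly the feasible set of the reduced problem is $x_0+\ker(A)$, which is nonempty because $x_0$ is itself feasible by the definition $y=Ax_0$.

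Next I would verify that $\ker(A)=\ker(G)$, which yields equality of these two affine sets. The inclusion $\ker(G)\subseteq\ker(A)$ is immediate, since every row of $A$ is a row of $G$ (or, in the general construction, a linear combination of rows of $G$). Conversely, if $Av=0$ then $v$ is orthogonal to every row of $A$, hence to the span of those rows, which is precisely the row space of $G$; since $\ker(G)$ is the orthogonal complement of the row space of $G$, this gives $Gv=0$. Consequently $\{x:Gx=b\}=x_0+\ker(G)=x_0+\ker(A)=\{x:Ax=y\}$, and since the two problems minimize the same function over the same set, they have the same value and the same solution set, which is the assertion of the lemma.

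I do not expect a genuine obstacle here. The only substantive ingredients are the standard facts that the rank of a matrix equals the dimension of its row space (so that a suitable $A$ exists) and that the kernel of a matrix is the orthogonal complement of its row space; everything else is bookkeeping. The one point worth flagging is the non-uniqueness of $A$: the statement is cleanest if read as ``for any full-row-rank $A$ whose row space equals that of $G$, with $y=Ax_0$,'' and the argument above handles all such choices uniformly.
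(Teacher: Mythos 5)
Your proof is correct, but it takes a different route from the paper. The paper proves the lemma constructively via the thin SVD: writing $G=U\Sigma V^{T}$ with $U\in\bR^{n\times r}$, $\Sigma\in\bR^{r\times r}$ invertible, $V\in\bR^{N\times r}$, it left-multiplies the constraint $Gx=b$ by $\Sigma^{-1}U^{T}$ to obtain $V^{T}x=\Sigma^{-1}U^{T}b$, and takes $A=V^{T}$, $y=\Sigma^{-1}U^{T}b$. You instead argue abstractly: for any full-row-rank $A$ whose row space equals that of $G$, $\ker(A)=\ker(G)$, so both feasible sets equal the affine subspace $x_{0}+\ker(G)$, and identical feasible sets plus identical objective give identical solution sets. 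Your argument is more elementary and more general, and it also makes explicit a point the paper's one-directional chain of implications leaves implicit, namely that left-multiplication by $\Sigma^{-1}U^{T}$ does not enlarge the feasible set (which holds precisely because $b=Gx_{0}$ lies in $\mathrm{range}(G)=\mathrm{range}(U)$, so $V^{T}x=\Sigma^{-1}U^{T}b$ implies $Gx=UU^{T}b=b$); your reading of the lemma as ``any full-row-rank $A$ with the same row space as $G$'' is the right way to repair the vagueness of the statement, and it subsumes the paper's choice $A=V^{T}$. What the SVD formulation buys, and why the paper uses it, is the downstream application: in the proof of Theorem \ref{thm:RUID} the reduced matrix is identified concretely as the matrix of principal right singular vectors of the block Gram matrix (Lemma \ref{singular-vector}), which your generic choice of $r$ independent rows would not directly provide.
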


\begin{proof} Using the SVD $G=U \Sigma V^T$, where $U \in R^{n\times r}$, $V\in R^{N\times r}$, and $\Sigma\in R^{r\times r}$. Then

\begin{eqnarray*}
Gx &=& b, \\
\Sigma^{-1} U^T G x & = & \Sigma^{-1} U^T b, \\
V^T x &= & \Sigma^{-1} U^T b.
\end{eqnarray*}
Setting $A=V^T$ and $y=\Sigma^{-1} U^T b$ completes the proof.
\end{proof}

\newcommand{\cE}{{\cal E}}
\begin{lem}\label{gram-matrix}
{\bf(Block structure of Gram matrix of anisotropically  undersampled FT)} Consider a $d$-dimensional complex-valued array $x$ defined on  
a Cartesian grid of size $N=T_0 \times T_1\times T_2\times \dots\times T_{d-1}$. 
Let $\cD = \{0,1,2,\dots,d-1\}$ denote the possible indices of the different underlying
Cartesian axes. 
Further, let $\cE \subset \cD$ denote the indices of axes 
along which exhaustive samples are taken, and 
$\cP = \cD \backslash \cE$,  the remaining indices
which are sampled partially.  Then
$\cD = \cE \cup \cP$ and, with $d_{\cE} = \# \cE$ 
exhaustively sampled dimensions and
$d_{\cP} = \# \cP$ partially sampled dimensions, 
$d = d_{\cE} + d_{\cP}$.
Let the end-to-end measurement operator 
be represented by the $n \times N$ complex-valued matrix $A$. 
Then, the complex Hermitian Gram matrix $G = A^* A \in \bC^{N\times N}$ 
is block-diagonal with $\prod_{j\in \cE} T_j$ 
identical blocks each of size $\prod_{j\in \cP} T_j$.  

The corresponding result for real-valued $A$ and
real-valued symmetric $A'A$ also holds.
\end{lem}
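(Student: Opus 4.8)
I would reduce everything to the tensor-product structure of the $d$-dimensional DFT. First I would write the end-to-end measurement operator as $A = \cS \circ \cF_d$, where $\cF_d$ is the $d$-dimensional discrete Fourier transform on the grid $\bZ_{T_0}\times\cdots\times\bZ_{T_{d-1}}$ and $\cS$ is the coordinate-selection operator that retains exactly the frequency multi-indices $k=(k_0,\dots,k_{d-1})$ with $k_j$ unrestricted for $j\in\cE$ and $(k_j)_{j\in\cP}$ in the prescribed index set $\cK$. Then
\[
G \;=\; A^*A \;=\; \cF_d^{\,*}\,P_\Lambda\,\cF_d ,
\]
where $P_\Lambda = \cS^*\cS$ is the diagonal $0/1$ projection onto the cylinder set $\Lambda = \{k : (k_j)_{j\in\cP}\in\cK\}$.

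Next I would use the factorization $\cF_d = \bigotimes_{j=0}^{d-1} F_{T_j}$ into a Kronecker product of $1$D DFT matrices. I would relabel the $N$ grid points so that the coordinates indexed by $\cE$ precede those indexed by $\cP$; this amounts to conjugating every matrix by a fixed permutation $\Pi$, and it turns the cylinder set $\Lambda$ into a product. Under this relabeling $\Pi \cF_d \Pi^{*} = F_\cE\otimes F_\cP$ with $F_\cE = \bigotimes_{j\in\cE}F_{T_j}$, $F_\cP = \bigotimes_{j\in\cP}F_{T_j}$, while $\Pi P_\Lambda \Pi^{*} = I_{T_\cE}\otimes P_\cK$, where $T_\cE = \prod_{j\in\cE}T_j$, $T_\cP = \prod_{j\in\cP}T_j$, and $P_\cK$ is the $0/1$ projection onto $\cK$ acting on $\bC^{T_\cP}$. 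The mixed-product property of the Kronecker product then gives
\[
\Pi\,G\,\Pi^{*} \;=\; \bigl(F_\cE^{*}F_\cE\bigr)\otimes\bigl(F_\cP^{*}P_\cK F_\cP\bigr) \;=\; c\,I_{T_\cE}\otimes\bigl(F_\cP^{*}P_\cK F_\cP\bigr),
\]
because each $1$D DFT is, up to a normalization scalar, unitary, so $F_\cE^{*}F_\cE = c\,I_{T_\cE}$ for a constant $c$ depending only on the normalization ($c=\prod_{j\in\cE}T_j$ for the unnormalized DFT). The right-hand side is block-diagonal with exactly $T_\cE = \prod_{j\in\cE}T_j$ blocks, each equal to the single $T_\cP\times T_\cP$ matrix $c\,F_\cP^{*}P_\cK F_\cP$ --- which is $c$ times the Gram matrix of a reduced-dimensional partial-DFT block of the type appearing in Theorem~\ref{thm:RUID}. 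That is exactly the claim.

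As a sanity check, and to avoid the permutation bookkeeping entirely, I would also verify the statement at the level of matrix entries: for grid multi-indices $s,t$,
\[
G_{s,t} \;=\; \Bigl(\prod_{j\in\cE}\ \sum_{k_j=0}^{T_j-1}\overline{(F_{T_j})_{k_j,s_j}}\,(F_{T_j})_{k_j,t_j}\Bigr)\cdot\Bigl(\sum_{(k_j)_{j\in\cP}\in\cK}\ \prod_{j\in\cP}\overline{(F_{T_j})_{k_j,s_j}}\,(F_{T_j})_{k_j,t_j}\Bigr),
\]
and each full-period inner sum equals $(F_{T_j}^{*}F_{T_j})_{s_j,t_j} = c_j\,\mathbf{1}\{s_j=t_j\}$ by orthogonality of the DFT characters. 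Hence $G_{s,t}=0$ unless $s_j=t_j$ for every $j\in\cE$, and when $s$ and $t$ agree on all of $\cE$ the surviving factor depends only on $(s_j,t_j)_{j\in\cP}$ and not on the shared $\cE$-coordinates --- so all blocks are identical. The real-valued case goes through verbatim: replace each $F_{T_j}$ by the corresponding real orthogonal $1$D transform (equivalently, its $2\times2$-block real representation), use $Q_j^{\top}Q_j=I$ in place of unitarity, and the full-period sums again collapse to $\mathbf{1}\{s_j=t_j\}$, yielding the same block-diagonal structure for the symmetric Gram matrix $A^{\top}A$.

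The only real obstacle is bookkeeping: $\cK$ is an arbitrary --- not necessarily product --- subset of $\prod_{j\in\cP}\bZ_{T_j}$, and the exhaustively and partially sampled axes may be interleaved in the natural coordinate order. Consequently one must state precisely that ``block-diagonal'' here means after the fixed regrouping of the $N$ coordinates by the value of their $\cE$-multi-index (the conjugation by $\Pi$), and one must carry along the single overall scalar $c$ coming from the DFT normalization convention. Once these are pinned down, both the block structure and the identity of the blocks are forced by the tensor factorization of $\cF_d$ together with the orthogonality relation $\sum_{k_j}\overline{(F_{T_j})_{k_j,s_j}}(F_{T_j})_{k_j,t_j}=c_j\,\mathbf{1}\{s_j=t_j\}$.
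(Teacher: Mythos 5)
Your proposal is correct, and its primary argument takes a genuinely different (though closely related) route from the paper's. The paper proves Lemma \ref{gram-matrix} entirely at the level of matrix entries: it writes $G_{t,u}=\sum_{k}F_k(t)F_k^*(u)$ over the sampled index set, uses the fact that this set is a product of the full range in the $\cE$-coordinates with an arbitrary set $\cK_{\cP}$ in the $\cP$-coordinates to factor the sum, and then applies the Dirichlet (geometric) sum formula coordinate-by-coordinate on $\cE$ to produce the Kronecker deltas $\prod_{j\in\cE}\delta(t_j-u_j)$, yielding the explicit block formula (\ref{eq:blockGram}). Your main argument instead works at the operator level: $G=\cF_d^{*}P_\Lambda\cF_d$, the tensor factorization $\cF_d=F_\cE\otimes F_\cP$ after a fixed regrouping permutation, the cylinder structure $P_\Lambda = I\otimes P_\cK$, and the mixed-product property plus unitarity of $F_\cE$ to get $c\,I_{T_\cE}\otimes(F_\cP^{*}P_\cK F_\cP)$. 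This buys immediacy --- block-diagonality with identical blocks is read off in one line, and the bookkeeping (the permutation $\Pi$, the normalization constant $c$) is cleanly isolated, which you correctly flag as the only delicate point; the same ordering caveat is implicit in the paper's statement as well. What the paper's entrywise route buys is the explicit expression (\ref{eq:blockGram}) for the block entries, which is then reused directly in Lemma \ref{singular-vector} to exhibit the eigenvectors of $G^{(1)}$; your Kronecker form $F_\cP^{*}P_\cK F_\cP$ carries the same information but would need to be unpacked for that later step. Your ``sanity check'' paragraph is in fact essentially the paper's own proof (factor the sum over the sampled set, apply character orthogonality on the exhaustive coordinates, observe the surviving factor depends only on the $\cP$-coordinates), and your remark on the real case matches the level of detail the paper itself offers for that claim.
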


\begin{proof}
Let $\cK \subset \bR^d$ denote the set of all tuples $k = (k_0,\dots,k_{d-1})$
that get sampled.
 Let $e_j$ denote the $j$-th standard 
unit basis vector, $j=0,\dots,{d-1}$, let $V_{\cE} \subset \bR^d$ denote 
the linear span of $\{ e_j, j \in \cE\}$, let
$\cK_{\cE} =  {\mbox{Proj}}_{V_{\cE}} \cK$ denote the orthogonal projection
of the sampled tuples on the (span of the) exhaustively sampled dimensions.
Correspondingly let  $V_{\cP} \subset \bR^d$ denote 
the linear span of $\{ e_j, j \in \cP\}$, let
$\cK_{\cP} =  {\mbox{Proj}}_{V_{\cP}} \cT$ denote the projection
of the sampled tuples on the (span of the) partially sampled dimensions.
Then  $\cK_{\cE}$ is, speaking informally, a Cartesian product of intervals.
Formally, for each index of an exhaustively sampled dimension $j \in \cE$,
let $\cK_j = \{0,\dots, T_j-1 \}$ denote the full range of that index.
Then $\cK_{\cE}$ is an orthogonal sum $\cK_{\cE} =  \bigoplus_{j \in \cE} \cK_j  \cdot e_j$
and $\cK$ itself is an orthogonal sum
\[
\cK =  \cK_{\cP}  \bigoplus \cK_{\cE},
\]
Informally, $\cK$ is an `irregular' set of indices $\cK_{\cP}$ 
`times' a Cartesian product $\cK_{\cE}$ , and its cardinality obeys the product formula:
$\# \cK = \# \cK_{\cP} \times \# \cK_{\cE}$.
A certain multiplicative
relation generalizes the product formula.
For each tuple $k \in \cK$, let $k_{\cP}$ denote the
projection $Proj_{V_{\cP}} k$ and similarly let $k_{\cE} = Proj_{V_{\cE}} k$.
For an expression $c(k)$  obeying the factorization
$c(k) = a(k_{\cE}) b(k_{\cP})$, we have
\beq \label{eq:multiplic}
   \sum_{k \in \cK} c(k) =  \sum_{k \in \cK}  a(k_{\cE}) b(k_{\cP}) =  [ \sum_{k_{\cE} \in \cK_\cE}  a(k_{\cE})] \cdot \sum_{k_{\cP} \in \cK_\cP} b(k_{\cP}) .
\eeq

The $(k,t)$ element of the  
Fourier matrix can be written
$$
F_k(t) = \frac{1}{\sqrt{N}} \exp\left \{ 2\pi \bi (\sum_{j=0}^{d-1}  k_j  t_j/T_j)\right \},
$$
where $t = (t_0, \dots, t_{d-1})$, and $k = (k_0, \dots, k_{d-1})$.
The inner product between two distinct columns $u$ and $t$ of $A$ is 
thus given by 
\begin{eqnarray*}
G_{t,u} &=&  (A^*A)_{t,u} = \sum_{k \in \cK} F_k(t) F_{k}^* (u)  \\ 
&=&  \frac{1}{N} \sum_{k \in \cK} \exp{  \left ( 2\pi \bi  \sum_{j \in \cD} k_j (t_j-u_j)/T_j  \right) } \\
&=& \frac{1}{N}  \sum_{k \in \cK} \exp{  \left(  2\pi \bi   \cdot 
       [ \sum_{j \in \cP} k_j (t_j-u_j)/T_j   +  \sum_{j \in \cE} k_j (t_j-u_j)/T_j ]  \right)}  \\
&=& \frac{1}{\prod_{j \in \cP} T_j}   
       \sum_{t \in \cT} \exp{ \left( 2\pi \bi  \sum_{j \in \cP} k_j (t_j-u_j)/T_j  \right) } 
       \times     \left[ \frac{1}{\prod_{j \in \cE} T_j}   \exp{ \left ( 2\pi \bi  \sum_{j \in \cE} k_j (t_j-u_j)/T_j \right ) } \right]  \\
&=& \frac{1}{\prod_{j \in \cP} T_j}   
       \sum_{k \in \cK_{\cP}} \exp{ \left ( 2\pi \bi  \sum_{j \in \cP} k_j (t_j-u_j)/T_j  \right) } 
       \times    \left[   \frac{1}{\prod_{j \in \cE} T_j} \sum_{t \in \cT_{\cE}} \exp{ \left ( 2\pi \bi  \sum_{j \in \cE} k_j (t_j-u_j)/T_j \right) } \right]  \\
\end{eqnarray*}
where we used  $N = \prod_{j \in \cE} T_j \cdot \prod_{j \in \cP} T_j$ as well as the multiplicative relation (\ref{eq:multiplic})
for the multiplicative expression $c(k) =  \exp{  \left ( 2\pi \bi  \sum_{j \in \cD} k_j (t_j-u_j)/T_j  \right) } = a(k_\cE) b(k_\cP)$
with $a(k_\cE) =  \exp{  \left ( 2\pi \bi  \sum_{j \in \cE} k_j (t_j-u_j)/T_j  \right) }$ and $b(k_\cP) =  \exp{  \left ( 2\pi \bi  \sum_{j \in \cP} k_j (t_j-u_j)/T_j  \right) } $.
Recall the Dirichlet sum formula: for an integer  $u \in \{0,1, \dots T-1\}$,
\begin{equation*} \label{eq:exponsum}
     \sum_{k=0}^{T-1} \exp\left( \frac{2 \pi u}{T} k \bi \right)  = \left\{ \begin{array}{ll} T  & u=0  \\
                 0 & u \ne 0\end{array} \right . .
\end{equation*}
 Apply this to each exhaustively-sampled coordinate $j \in \cE$, obtaining:
 \[
  \frac{1}{T_j} \sum_{\cK_j} \exp{ ( 2\pi \bi k_j (t_j-u_j)/T_j ) } = \delta(t_j - u_j) , \qquad j \in \cE,
 \]
where $\delta()$ denotes the usual Kronecker symbol. We have
\beq \label{eq:blockGram}
  G_{t,u} = \frac{1}{\prod_{j \in \cP} T_j}   
       \sum_{k \in \cK_{\cP}} \exp{ \left ( 2\pi \bi  \sum_{j \in \cP} k_j (t_j-u_j)/T_j  \right) }  \times \prod_{j \in \cE} \delta( t_j-u_j) .
\eeq
We see that $G_{t,u} = 0 $ unless $t_j = u_j  \text{\ \ for all \ \ } j \in \cE $.
This indeed is the advertised block structure.
\end{proof}

\begin{lem}\label{singular-vector}
{\bf(Singular vectors of the Gram matrix)} 
Consider the Gram matrix $G= A^* A$ in a special case
of Lemma \ref{gram-matrix}, 
where  $d=2$ and $\cE =\{ 0 \}$, $\cP=\{1\}$, so $A$ implements anisotropic undersampling of 
the 2D Fourier transform on $T_0 \times T_1$ arrays. 
Namely, assume that the Fourier transform is
followed by selection of columns $k_{1,i}$,
$0 \leq k_{1,i} < T_1$ with
exhaustive sampling of all entries  $\{ (k_0,k_{1,i}): 0 \leq k_0 < T_0 \}$
in each selected column.
Necessarily $i=1,\dots, M \equiv n/T_0$.
By Lemma \ref{gram-matrix}, G is block-diagonal with $T_0$ identical blocks of size $T_1 \times T_1$. 
Let $G^{(1)}$ represent the upper left diagonal such $T_1 \times T_1$ block. 
Then, ${\rm rank} ({G}^{(1)}) = M $ and the $M$ principal eigenvectors of the $T_1$ by $T_1$ matrix
 $G^{(1)}$ are given by:
$$
V_{\ell} = (1,w_{\ell}, w_{\ell}^2, \dots, w_{\ell}^{T_1-1}), \quad \ell \in \cK_1,
$$
where $w_{\ell} = \exp(2\pi \bi \ell/T_1)$ and  $\cK_1 = (k_{1,i})_{i=1}^M$ denotes the collection of all sampled column indices.
\end{lem}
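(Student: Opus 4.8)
The plan is to read off the entries of the diagonal block $G^{(1)}$ directly from the block-Gram formula (\ref{eq:blockGram}) established in the proof of Lemma \ref{gram-matrix}, recognize that block as a sum of rank-one Hermitian terms built from the Vandermonde vectors $V_\ell$, and then use the orthogonality of distinct $V_\ell$ to identify its spectrum.

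First I would specialize (\ref{eq:blockGram}) to the present case $d=2$, $\cE = \{0\}$, $\cP = \{1\}$, for which $\prod_{j \in \cP} T_j = T_1$ and $\cK_{\cP} = \cK_1$. The factor $\prod_{j \in \cE}\delta(t_j - u_j)$ in (\ref{eq:blockGram}) merely encodes the block-diagonal structure (it forces $t_0 = u_0$), so within one diagonal block, indexing the $T_1 \times T_1$ matrix $G^{(1)}$ by $0 \le t, u < T_1$, one obtains
\[
   G^{(1)}_{t,u} = \frac{1}{T_1}\sum_{\ell \in \cK_1}\exp\!\big(2\pi\bi\,\ell\,(t-u)/T_1\big) = \frac{1}{T_1}\sum_{\ell \in \cK_1} w_\ell^{\,t}\,\overline{w_\ell^{\,u}},
\]
with $w_\ell = \exp(2\pi\bi\,\ell/T_1)$. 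In matrix form this is precisely
\[
   G^{(1)} = \frac{1}{T_1}\sum_{\ell \in \cK_1} V_\ell V_\ell^{*},
\]
a sum of $M = \#\cK_1$ rank-one Hermitian outer products, $V_\ell = (1, w_\ell, \dots, w_\ell^{T_1-1})$ being the vectors in the statement.

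Next I would invoke the Dirichlet (finite geometric sum) identity already recalled in the proof of Lemma \ref{gram-matrix}: for $\ell, \ell' \in \{0, \dots, T_1-1\}$ one has $V_\ell^{*} V_{\ell'} = \sum_{t=0}^{T_1-1}\exp\!\big(2\pi\bi(\ell'-\ell)t/T_1\big) = T_1\,\delta(\ell-\ell')$. Hence $\{T_1^{-1/2}V_\ell : \ell \in \cK_1\}$ is an orthonormal system in $\bC^{T_1}$, so the displayed expression exhibits $G^{(1)}$ as the orthogonal projector onto the $M$-dimensional subspace $\mathrm{span}\{V_\ell : \ell \in \cK_1\}$. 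This immediately gives $\mathrm{rank}(G^{(1)}) = M$; the nonzero eigenvalues are all equal to $1$ with total multiplicity $M$; and $G^{(1)} V_{\ell_0} = T_1^{-1}\sum_{\ell \in \cK_1} V_\ell\,(V_\ell^{*}V_{\ell_0}) = V_{\ell_0}$ for each $\ell_0 \in \cK_1$, so the $V_\ell$, $\ell \in \cK_1$, are exactly the principal eigenvectors, as claimed.

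There is no serious obstacle here: the argument is a short, essentially bookkeeping, calculation. The only points requiring care are keeping the $1/T_1$ normalization (inherited from the $1/\sqrt N$ in the Fourier matrix) consistent, tracking which axis is exhaustively versus partially sampled so that the surviving index in $G^{(1)}$ is the partial one, and noting that since the top eigenspace of $G^{(1)}$ is $M$-dimensional with a repeated eigenvalue, ``the $M$ principal eigenvectors'' should be read as the assertion that $\{V_\ell\}_{\ell \in \cK_1}$ is an orthogonal basis of the eigenvalue-$1$ eigenspace of $G^{(1)}$.
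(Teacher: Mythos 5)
Your proof is correct and takes essentially the same route as the paper's: you specialize the block-Gram formula (\ref{eq:blockGram}) to the case $d=2$, $\cE=\{0\}$, $\cP=\{1\}$ and use the Dirichlet sum, and your identity $G^{(1)}V_{\ell_0}=T_1^{-1}\sum_{\ell\in\cK_1}V_\ell\,(V_\ell^{*}V_{\ell_0})=V_{\ell_0}$ is exactly the paper's entrywise eigenvector verification written in matrix form. The only difference is presentational: by packaging $G^{(1)}=T_1^{-1}\sum_{\ell\in\cK_1}V_\ell V_\ell^{*}$ as the orthogonal projector onto $\mathrm{span}\{V_\ell:\ell\in\cK_1\}$ you obtain $\mathrm{rank}(G^{(1)})=M$ and the full spectrum as a byproduct, a point the paper settles only ``by inspection.''
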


\begin{proof}
We prove that for $\ell \in \cK_1$, $V_{\ell}$ is an eigenvector by verifying $\sum_{\ell=0}^{N-1} {G}^{(1)}(t,u) V_\ell(u) = \lambda_\ell V_\ell(t)$, 
in fact by showing that $\lambda_\ell=1$. Lemma \ref{gram-matrix} -- specifically (\ref{eq:blockGram}) \-- gives us that  for $k=(k_0,k_1)$ and 
$$
G_{t,u} = \left( \frac{1}{\prod_{j \in \cP} T_j}   \sum_{k \in \cK_{\cP}} \exp{ (  2\pi \bi  \cdot \sum_{j\in \cP} k_j  (t_j-u_j)/T_j ) } \right)  \cdot \prod_{j\in \cE} \delta(t_j-u_j).
$$
Because $d=2$ and $k_0$ is sampled exhaustively,  the upper left $T_1 \times T_1$  block has the form:
$$
G_{(0,t),(0,u)} = \frac{1}{T_1}   \sum_{k \in \cK_{1}} \exp{ (2\pi \bi k  (t-u)/T_1 ) }, \quad (t,u) \in \{0,\dots,T_1-1\}^2,
$$
where now $k$,$t$, and $u$ are integers.
The matrix  $G^{(1)}$ has entries
$G^{(1)}(t,u) \equiv G_{(0,t),(0,u)} $ for $0 \leq t, u < T_1$. It has rank $M = \# \cT_1$ by inspection
of the preceding display. 
\begin{eqnarray*}
\sum_{u=0}^{T_1-1} {G}^{(1)}(t,u) V_{\ell}(u)&=&
\sum_{u=0}^{T_1-1} \left( \frac{1}{T_1}   \sum_{k \in \cK_{1}} \exp{ (2\pi \bi k  (t-u)/T_1 ) }\right) \exp\left( 2\pi \bi u \ell/T_1 \right) \\
&=&\sum_{k \in \cK_{1}} \left( \frac{1}{T_1}   \exp{ (2\pi \bi k  t/T_1 ) }\right)  \left(\sum_{u=0}^{T_1-1}  \exp\left( 2\pi \bi u (k-\ell)/T_1 \right)  \right)\\
&=&\sum_{k \in \cK_{1}} \left( \frac{1}{T_1}   \exp{ (2\pi \bi t k/T_1 ) }\right)  \left( T_1\ \delta\left(k-\ell \right) \right) \\
&=& \exp{ (2\pi \bi \ell t/T_1 ) } = V_\ell(t).
\end{eqnarray*}
\end{proof}

\begin{proof}[Proof of Theorem \ref{thm:RUID}]
Consider the two convex optimization problems
$$
{\rm (P_1)} \quad \min \| x \|_{1,\bC^{N}} \quad \text{subject to} \quad Ax=y,
$$
$$
{\rm (P_2)} \quad  \min \| x \|_{1,\bC^{N}} \quad \text{subject to} \quad A^* Ax=A^* y,
$$
where $A$ is an $n\times N \ (n<N)$ matrix having $n$ nonzero singular values (i.e., $A$ has full row rank). 
Problem {\rm $(P_1)$} is equivalent to {\rm $(P_2)$} because $A^*$ has full column rank $n$; hence their solution sets match. 
By Lemma \ref{gram-matrix}, $G=A^* A$ is block-diagonal. 
By separability of $\ell_1$ minimization, we can solve the  $T_0$ block subproblems 
each of size $T_1\times T_1$ individually. 
Because blocks are identical and ${\rm rank}(G) = {\color{blue}\lfloor\delta T_1\rfloor T_1}$, 
${\rm rank}(G^{b}) = {\color{blue}\lfloor\delta T_1}\rfloor \ \text{for blocks} \ b=1,\dots,N$. 
By Lemma \ref{rank-deficiency}, we know that we can solve equivalent full-row-rank problems of 
size ${\color{blue}\lfloor\delta T_1}\rfloor \times {\color{blue}T_1}$ as long as we find the right singular vectors. 
By Lemma \ref{singular-vector} we know that right singular vectors are defined by the partial Fourier matrix. 
\end{proof}

\section{Second Proof of Theorem \ref{thm:RUID}} \label{sec:AltProofRUID}
	 \newcommand{\Vmm}{V_{c;m_0,m_1}}
	 \newcommand{\VMm}{V_{c;M,m}}
	 \newcommand{\VMM}{V_{c;M,M}}	

We begin with terminology.
For an array $x = ( x(t_0,t_1), 0 \leq t_i \leq m_i)$, we call the collection of entries $x(\cdot, t_1)$ a {\it row}
and a collection $x(t_0,\cdot)$ a {\it column}. This is consistent with our depiction in Figure 3
of the main paper.

	Let $\Vmm$ denote the $vec$ operation taking 
        arrays in $\bC^{m_0 \times m_1}$ into vectors $\bC^{m_0 \cdot m_1}$ in {\it column-major} order;
         \[
	           (\Vmm x)(i_0 \cdot m_0  + i_1) = x(i_0,i_1), \qquad 0 \leq i < m_0; 0 \leq j < m_1.
	 \]
	 Thus $(\Vmm x)(0) = x(0,0)$, $(\Vmm x)(1) = x(0,1)$, $(\Vmm x)(2) = x(0,2)$, etc. 
	 
	 In the first half of the proof we will need $\VMM$ exclusively
	 and denote this simply $V$ for short.
	 Of course $V$ is an
	 $\ell_2$ isometry  which is also an $\ell_1$ isometry:
\begin{eqnarray*}
            \| x \|_{2,\bC^{M^2}} &=& \| V(x) \|_{2,\bC^{M \times M}}, \\
            \| x \|_{1,\bC^{M^2}} &=& \| V(x) \|_{1,\bC^{M \times M}}.
\end{eqnarray*}
       
\begin{lem}
        There is an $\ell_2$ isometry $T$  from $\bC^{Mm} \mapsto \bC^{M \times m}$ so that
	\beq \label{eq:isometry}
	T \circ A \circ V  = \cF_{\aus}.
	\eeq
\end{lem}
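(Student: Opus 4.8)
The plan is to write $T$ explicitly as a composition of two transparently isometric maps and then verify the identity $T\circ A\circ V=\cF_{\aus}$ entrywise. The conceptual core is that the 2D DFT factors through two one-dimensional DFTs. Writing $g(t_0,\cdot):=\cF_1\bigl(x(t_0,\cdot)\bigr)$ for the (unitarily normalized) 1D DFT of the $t_0$-th slice of $x$, one has
\[
  \hat{x}(k_0,k_1)=\frac{1}{\sqrt{M}}\sum_{t_0=0}^{M-1} g(t_0,k_1)\,e^{2\pi\bi k_0 t_0/M}=\cF_1\!\bigl(t_0\mapsto g(t_0,k_1)\bigr)(k_0),
\]
so that $\cF_{\aus}(x)(k_0,i)=\hat{x}(k_0,k_{1,i})=\cF_1\bigl(g(\cdot,k_{1,i})\bigr)(k_0)$. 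I would record this factorization first.

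Next I would unwind $A\circ V$. By the column-major definition of $V$, the $b$-th length-$M$ block of $V(x)$ is the slice $x(b,\cdot)$; since $A$ is block-diagonal with every block equal to $A^{(1)}=\cS_{1,\cK}\cF_1$, the $b$-th length-$m$ block of $A\,V(x)$ equals $\cS_{1,\cK}\cF_1\bigl(x(b,\cdot)\bigr)=\bigl(g(b,k_{1,1}),\dots,g(b,k_{1,m})\bigr)$. Let $R:\bC^{Mm}\to\bC^{M\times m}$ be the coordinate relabeling that turns the $b$-th length-$m$ block into the $b$-th row of an $M\times m$ array, so that $R\bigl(A\,V(x)\bigr)$ has $(b,i)$-entry $g(b,k_{1,i})$; and let $\Phi:\bC^{M\times m}\to\bC^{M\times m}$ apply $\cF_1$ to each of the $m$ columns. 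Comparing with the display above, $\Phi\bigl(R(A\,V(x))\bigr)$ has $(k_0,i)$-entry $\cF_1\bigl(g(\cdot,k_{1,i})\bigr)(k_0)=\cF_{\aus}(x)(k_0,i)$, so $T:=\Phi\circ R$ does the job.

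Finally, $T$ is an $\ell_2$ isometry: $R$ merely permutes coordinates (hence preserves every $\ell_p$ norm), and $\Phi$ is, in suitable coordinates, the Kronecker product of the unitary operator $\cF_1$ with an identity matrix, hence unitary. Being an $\ell_2$ isometry between complex spaces of equal dimension $Mm$, $T$ is automatically bijective; this bijectivity is exactly what is used afterwards in the proof of Theorem~\ref{thm:RUID}, where the constraint $\cF_{\aus}(x)=\cF_{\aus}(x_0)$ is transported across $T$, while all the $\ell_1$ bookkeeping is carried separately by $V$, which \emph{is} an $\ell_1$ isometry. The one step requiring genuine care is pinning down $R$ so that it is consistent with both the column-major convention defining $V$ and the order in which the identical blocks of $A$ appear; once the indices are fixed the verification is only the rearrangement of a double Fourier sum together with the Dirichlet orthogonality identity already used in Lemma~\ref{gram-matrix}, so I do not anticipate a real obstacle beyond this bookkeeping.
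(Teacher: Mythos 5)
Your proof is correct and takes essentially the same route as the paper's: your $T=\Phi\circ R$ is exactly the paper's $T=\cF_{r;M,m}\circ V_{c;M,m}^{-1}$, resting on the same two ingredients (the factorization of the 2D DFT into iterated 1D DFTs and the fact that selection in the randomly sampled index commutes with the 1D DFT along the exhaustively sampled one), with the paper merely carrying out the verification via Kronecker-product operator identities instead of your entrywise computation. One small slip: the closing appeal to the Dirichlet orthogonality identity is unnecessary---your own verification uses only the rearrangement of the double Fourier sum, and orthogonality plays no role in this lemma.
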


   \begin{proof}
       	
	We explicitly construct the isomorphism $T$ in (\ref{eq:isometry}). Let $\cF_{c} \equiv \cF_{c;m_0,m_1}$ denote the
	operator on $m_0 \times m_1$ arrays that applies the 1D discrete Fourier transform to each {\it column} separately,
	returning an $m_0 \times m_1$ array. Let $\cF_{r} \equiv \cF_{r;m_0,m_1}$ denote the
	operator on $m_0 \times m_1$ arrays that applies the 1D discrete Fourier transform to each {\it row} separately,
	returning an $m_0 \times m_1$ array.
	
	It is well-known that the $2D$ Discrete Fourier transform on $M \times M$ arrays has the factorization
	\[
	     \cF_2 = \cF_r \cF_c  =  \cF_{r; M,M} \cF_{c; M,M},
	\]
	the $1D$ Fourier transform of columns followed by the $1D$ Fourier transform of rows.
	 Let $\cK$ denote a collection of $m$ row indices $0 \leq k_i < M$
	 and let $\cS_{r,\cK}$ denote the operator from $M \times M$ arrays to $M \times m$ arrays that
	 simply selects those rows with indices in $\cK$. We observe the identity 
	\beq \label{eq:switchA}
	         \cS_{r,\cK} \cF_{r; M,M}  = \cF_{r; M,m}   \cS_{r,\cK}.
	 \eeq
	 In words, we can either first $1D$ Fourier transform each row individually, and then select certain rows,
	 or else we can select those same rows and then Fourier transform them; either way we get the same
	 outcome. Note that the two Fourier transform operators in this relation have different domains;
	 one operates on $M \times m$ arrays and one operates on $M \times M$ arrays.
	 
	 Our anisotropic undersampling operator has been defined by:
	 \[
	        \cF_{\aus} = \cS_{r,\cK} \cF_2.
	 \]
	 Based on the previous paragraph, we can equivalently write
	 \beq \label{eq:switchB}
	         \cF_{\aus} = \cF_r \cS_{r,\cK} \cF_c = \cF_{r; M,m}  \cS_{r,\cK} \cF_{c; M,M}.
	 \eeq
	 
	 Let now $\VMm$ be a $vec$ operator that maps from $M \times m$ arrays to $M \cdot m$ vectors,
	again by vectorizing in {\it column-major} order; namely, 
	 \[
	           \VMm(y)(i_0 M + i_1) = y(i_0,i_1) \qquad 0 \leq i_0, i_1  < M.
	 \]
	 Thus $(\VMm y)(0) = y(0,0)$, $(\VMm y)(1) = y(0,1)$, $(\VMm y)(2) = y(0,2)$, etc. Then of course $\VMm$ is an
	 isometry between $\bC^{M \times M}$ and $\bC^{M m}$, and so one-one.
	 
	 From now on the operator $\VMM$ previously denoted $V$, will always be spelled out as $\VMM$,
	 to keep domains and ranges unambigious.
	 
Now define  $T : \bC^{M m} \mapsto \bC^{M \times m}$ by
\beq \label{eq:Tdef}
	 T   = \cF_{r; M,m} \VMm^{-1}.
\eeq
	 In words, $T$ builds an $M \times m$ array and then applies the $1D$ Fourier transform
	 to each resulting row.
	 We now make the key observation:
\beq \label{eq:Aequivdef}
	       A =  \VMm \cS_{r,\cK} \cF_c \VMM^{-1} 
\eeq 

To check this, note first that the domain is indeed $\bC^{M^2}$ and the range is indeed $\bC^{Mm}$.
We  previously defined $A$ as a block diagonal operator $ I_{M} \otimes A^{(1)}$,
where $A^{(1)} : \bC^M \mapsto \bC^m$ is the pipeline $ A^{(1)} = \cS_{1,\cK}  \cF_{1}$ of two operators:
$\cF_1$ , a $1D$ Fourier transform of $M$-vectors followed by $\cS_{1,\cK}$ a selection
of certain elements out of those $M$ vectors. Checking definitions we see that
\[
      \cS_{r,\cK}  = \VMm^{-1}  (I_M \otimes \cS_{1,\cK} ) \VMM
\]
and
\[
   \cF_{c;M,M} = \VMm^{-1}  (I_M \otimes \cF_{1} ) \VMM.
\]
Hence
\begin{eqnarray*}
\VMm \cS_{r,\cK} \cF_c \VMM^{-1} &=&  \VMm  \left( \VMm^{-1}  (I_M \otimes \cS_{1,\cK} ) \VMM \right) \left(    \VMm^{-1}  (I_M \otimes \cF_{1} ) \VMM \right) \VMM^{-1} \\
 &=&    (I_M \otimes \cS_{1,\cK} )   (I_M \otimes \cF_{1} ) \\
 &=&    I_M \otimes (\cS_{1,\cK}  \cF_{1}) \\
 &=&   I_M \otimes  A^{(1)} \\
 &=&   A,
\end{eqnarray*}
which proves (\ref{eq:Aequivdef}).

We now verify (\ref{eq:isometry})
	\begin{eqnarray*}
	\cF_{\aus} &=& \cS_{r;\cK} \cF_2 \\
	               &=& \cS_{r;\cK} \cF_{r;M,M} \cF_{c;M,M} \\
	               &=& \cF_{r;M,m}  \cS_{r;\cK} \cF_{c;M,M}  \qquad \mbox{by } (\ref{eq:switchA})-(\ref{eq:switchB})\\
	               &=& \left( \cF_{r;M,m}   \VMm^{-1} \right) \left(  \VMm \cS_{r;\cK} \cF_{c;M,M} \VMM^{-1} \right)  \VMM  \\
	               &=& T A \VMM   \qquad \mbox{by } (\ref{eq:Tdef})-(\ref{eq:Aequivdef})
	\end{eqnarray*}
	where, as remarked earlier, both $T$ and $V$ are both isometries.
\end{proof}

We now use the representation $\cF_{\aus} = T AV$ to prove our main result.

\begin{proof} (of Theorem \ref{thm:RUID})
	
	 Fix $x_0$, generating undersampled measurements $\xaus = \cF_{\aus}(x_0)$.
        Consider the instance of $(P_{\aus})$ based on measurements vector
        $\xaus$. Let $x_1$ denote some specific solution of $(P_{\aus})$ .
        As a solution, it must obey the feasibility condition
        \[
           \cF_{\aus}(x_1) = \cF_{\aus}(x_0).
        \]
        
        Let $\bx_0 = V(x_0)$ and $\by_0 = A  \bx_0$ and 
        consider $\bx_1 = V(x_1)$
        as a candidate solution for $(P_{1,\bC})$ with data $\by_0$.
         We need to check that $\bx_1$ is feasible for
        $(P_{1,\bC})$  i.e. that $\by_0 = A \bx_1$.
       \begin{eqnarray*}
           \by_0                           &=& A \bx_0 \\
                          &=&  A V (x_0) \\
                        &=&  T^{-1} \cF_{\aus} ( x_0)  \\ 
                        &= & T^{-1} \cF_{\aus} ( x_1) \\
                         &=&  A V(x_1) \\
                        &=& A \bx_1 .
        \end{eqnarray*}
	So $\bx_1$ is indeed feasible for $(P_{1,\bC})$. It follows that
	\[
	  val(P_{1,\bC}) \leq  \| \bx_1 \|_{1,\bC^{M^2}} = \| x_1 \|_{1,\bC^{M \times M}} = val(P_{\aus}).
	\]
	Arguing in the other direction, let $\bx_1$ denote some solution of $(P_{1,\bC})$.
	We consider  $x_1 \equiv V^{-1}(\bx_1)$ as a candidate solution of  $(P_{\aus})$.
	From the feasibility of $\bx_1$ for $(P_{1,\bC})$ we have $A \bx_0 = A \bx_1  = \by_1$, say.
	We check  the feasibilty $\cF_{\aus}(x_1) = \cF_{\aus}(x_0)$:
      \begin{eqnarray*}
           \cF_{\aus}(x_0)    &=&   T A V (x_0) \\
                        &=&  T  \by_0 \\ 
                        &= & T \by_1 \\ 
                         &= & T A \bx_1 \\ 
                         &=&  T A V \cdot  V^{-1}(\bx_1)\\
                         &=&  T A V  x_1\\
                         &=& \cF_{\aus}(x_1) .
        \end{eqnarray*}
 We conclude that 	
	\[
  val(P_{1,\bC}) \geq  val(P_{\aus}).	
	\]
	Hence, $  val(P_{1,\bC}) =  val(P_{\aus})$.
	So the two problems have identical optimal values 
	and their solution sets are isomorphic
	under the vec mapping $V( \cdot )$.
\end{proof}

\section{Proof of Theorem \ref{thm:MultiD-NUS}}\label{app:NUS}

For $\bX = \bH_d$, the arguments of Appendix \ref{app:RUID} can all be redone, step-by-step, replacing 
the field $\bC$ by the hypercomplex algebra $\bH_d$. 
The notation and basic pattern of argument are given in \cite{Monajemi2016ACHA} and we won't repeat them.
The basic idea is as follows.
Let $n' = n/2^d$ and $N' = N/2^d$.
The matrix $A$ belongs to $\bH_d^{n' \times N'}$, the matrix $G = A^\#A$ belongs to $\bH_d^{N' \times N'}$ (here $\#$ denotes hypercomplex conjugation; again, see \cite{Monajemi2016ACHA} for details).  
The hypercomplex entries $x(i)$ can be viewed as $2^d$ dimensional
real vectors. The $\ell_1$ norm can then be written:
\[
  \| \bx \|_{1,H^d} = \sum_{i=1}^{N'} \| x(i) \|_{\ell_1^{2^d}(\bR)} .
\]
The arguments of the preceding section go through without essential changes; the Dirichlet sum
has this direct analog: 

\begin{equation*} \label{eq:exponsum}
     \sum_{t=0}^{T-1} \exp_{\bH_d}\left( \frac{2 \pi t}{T} u \bi \right)  = \left\{ \begin{array}{ll} T  & u =0  \\
                 0 & u \ne 0\end{array} \right . ,
\end{equation*}
where $u \in \{0,\dots, T-1\}$, and $\exp_{\bH_d}$ denotes the exponential function defined by the usual power series
within the associative algebra $\bH_d$. { For other choices of $\bX$, the theorem can be proved by realizing that the hypercomplex algebra $\bH_d$ is isomorphic to a subalgebra of the algebra of $2^d\times2^d$ matrices with real entries. The reader is referred to \cite{Monajemi2016ACHA,Monajemi_thesis_2016} for the details.}

\section{Comparison to exponential bounds by Donoho and Tanner}\label{app:exp-bound}
{
Donoho and Tanner \cite{DoTa10} give exponential bounds for the 
finite-$N$ probability of successful reconstruction for the coefficient fields
$\bR_+$and $\bR$. They consider the following condition on $\epsilon$ at certain $\delta$, 
$$
0 \le \epsilon \le {\easy(\delta)}{(1-R_\tau)}
$$
where $R_\tau$ is a certain multiplicative term having a real parameter $\tau \in (0,1)$
which, by their
bounds, implies
$$
P \{ {\bx_1 = \bx_0 }\} \geq 1-\tau.
$$
Taking $\tau = 1/M$ 
and $m \sim \delta M$ we get:


$$
\frac{ {\easy(\delta)} - \esb(m,M)  } {\easy(\delta)} \le R_{1/M}
$$
where 
$$
R_{1/M} \simeq  c \cdot  \delta^{-1/2} \gamma_M
$$
Figure \ref{fig-exponential-bounds} depicts the lower bounds on $\esb(m,M)$ based on these bounds. In the case of real signals (cross-polytope), the formula obtained from the exponential bounds agrees, up to a proportionality constant, to our formula for $\eta$ following 
this article's (\ref{eq:PTforTensor}).
}
\begin{figure}[h]
\centering
\begin{tabular}{cc}
\includegraphics[width=2.5in,angle=0]
{} &
\includegraphics[width=2.5in,angle=0]{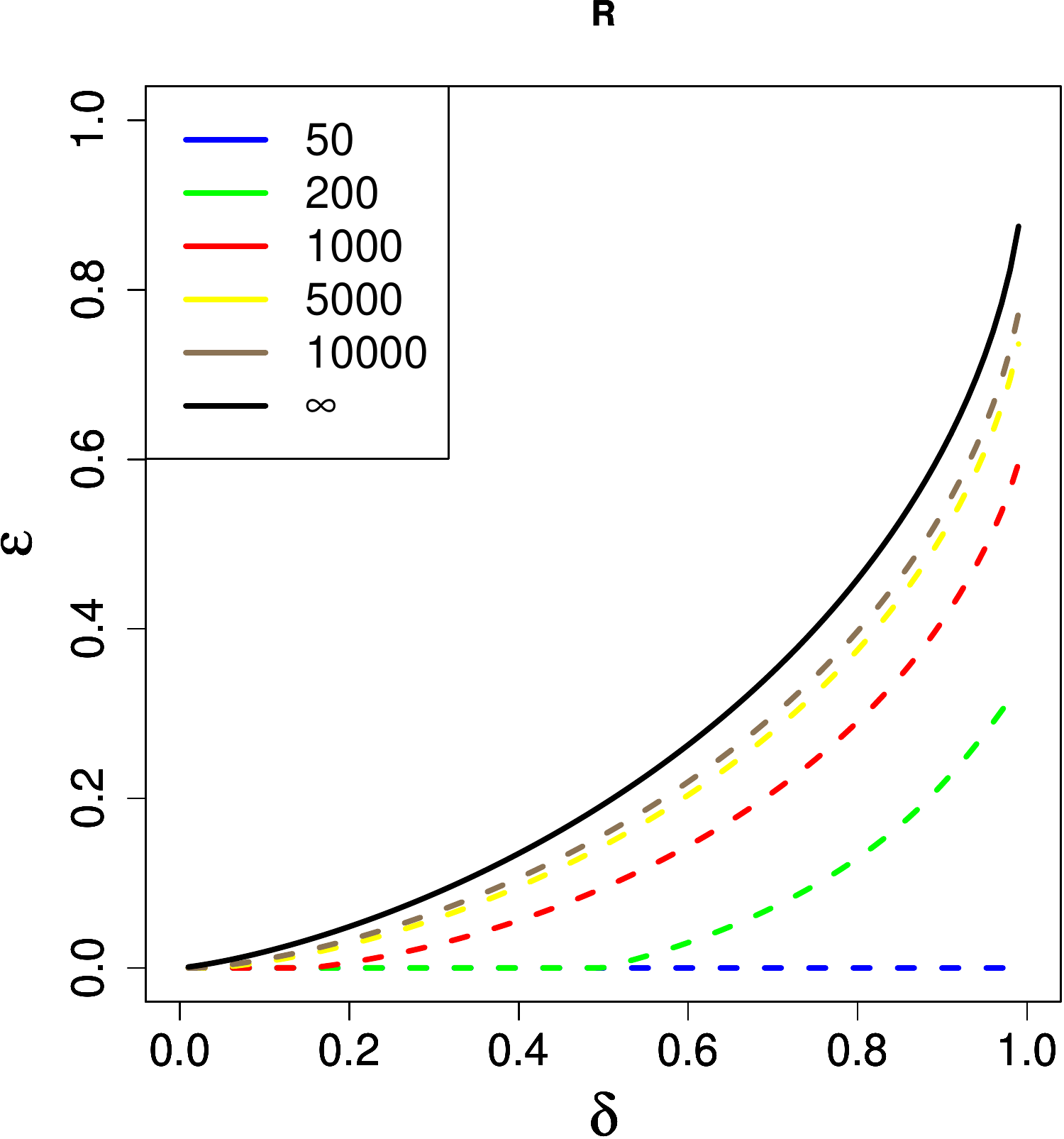}  \\
(a)  $\bR_+$  & (b)  $\bR$
\end{tabular}
\caption{Lower bound on $\esb(m,M)$ based on Donoho-Tanner exponential bounds. (left) Simplex ($\bR_+$), and (right) Cross-polytope ($\bR$)}
\label{fig-exponential-bounds}
\end{figure}

{
\bibliographystyle{plain}
\bibliography{refs,NMR-refs}
}

\end{document}